\numberwithin{equation}{section}
\newcommand{\ap}[1]{\left\langle#1\right\rangle}
\newcommand{\norm}[1]{\left\Vert#1\right\Vert}
\def\R{\mathbb{R}}
\def\Q{\mathcal{Q}}
\def\H{\mathcal{H} }
\def\W{W^L}
\def\L{\mathcal{L}}
\def \D {\mathscr{D}}
\def \leq {\leqslant}
\def \geq {\geqslant}
\def \d {\,\mathrm{d} }
\def \dt {\,\mathrm{d}t }
\def \ds {\,\mathrm{d}s }
\def \dy {\,\mathrm{d}y }
\def \du {\,\mathrm{d}u }
\def \dt {\,\mathrm{d}t }
\def \with {\ \big\vert \ }
\def \fa {\quad \text{ for all }}
\def \vartheta {\varpi}
\def \zs {z_{\mathrm{s}}}
\newcommand{\step}[1]{\medskip \noindent \textit{#1}}
\DeclareMathOperator{\sign}{sign}
\newtheorem{thm}{Theorem}[section]
\newtheorem{cor}[thm]{Corollary}
\newtheorem{lem}[thm]{Lemma}
\newtheorem{prp}[thm]{Proposition}
\newtheorem{hyp}[thm]{Hypothesis}
\theoremstyle{definition}
\newtheorem{defi}[thm]{Definition}
\theoremstyle{remark}
\newtheorem{rem}[thm]{Remark}
\theoremstyle{example}
\newtheorem{thmA}{Theorem A.}
\newtheorem{remA}[thmA]{Remark A.}
\newtheorem{thmB}{Theorem B.}
\newtheorem{lemB}[thmB]{Lemma B.}
\newtheorem{remB}[thmB]{Remark B.}
\title[Exponential convergence to equilibrium]{Exponential convergence
  to equilibrium for subcritical solutions of the Becker-D\"oring
  equations}
\date{January 2013}
\author{Jos\'e A. Ca\~nizo}
\address{\textbf{Jos\'{e} A.  Ca\~{n}izo}, School of Mathematics,
  University of Birmingham, Edgbaston, Birmingham B15 2TT, UK}
\email{j.a.canizo@bham.ac.uk} \thanks{JAC acknowledges support from
  the Marie-Curie CIG project KineticCF, the ERC grant MATKIT and the
  project MTM2011-27739-C04-02 from the Spanish \emph{Ministerio de
    Econom\'ia y Competitividad}. Part of this work was carried out
  during a visit of JAC to the University of Granada with the GENIL
  program.}
\author{Bertrand Lods}
\address{\textbf{Bertrand Lods}, Universit\`{a} degli Studi di Torino
  \& Collegio Carlo Alberto, Department of Statistics and Economics,
  Corso Unione Sovietica, 218/bis, 10134 Torino, Italy.}\thanks{BL
  acknowledges support from the \emph{de Castro Statistics
    Initiative}, Collegio Carlo Alberto, Moncalieri, Italy. This work
  started during a visit of BL to the Departamento de Matem\'{a}tica
  Aplicada, Universidad de Granada. Both authors would like to express
  their gratitude to M. J. C\'aceres for her kind hospitality.}
\email{lodsbe@gmail.com}
\begin{document}

\maketitle

\begin{abstract}
  We prove that any subcritical solution to the Becker-D\"{o}ring
  equations converges exponentially fast to the unique steady state
  with same mass. Our convergence result is quantitative and we show
  that the rate of exponential decay is governed by the spectral gap
  for the linearized equation, for which several bounds are
  provided. This improves the known convergence result by Jabin \&
  Niethammer \cite{JN03}. Our approach is based on a careful spectral
  analysis of the linearized Becker-D\"{o}ring equation (which is new
  to our knowledge) in both a Hilbert setting and in certain weighted
  $\ell^1$ spaces. This spectral analysis is then combined with
  uniform exponential moment bounds of solutions in order to obtain a
  convergence result for the nonlinear equation.\\

\noindent {\small \textsc{Keywords:}  Becker-D\"{o}ring equation, exponential trend to equilibrium, spectral gap, decay of semigroups.}
\end{abstract}

\tableofcontents

\section{Introduction}
\label{sec:intro}

\subsection{The Becker-D\"oring equations}

The Becker-D\"oring equations are a model for the kinetics of
first-order phase transitions, applicable to a wide variety of
phenomena such as crystallization, vapor condensation, aggregation of
lipids or phase separation in alloys. They give the time evolution of
the size distribution of clusters of a certain substance through the
following infinite system of ordinary differential equations:
\begin{subequations}
  \label{eq:BD}
  \begin{align}
    \label{eq:BD1}
    \frac{\d}{\d t} c_i(t) &= W_{i-1}(t) - W_{i}(t), \qquad i \geq 2,
    \\
    \label{eq:BD2}
    \frac{\d}{\d t} c_1(t) &= - W_1(t) - \sum_{k=1}^\infty W_{k}(t),
  \end{align}
\end{subequations}
where
\begin{equation}
  \label{eq:def-Wi}
  W_{i}(t) := a_{i}\, c_1(t) c_i(t) - b_{i+1}\, c_{i+1}(t)
  \qquad i \geq 1.
\end{equation}
Here the unknowns are the real functions $c_i = c_i(t)$ for $i \geq 1$
an integer, and represent the density of clusters of size $i$ at time
$t \geq 0$ (this is, clusters composed of $i$ individual particles).
They give the size distribution of clusters of the phase which is
assumed to have a small total concentration inside a large ambient
phase --- be it clusters of crystals, lipids, or droplets of water
forming in vapor. The numbers $a_{i}, b_{i}$ (for $i \geq 1$) are the
\emph{coagulation} and \emph{fragmentation coefficients},
respectively, and we always assume them to be strictly positive, i.e.,
\begin{equation}
  \label{eq:hyp-positivity}
  a_i > 0 \quad \text{ for integer $i \geq 1$},
  \qquad b_i > 0 \quad \text{ for integer $i \geq 2$}.
\end{equation}
We set $b_1 = 0$ for convenience in writing some of the equations. If
we represent symbolically by $\{i\}$ the species of clusters of size
$i$, then \eqref{eq:BD} is based on the assumption that the only
(relevant) chemical reactions that take place are
\begin{equation*}
  \{i\} + \{1\} \rightleftharpoons \{i+1\}
\end{equation*}
for integer $i \geq 1$. These reactions all take place both ways; the
rate at which $\{i\}+\{1\} \rightarrow \{i+1\}$ occurs is $a_i c_1
c_i$, while the rate at which $\{i+1\} \rightarrow \{i\} + \{1\}$
occurs is $b_{i+1} c_{i+1}$. These rates are proportional to the
concentrations of the species appearing in the left hand side, in
agreement with the law of mass action from chemistry. Note that $b_i$
appears in the system \eqref{eq:BD} only for $i \geq 2$, since single
particles cannot break any further. The quantity $W_i(t)$ defined in
\eqref{eq:def-Wi} thus represents the \emph{net rate} of the reaction
$\{i\} + \{1\} \rightleftharpoons \{i+1\}$. The sum
\begin{equation}
  \label{eq:def-mass}
  \varrho :=
  \sum_{i=1}^\infty i c_i(0)
  = \sum_{i=1}^\infty i c_i(t)
  \quad \text{ for all } t \geq 0.
\end{equation}
is usually called the \emph{density} or \emph{mass} of the solution,
and is a conserved quantity of the evolution.

The system \eqref{eq:BD} was originally proposed in \cite{BD35},
though with a variation: the monomer density $c_1$ was assumed
constant and hence equation \eqref{eq:BD2} did not appear. The present
version, where the total density is conserved, was first discussed in
\cite{B77} and \cite{PL79}, and is a widely used model in classical
nucleation theory. We refer to these works, as well as the more recent
reviews \cite{Schmelzer05,Oxtoby} for a background on the physics and
applications of the Becker-D\"oring equations.

We define the \emph{detailed balance coefficients} $Q_i$ recursively
by
\begin{equation}
  \label{eq:Qi}
  Q_1 = 1,
  \quad
  Q_{i+1} = \frac{a_i}{b_{i+1}} Q_i
  \quad \text{ for } i \geq 1.
\end{equation}
An \emph{equilibrium} of equation \eqref{eq:BD} is a constant-in-time
solution (with finite mass). For a given $z \geq 0$, the sequence $c_i
:= Q_i z^i$ for $i \geq 1$ is formally an equilibrium of
\eqref{eq:BD}, since all of the $W_i$ vanish. However, some of these
sequences do not have finite mass (and hence are not equilibria). The
largest possible number $\zs \geq 0$ (possibly $+\infty$) for which
$\sum_i i Q_i z^i < +\infty$ for all $0 \leq z < \zs$ is called the
\emph{critical monomer density} (i.e., $\zs$ is the radius of
convergence of the power series with coefficients $i Q_i$). The
quantity $z_\mathrm{s}$ is also called monomer saturation density,
hence the subscript. The \emph{critical mass} (or, again, saturation
mass) is then defined by
\begin{equation}
  \label{eq:critical-mass}
  \varrho_\mathrm{s} := \sum_{i=1}^\infty i Q_i z_\mathrm{s}^i
  \in [0,+\infty].
\end{equation}
We emphasize that both $\varrho_\mathrm{s}$ and $z_\mathrm{s}$ are
completely determined by the coefficients $a_i$, $b_i$. It is clear
from this that the sequences $\{Q_i z^i\}$ are equilibria for $z <
z_\mathrm{s}$, and $\{Q_i z_\mathrm{s}^i\}$ is also an equilibrium if
$\varrho_\mathrm{s} < +\infty$. These are in fact the only finite-mass
equilibria \cite{BCP86}. Since $z \mapsto \sum_{i=1}^\infty i Q_i z^i$
is continuous and strictly increasing for $z < z_\mathrm{s}$ (and up
to $z = z_\mathrm{s}$ if $\varrho_\mathrm{s} < +\infty$), one sees
that for any finite mass $\varrho \leq \varrho_\mathrm{s}$ there is a
unique equilibrium with mass $\varrho$. We call a solution
\emph{subcritical} when it has mass $\varrho < \varrho_\mathrm{s}$,
\emph{critical} when its mass is exactly $\varrho_\mathrm{s}$, and
\emph{supercritical} when it has mass $\varrho > \varrho_\mathrm{s}$
(the two latter cases only make sense provided $\varrho_\mathrm{s} <
+\infty$).

From the above discussion one sees that for a subcritical or critical
solution there exists an equilibrium with the same mass as the
solution; this is, there exists $z \leq \zs$ (which must be unique)
for which
\begin{equation}
  \label{eq:z-def}
  \sum_{i=1}^\infty i Q_i z^i = \sum_{i=1}^\infty i c_i(t)
  \left( = \sum_{i=1}^\infty i c_i(0) = \varrho\right)
  \quad \text{ for } t \geq 0.
\end{equation}
When talking about a subcritical or critical solution we will often
denote by $z$ the unique number satisfying \eqref{eq:z-def}, and refer
to it as the \emph{equilibrium monomer density}. On the other hand,
for supercritical solutions there is no equilibrium with the same mass
as the solution.

The critical density $\varrho_\mathrm{s}$ marks a difference in the
behavior of solutions: above the critical density a phase transition
phenomenon takes place, reflected in the fact that the excess density
$\varrho-\varrho_\mathrm{s}$ is concentrated in larger and larger clusters
as time passes. On the other hand, below or at the critical density a
stationary state of the same mass as the solution is eventually
reached. Since we are concerned with the study of subcritical
solutions we will always assume that
\begin{equation}
  \label{eq:critical-density-positive}
  z_\mathrm{s} > 0
  \qquad
  \text{(equivalently, $\varrho_\mathrm{s} > 0$).}
\end{equation}

A fundamental quantity related to \eqref{eq:BD} is the \emph{free
  energy} $H(c)$, defined (at least formally) for any sequence $c =
(c_i)_{i \geq 1}$ by
\begin{equation}
  \label{eq:def-V}
  H(c) := \sum_{i=1}^\infty c_i \left( \log \frac{c_i}{Q_i} - 1 \right),
\end{equation}
which decreases along solutions of \eqref{eq:BD} (see \cite{BCP86}):
for a (strictly positive, suitably decaying for large $i$) solution $c
= c(t) = (c_i(t))_{i \geq 1}$ of \eqref{eq:BD} we have
\begin{equation}
  \label{eq:H-thm-BD}
  \frac{\d}{\dt} H(c(t))
  = - D(c(t))
  := -\sum_{i=1}^\infty a_i Q_i
  \left(\frac{c_1 c_i}{Q_i} - \frac{c_{i+1}}{Q_{i+1}} \right)
  \left(\log \frac{c_1 c_i}{Q_i}
    - \log \frac{c_{i+1}}{Q_{i+1}} \right)
  \leq 0
\end{equation}
for $t \geq 0$. Since mass is conserved, if one fixes $z \leq z_s$
then the same identity is true of the functional $F_z = F_z(c)$ given
by
\begin{equation}
  \label{eq:def-VFz}
  F_z(c) := \sum_{i=1}^\infty c_i \left( \log \frac{c_i}{z^i Q_i} - 1
  \right) + \sum_{i=1}^\infty z^i Q_i
  = H(c) - \log z \sum_{i=1}^\infty i c_i
  + \sum_{i=1}^\infty z^i Q_i,
\end{equation}
this is,
\begin{equation}
  \label{eq:H-thm-BD-Fz}
  \frac{\d}{\dt} F_z(c(t))
  = - D(c(t))
  \qquad \text{ for } t \geq 0.
\end{equation}
It is readily checked that $F_z(z^i Q_i) = 0$, i.e., $F_z$ vanishes at
the equilibrium $(z^i Q_i)_{i \geq 1}$.

\subsection{Typical coefficients}

Remarkable model coefficients appearing in
the theory of density-conserving phase transitions (see
\cite{Penrose89,N08}) are given by
\begin{equation}
  \label{eq:PT-coefficients}
  a_i=i^\alpha,
  \qquad
  b_i = a_i \left(z_\mathrm{s}+\dfrac{q}{i^{1-\mu}}\right)
  \fa i \geq 1,
\end{equation}
for some $0 < \alpha \leq 1$, $z_\mathrm{s} >0$, $q >0$ and $0 <
\mu < 1$. These coefficients may be derived from simple assumptions
on the mechanism of the reactions taking place; we take particular
values from \cite{N08}:
\begin{equation}
  \label{eq:example-values}
  \begin{split}
    \alpha = 1/3, \quad \mu = 2/3 &\qquad \text{(diffusion-limited
      kinetics in 3-D),}
    \\
    \alpha = 0, \quad \mu = 1/2 &\qquad \text{(diffusion-limited
      kinetics in 2-D),}
    \\
    \alpha = 2/3, \quad \mu = 2/3 &\qquad
    \text{(interface-reaction-limited kinetics in 3-D),}
    \\
    \alpha = 1/2, \quad \mu = 1/2 &\qquad
    \text{(interface-reaction-limited kinetics in 2-D).}
  \end{split}
\end{equation}
One obtains from \eqref{eq:PT-coefficients} and
\eqref{eq:Qi} that in this case
\begin{equation*}
  Q_i
  = \dfrac{a_1 a_2 \ldots a_{i-1}}{b_2 b_3 \ldots b_i}
  = i^{-\alpha } \prod_{j=2}^{i}
  \left( z_\mathrm{s} + \frac{q}{j^{1-\mu}} \right)^{-1}
  \fa \: i \geq 1,
\end{equation*}
where the product is understood to be equal to $1$ for $i=1$ (so $Q_1
= 1$). One can deduce from this that the critical monomer density is
indeed the quantity $z_\mathrm{s}$ appearing in
\eqref{eq:PT-coefficients}, and that $\varrho_\mathrm{s}$ is a finite
positive quantity.

Another kind of reasoning that leads to a similar set of coefficients
is the following: while one may determine the coagulation coefficients
through an understanding of the aggregation mechanism and estimate
that
\begin{equation}
  \label{eq:CF-coefficients1}
  a_i = i^\alpha
\end{equation}
for some $0 < \alpha \leq 1$ as above, the equilibrium state of a
system may be obtained from statistical mechanics considerations: it
should be
\begin{equation*}
  c_i = c_1^i \exp\left(-\epsilon_i\right),
\end{equation*}
where $\epsilon_i$ is the binding energy of a cluster of size $i$ (the
energy required to assemble a cluster of size $i$ from $i$ monomers,
in units of Boltzmann's constant times the temperature). Depending on
the kind of aggregates considered, this binding energy can be
estimated (at least for large $i$) as
\begin{equation*}
  \epsilon_i = -\beta (i-1) + \sigma (i-1)^{\mu},
\end{equation*}
where $\beta$ is the energy released when adding one particle to a
cluster, and $\sigma$ is related to the
surface tension of the aggregates. The values of $\mu$ and $\alpha$
for various situations are still those in
\eqref{eq:example-values}. From this we may deduce that
\begin{equation}
  \label{eq:CF-Qi}
  Q_i = \exp\left(\beta(i-1) - \sigma (i-1)^{\mu}\right)
  = z_\mathrm{s}^{1-i} \exp\left(-\sigma(i-1)^\mu\right),
\end{equation}
where we define $z_\mathrm{s} := \exp\left(-\beta\right)$. By \eqref{eq:Qi} and
\eqref{eq:CF-coefficients1} we finally have
\begin{equation}
  \label{eq:CF-coefficients}
  a_i=i^\alpha,
  \qquad
  b_i = \zs (i-1)^\alpha
  \exp\big(\sigma i^\mu - \sigma (i-1)^\mu\big)
  \fa i \geq 1,
\end{equation}
for some $0 \leq \alpha \leq 1$, $\zs > 0$ and $0 < \mu <
1$. In this case $z_\mathrm{s}$ is still consistent with our
definition of the critical monomer density. Observe that for large $i$
we have $i^\mu - (i-1)^\mu \sim \mu i^{\mu-1}$, so the fragmentation
coefficients become roughly
\begin{equation}\label{eq:CF-equivbi}
  b_i \sim \zs\, a_i
  \exp\big(\sigma \mu i^{\mu-1} \big)
  \sim
  a_i
  \left( z_\mathrm{s} + \frac{z_\mathrm{s} \sigma \mu}{i^{1-\mu}} \right),
\end{equation}
which is like \eqref{eq:PT-coefficients} with $q = z_\mathrm{s} \sigma
\mu$.

Both example coefficients \eqref{eq:CF-coefficients} and
\eqref{eq:PT-coefficients} are often used in the literature, so we
always state explicitly the consequences of our results for them.

\subsection{Main results}

The mathematical theory of the Becker-Döring equations \eqref{eq:BD}
has been developed in a number of papers since the first rigorous
works on the topic \cite{BCP86,BC88}. Existence and uniqueness of a
solution for all times is ensured \cite[Theorems 2.2 and 3.6]{BCP86}
provided that
\begin{equation}
  \label{eq:hyp-existence}
  a_i = O(i),
  \quad
  b_i = O(i),
  \qquad
  \sum_{i=1}^\infty i^2 c_i(0) < +\infty.
\end{equation}
As expected, the unique solution conserves mass (this is,
\eqref{eq:def-mass} holds rigorously). Weaker, or different,
conditions may be imposed for this to hold, but
\eqref{eq:hyp-existence} will be enough for our purposes in the
present paper. Since we use some results from \cite{JN03}, we often
also require the following:
\begin{subequations}
  \label{eq:hyp-JN}
  \begin{gather}
    \label{eq:ai-bounded-below}
    \inf_{k \geq 1} a_k =: \underline{a} > 0,
    \\
    \label{eq:JN-zs-finite}
    0 < z_\mathrm{s} < +\infty,
    \\
    \label{eq:Qi-limit}
    \lim_{i \to +\infty} \frac{Q_{i+1}}{Q_{i}} = \frac{1}{\zs},
    \\
    \label{eq:ai-limit}
    \lim_{i \to +\infty} \dfrac{a_{i+1}}{a_i} = 1.
  \end{gather}
\end{subequations}
Regarding \eqref{eq:Qi-limit}, we recall that $\zs$ is the radius of
convergence of the power series with coefficients $i \Q_i$; hence, if
the limit of $Q_{i} / Q_{i+1}$ exists, it can only be $\zs$. We notice
that the conditions in \cite{JN03} are slightly different from the
ones above. However, the differences are not essential and all
arguments can be carried out with the ones here, which are more
convenient and are satisfied by the example coefficients
\eqref{eq:PT-coefficients} and \eqref{eq:CF-coefficients}. Our study
of the spectral properties of the linearized operator, and in
particular the proof that it is self-adjoint in Proposition
\ref{prp:L-decomposition}, seems to require the above limiting
behavior of the ratios $Q_{i+1}/Q_i$ and $a_{i+1}/a_i$ (compare to the
conditions in \cite{JN03}). It may be possible to modify our proofs to
accommodate slightly more general conditions, but we avoid this in
order to simplify the main argument.

The relationship between the long-time behavior of
supercritical solutions and late-stage coarsening theories is
especially interesting, and has been studied in
\cite{Pen97,Vel98,CGPV02,N08}; in particular, in \cite{N08} it was
rigorously shown that a suitable rescaling of supercritical solutions
approximates a solution of the Lifshitz-Slyozov equations \cite{LS61}.
Asymptotic approximations of solutions have been developed in
\cite{NCB02,citeulike:1069570,citeulike:2862238,citeulike:4140416}.

For subcritical solutions (the regime which we study in the present
paper) it was proved in \cite{BCP86,BC88} that a stationary state with
the same mass as the solution is approached as time passes. The rate
of this convergence was studied in \cite{JN03} and was found to be at
least like $\exp(-C t^{1/3})$ for some constant $C$. Our purpose here
is to improve this by showing that in fact the speed of convergence is
exponential (this is, like $\exp(-\lambda t)$ for some $\lambda >
0$). We give upper and lower bounds of the constant $\lambda$, and
study its behavior as the density $\varrho$ of the solution approaches
(from below) the critical density $\varrho_\mathrm{s}$. An important
quantity which appears in our analysis is
\begin{equation}
  \label{eq:def-B}
  B := \sup_{k \geq 1} \left(\sum_{j=k+1}^\infty Q_j z^j \right)
  \left(\sum_{j=1}^{k}\dfrac{1}{a_j Q_j z^j}\right) \in (0,+\infty].
\end{equation}
Observe that $B$ depends only on the coefficients $a_i$ ($i \geq 1$)
and $b_i$ ($i \geq 2$). We are able to show that subcritical solutions
of \eqref{eq:BD} converge exponentially to equilibrium and $1/B$ is
a good measure of the speed of convergence whenever $B$ is finite:
\begin{thm}
  \label{thm:main-intro*}
  Assume \eqref{eq:hyp-existence} and \eqref{eq:hyp-JN}. Let $c =
  (c_i)_i$ be a nonzero subcritical solution to equation \eqref{eq:BD}
  with initial condition $c(0)$ such that, for some $\nu > 0$,
  \begin{equation}
    \label{eq:initial-exp-moment-finite}
    M := \sum_{i=1}^\infty \exp\left(\nu i\right) c_i(0) < +\infty,
  \end{equation}
  and define the equilibrium monomer density $z > 0$ by
  \eqref{eq:z-def}. Then there exist explicit $\overline{\nu} \in
  (0,\nu)$ and $\lambda_\star > 0$ such that, for any $\eta \in
  (0,\overline{\nu})$, there is a number $C > 0$ which depends only on
  $\varrho$, $\eta$, $M$ and $\sum_{i=1}^\infty e^{\eta i} z^i Q_i$ such
  that
  \begin{equation}
    \label{eq:main-convergence-expo}
    \sum_{i=1}^\infty \exp\left(\eta i\right) \left|c_i(t) - z^i Q_i\right|
    \leq C \exp\left(-\lambda_\star t\right)
    \fa t \geq 0,
  \end{equation}
  Under these conditions $B$ is finite, and if $\lim_{i \to +\infty}
  a_i = +\infty$ we may take $\lambda_\star = 1/B$.
\end{thm}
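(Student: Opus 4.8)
\step{Linearisation and strategy.} The idea is to reduce the nonlinear statement to a spectral analysis of the linearised Becker--D\"oring operator $\L$, by a Duhamel/perturbation argument in which the quadratic remainder is controlled by a uniform exponential moment bound, with the weaker convergence result of \cite{JN03} serving as an a priori input. Writing $c_i(t) = z^iQ_i + h_i(t)$ and using the detailed balance relation $b_{i+1}z^{i+1}Q_{i+1}=a_iz^{i+1}Q_i$ (which makes all constant terms of $W_i$ cancel), one obtains
\begin{equation*}
  \ddt h = \L h + \Q(h),
\end{equation*}
where $\L$ is the linearised operator (self-adjoint and nonpositive on the weighted Hilbert space $\H$, see Proposition~\ref{prp:L-decomposition}) and $\Q(h)$ is the purely quadratic remainder, $\Q(h)_i = a_{i-1}h_1h_{i-1}-a_ih_1h_i$ for $i\geq2$ and $\Q(h)_1 = -a_1h_1^2 - h_1\sum_{k\geq1}a_kh_k$. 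Two features of $\Q$ will be used: it is quadratic and always carries the factor $h_1$ (which tends to $0$), and, since $a_i=O(i)$, it is dominated by a norm only mildly heavier than $\ell^1(e^{\eta i})$, namely $\norm{\Q(h)}_{\ell^1(e^{\eta i})}\leq C\abs{h_1}\sum_i a_ie^{\eta i}\abs{h_i}$. By \eqref{eq:z-def}, mass conservation gives $\sum_i i\,h_i(t)=0$, so $h(t)$ stays in the hyperplane on which $-\L$ has a spectral gap.

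\step{Linear decay in the weighted $\ell^1$ spaces.} From the spectral study of $\L$ on $\H$, the operator $-\L$ restricted to the mass-zero subspace has a spectral gap $\lambda_\star>0$, governed by the Hardy--Muckenhoupt quantity $B$ of \eqref{eq:def-B}; when $a_i\to+\infty$ the bottom of the essential spectrum is pushed away from $0$, allowing the choice $\lambda_\star = 1/B$. This Hilbert-space decay is transferred to the spaces $\ell^1(e^{\eta i})$, $\eta\in(0,\overline\nu)$, by a splitting $\L=\mathcal A+\mathcal B$ with $\mathcal B$ dissipative in $\ell^1(e^{\eta i})$ with rate $\lambda_\star$ and $\mathcal A$ bounded from $\ell^1(e^{\eta i})$ into $\H$: one gets $\norm{e^{t\L}h_0}_{\ell^1(e^{\eta i})}\leq C_\eta e^{-\lambda_\star t}\norm{h_0}_{\ell^1(e^{\eta i})}$ for $t\geq0$ on the mass-zero subspace.

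\step{Uniform moments, a priori decay, and the nonlinear estimate.} First, \eqref{eq:initial-exp-moment-finite} propagates with a possibly smaller exponent: for suitable $\overline\nu\in(0,\nu)$, a differential inequality for $t\mapsto\sum_i e^{\overline\nu i}c_i(t)$ (using $a_i,b_i=O(i)$ and $c_1(t)\leq\varrho$) gives $\sup_{t\geq0}\sum_i e^{\overline\nu i}c_i(t)<+\infty$, depending only on $\varrho$, $\nu$ and $M$; since $z<\zs$ forces $\sum_i e^{\overline\nu i}z^iQ_i<+\infty$, this gives $\sup_{t\geq0}\norm{h(t)}_{\ell^1(e^{\overline\nu i})}\leq K<+\infty$, hence $\sup_{t\geq0}\sum_i a_ie^{\eta i}\abs{h_i(t)}\leq CK$ for $\eta<\overline\nu$ (as $a_ie^{\eta i}=O(e^{\overline\nu i})$). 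Moreover \cite{JN03} gives $\norm{h(t)}_{\ell^1}\to0$, whence $\abs{h_1(t)}\to0$ and, by interpolation with the previous bound, $\norm{h(t)}_{\ell^1(e^{\eta i})}\to0$ for $\eta\in(0,\overline\nu)$. Fixing such an $\eta$, the Duhamel formula together with Step~2 and $\norm{\Q(h(s))}_{\ell^1(e^{\eta i})}\leq C\abs{h_1(s)}\sum_i a_ie^{\eta i}\abs{h_i(s)}\leq CK\abs{h_1(s)}$ yields
\begin{equation*}
  \norm{h(t)}_{\ell^1(e^{\eta i})}\leq Ce^{-\lambda_\star t}\norm{h(0)}_{\ell^1(e^{\eta i})} + CK\int_0^t e^{-\lambda_\star(t-s)}\abs{h_1(s)}\ds .
\end{equation*}
Since $\abs{h_1(s)}\leq\norm{h(s)}_{\ell^1(e^{\eta i})}$ and $\abs{h_1(s)}\to0$, a Gr\"onwall argument on $[T,+\infty)$ for $T$ large gives exponential decay at some positive rate; feeding this back into the Duhamel integral, the quadratic source now decays roughly twice as fast as the current rate, so after finitely many passes the rate climbs to the ceiling $\lambda_\star$, yielding \eqref{eq:main-convergence-expo}. (A regularising estimate for $e^{t\L}$, reflecting its diffusive second-difference structure, may be used to absorb the slight loss of exponential weight in $\Q$.) Tracking the constants through Steps~2--3, $C$ depends only on $\varrho$, $\eta$, $M$ and $\sum_i e^{\eta i}z^iQ_i$.

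\step{The main obstacle.} The crux is Step~2: the spectral gap is natural in $\H$, where self-adjointness turns it into a weighted discrete Hardy inequality with Muckenhoupt constant $B$, but the theorem needs decay in \emph{all} the spaces $\ell^1(e^{\eta i})$ with the \emph{same} rate $\lambda_\star$, which forces a careful choice of the splitting $\L=\mathcal A+\mathcal B$; and identifying $\lambda_\star=1/B$ exactly when $a_i\to+\infty$ hinges on locating precisely the essential spectrum of $\L$. The nonlinear part of Step~3 is delicate only because $\Q$ costs a little exponential weight (absorbed by the moment bound) and because the a priori input from \cite{JN03} is merely stretched-exponential, so the exponential rate has to be generated self-consistently from the quadratic structure of the equation.
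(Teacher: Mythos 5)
Your overall structure matches the paper: linearize around $(\mathcal{Q}_i)_i$, establish the spectral gap of $\mathbf{L}$ in the weighted Hilbert space $\mathcal{H}$ (self-adjointness plus Weyl/Hardy), transfer the decay to $X=\ell^1(e^{\eta i}\mathcal{Q}_i)$ by the $\mathcal{A}+\mathcal{B}$ splitting and the enlargement theorem, control the quadratic remainder using the uniform exponential moment bound, and enter the local basin of attraction via the stretched-exponential decay of \cite{JN03}. That is exactly the route the paper takes.

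However, there is a genuine gap in the nonlinear estimate of your Step~3. You bound
\begin{equation*}
  \norm{\Gamma(h,h)}_X \;\leq\; C\,\abs{h_1}\,\sum_i a_i e^{\eta i}\mathcal{Q}_i\abs{h_i}\;\leq\; C K\,\abs{h_1},
\end{equation*}
absorbing the heavy factor into the \emph{fixed} constant $K$ given by the uniform moment bound. Since $\abs{h_1}\leq\norm{h}_X$, this is only a \emph{linear} estimate $\norm{\Gamma(h,h)}_X\leq CK\norm{h}_X$. Feeding this into Duhamel and setting $u(t)=e^{\lambda_\star t}\norm{h(t)}_X$ yields $u(t)\leq A + CK\int_0^t u(s)\,\mathrm{d}s$, i.e.\ $\norm{h(t)}_X\lesssim e^{-(\lambda_\star-CK)t}$, a rate that is degraded by the (potentially large) moment constant $K$ and can even be negative. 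Your bootstrap does not rescue this: since the heavy factor was frozen at $CK$, the source term $CK\abs{h_1(s)}$ decays at the \emph{same} rate as the current iterate, not twice as fast, so the rate never climbs toward $\lambda_\star$. The feedback that would make it climb requires keeping both factors of $\Gamma(h,h)$ decaying, i.e.\ a superlinear bound in $\norm{h}_X$.

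The paper obtains precisely this via interpolation rather than freezing: it shows $\norm{\Gamma(h,h)}_X\leq C\norm{h}_X\norm{h}_{X_1}$ and then applies Cauchy--Schwarz,
\begin{equation*}
  \norm{h}_{X_1}\;\leq\; C\,\Big(\sum_i e^{\eta i}\mathcal{Q}_i\abs{h_i}\Big)^{1/2}\Big(\sum_i e^{\overline\nu i}\mathcal{Q}_i\abs{h_i}\Big)^{1/2}\;\leq\; C K^{1/2}\norm{h}_X^{1/2},
\end{equation*}
so that $\norm{\Gamma(h,h)}_X\leq CK^{1/2}\norm{h}_X^{3/2}$. The extra half power is what makes the argument close: in Duhamel, $e^{\lambda_\star s}\norm{h(s)}_X^{3/2}=e^{-\lambda_\star s/2}u(s)^{3/2}$, so the Gr\"onwall integral carries an integrable exponential weight $e^{-\lambda_\star s/2}$ and one gets $u(t)\leq A + C'\int_0^t e^{-\lambda_\star s/2}u(s)^{3/2}\,\mathrm{d}s$, which for small $A$ stays bounded, giving decay at the \emph{full} rate $\lambda_\star$. (The paper first proves this as a local result for $\norm{h(0)}_X$ small, then uses \cite{JN03} plus interpolation to enter that small ball at some finite $t_0$.) Finally, the ``regularising estimate for $e^{t\L}$'' you invoke at the end is neither used nor needed; the loss of weight in $\Gamma$ is handled entirely by the Cauchy--Schwarz interpolation against the propagated exponential moment, not by semigroup smoothing.
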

For the particular case of the example coefficients
\eqref{eq:PT-coefficients} or \eqref{eq:CF-coefficients}, if $\alpha <
2(1-\mu)$ we show that $B$ can be bounded by
\begin{equation*}
  C_1 \left(\log\frac{z_\mathrm{s}}{z}\right)^{-2 + \frac{\alpha}{1-\mu}}
  \leq
  B
  \leq
  C_2 \left(\log\frac{z_\mathrm{s}}{z}\right)^{-2 + \frac{\alpha}{1-\mu}}
\end{equation*}
for $z < z_\mathrm{s}$, where $C_1$ and $C_2$ depend only on the
coefficients $(a_i)_i$, $(b_i)_i$ (and in particular are independent
of $z$). In the case $\alpha \geq 2(1-\mu)$ we have
\begin{equation*}
  C_1
  \leq
  B
  \leq
  C_2
\end{equation*}
for $z \leq z_\mathrm{s}$ and some (other) constants $C_1$, $C_2$ that
again depend only on the coefficients. We observe that if $z =
z_\mathrm{s}$ then $B$ is finite if and only if $\alpha \geq
2(1-\mu)$.

\subsection{Method of proof}

Our proof is based on a study of the linearization of the
Becker-Döring equations around the equilibrium, for which we show the
existence of a spectral gap whose size is well estimated by $1/B$ (see
Theorem \ref{thm:spectral-gap} for details). This implies exponential
convergence to equilibrium for the linearized system, which can be
extended to the nonlinear equations by means of techniques developed
in the literature on kinetic equations, and particularly on the
Boltzmann equation \cite{M06,GMM}.

We observe that the improvement with respect to \cite{JN03} comes from
the use of a different method. The main tool in \cite{JN03} is an
inequality between the free energy (or entropy) $H$ defined by
\eqref{eq:def-V} and its production rate $D$ (see
\eqref{eq:H-thm-BD}) in the spirit of the ones available for the
Boltzmann equation \cite{DMV11}. As pointed out in \cite{JN03}, an
inequality like $H \leq C D$ for some constant $C > 0$, which
would directly imply an exponential convergence to equilibrium, is
roughly analogous to a functional log-Sobolev inequality, which is
known \emph{not} to hold for a measure with an exponential tail.
Since this is the case for the stationary solutions of the
Becker-Döring equation, it is believed (though, to our knowledge, not
proved) that this inequality does not hold in general for this
equation; hence, the following weaker inequality (this is, weaker for
small $H$) is proved in \cite{JN03}:
\begin{equation*}
  \frac{H}{|\log H|^2} \leq C D,
\end{equation*}
implying a convergence like $\exp(-C t^{1/3})$. This obstacle has a
parallel in the Boltzmann equation, for which the corresponding
inequality (known as Cercignani's conjecture) has been proved not to
hold in general, and can be substituted by inequalities like
$H^{1+\epsilon} \leq C D$ for $\epsilon > 0$ (we refer to the recent
review \cite{DMV11} for the history of the conjecture and a detailed
bibliography). However, just as for the space homogeneous \cite{M06}
and the full Boltzmann equation \cite{GMM} this can be complemented by
the study of the linearized equation in order to show full exponential
convergence. By following a parallel reasoning for the Becker-Döring
system we can upgrade the convergence rate to exponential.\medskip

Hence, our analysis is built around a study of the linearized
Becker-D\"{o}ring equation, which is new to our knowledge. We prove
here the existence of a positive spectral gap of the operator
$\mathbf{L}$, defined in Section \ref{sec:linear} as a suitable
linearization of Eq. \eqref{eq:BD} around the equilibrium $(\Q_i)_{i
  \geq 1}=(z^i Q_i)_{i \geq 1}$, in different spaces:

\begin{enumerate}
\item We provide first a spectral description of $\mathbf{L}$ in a
  Hilbert space setting. Namely, we shall investigate the spectral
  properties of the operator $\mathbf{L}$ in the weighted space
  $\mathcal{H}=\ell^2(\Q).$ This analysis is carried out with two
  (complementary) techniques: on the one hand, under reasonable
  conditions on the coefficients, one can show that $\mathbf{L}$ is
  self-adjoint in $\mathcal{H}$ and, resorting to a compactness
  argument, the existence of a non constructive spectral gap can be
  shown. On the other hand, using a discrete version of the weighted
  Hardy's inequality, the positivity of the spectral gap is completely
  characterized in terms of necessary and sufficient conditions on the
  coefficients. Moreover, and more importantly, quantitative estimates
  of this spectral gap are
  given.

\item Unfortunately, as it occurs classically for kinetic models, the
  Hilbert space setting which provides good estimates for the
  linearized equation is usually not suitable for the nonlinear
  equation. Thus, inspired by previous results on Navier-Stokes and
  Boltzmann equation \cite{gallay, M06}, we derive the spectral
  properties of the linearized operator in a larger weighted $\ell^1$
  space. We use for this an abstract result (see Theorem
  \ref{thm:GMM}) allowing to enlarge the functional space in which the
  exponential decay of a semigroup holds. This follows recent
  techniques developed in \cite{GMM}, though we give a self-contained
  proof simplified in our setting. The application of this theoretical
  result requires some important technical efforts, see Theorem
  \ref{thm:spectral-gap-extension}.
\end{enumerate}

It is worth pointing out that our techniques parallel the historical
development of the study of the exponential decay of the homogeneous
Boltzmann equation. We first show by non-constructive methods based on
Weyl's theorem that the linearized operator $\mathbf{L}$ has a
positive spectral gap. An exposition of similar techniques for the
linearized homogeneous Boltzmann equation can be found in
\cite{CIP}. Explicit estimates for this spectral gap were given in
\cite{BM05}, and similar techniques for the extension of the spectral
gap were devised in \cite{M06} and developed in \cite{GMM}, and used
to study the rate of convergence to equilibrium of the homogeneous
Boltzmann equation.

In Section \ref{sec:linear} we carry out the plan in point (1) above,
and in Section \ref{sec:spectralL1} we carry out point (2). The
application to the nonlinear equation and the proofs of our main
results are then given in Section \ref{sec:nonlinear}.

\section{The linearized Becker-Döring equations}
\label{sec:linear}

\subsection{The linearized operator}
\label{sec:operator}

For the whole of Section \ref{sec:linear} we assume the following:

\begin{hyp}
  \label{hyp:linear-section}
  We take $(a_{i})_{i \geq 1}$ and $(b_{i})_{i \geq 1}$ satisfying
  \eqref{eq:hyp-positivity}, define $(Q_i)_{i \geq 1}$ by
  \eqref{eq:Qi}, and assume \eqref{eq:critical-density-positive}. We
  also take $0 < z \leq \zs$ and set
  \begin{equation}
    \label{eq:scriptQi-def}
    \Q_i = Q_i z^i, \qquad i \geq 1.
  \end{equation}
  We also assume that
  \begin{equation}
    \label{eq:A}
    A := \sum_{i=1}^\infty i^2 (1 + a_i + b_i)^2 \Q_i < +\infty.
  \end{equation}
\end{hyp}
We remark that, when $z < \zs$, the sum in \eqref{eq:A} is indeed
finite under ``reasonable'' conditions on the coefficients (such as
\eqref{eq:hyp-existence}). Hence condition \eqref{eq:A} is important
mainly for the case $z = \zs$. Also, when $z = \zs$, condition
\eqref{eq:A} ensures that $\varrho_\mathrm{s} < +\infty$.\medskip

The choice of $0 < z \leq \zs$ corresponds to the choice of a mass $0
< \varrho \leq \varrho_\mathrm{s}$ given by \eqref{eq:z-def}. The unique
equilibrium with mass $\varrho$ is precisely $(\Q_i)_{i \geq 1}$, given
by \eqref{eq:scriptQi-def}. Notice that \eqref{eq:Qi} implies that
\begin{equation}
  \label{eq:DB-Qizi}
  a_i \Q_1 \Q_i = b_{i+1} \Q_{i+1},
  \qquad i \geq 1.
\end{equation}
Consider a solution $(c_i)_{i \geq 1}$ of \eqref{eq:BD} with mass
$\varrho$ (this is, a subcritical or critical solution). In order to
linearize equation \eqref{eq:BD} around the steady state $(\Q_i)_i$ we
define the fluctuation $h = (h_i)_{i \geq 1}$ by
\begin{equation}
  \label{eq:def-hi}
  c_i =\Q_i\left( 1+h_i \right)
\end{equation}
where the components of $h$ may have any sign. Let us carry out some
formal computations, and leave the precise definition of the
linearized operator for Section \ref{sec:linearized}. Notice that, in
order for \eqref{eq:z-def} to be satisfied, it is necessary that
\begin{equation}
  \label{ortho}
  \sum_{i=1}^\infty i \Q_i h_i(t) =0 \qquad \text{ for all } t \geq 0.
\end{equation}
The weak form of \eqref{eq:BD} reads as follows:
\begin{equation}
  \label{weak}
   \sum_{i=1}^\infty \phi_i \frac{\d}{\d t} c_i
  = \sum_{i=1}^\infty
  \big( a_{i} c_i(t) c_1(t)- b_{i+1} c_{i+1}(t) \big)
  \,\big( \phi_{i+1} - \phi_i - \phi_1 \big)
\end{equation}
for any sequence $(\phi_i)_{i \geq 1}$. Plugging into this the ansatz
\eqref{eq:def-hi} yields, for any sequence $(\phi_i)_i$
\begin{multline}
   \sum_{i=1}^\infty\dfrac{\d}{\d t}h_i(t)\Q_i \phi_i
  \\
  =  \sum_{i=1}^\infty \bigg( a_{i}\Q_i \Q_1
  (1+h_i(t))\,(1+h_1(t))-b_{i+1}\Q_{i+1}(1+h_{i+1}(t))\bigg)
  \,\bigg(\phi_{i+1}-\phi_i-\phi_1\bigg).
\end{multline}
Using \eqref{eq:DB-Qizi} one sees that, for any $i,j \geq 1$,
\begin{equation*}
  \begin{split}
    a_{i}\Q_i \Q_1
  (1+h_i(t))\,&(1+h_1(t))-b_{i+1}\Q_{i+1}(1+h_{i+1}(t))\\
    &=a_{i}\Q_i \Q_j \left(h_i(t)+h_1(t)-h_{i+1}(t)\right) + a_{i}\Q_i \Q_1  h_i(t)\,h_1(t).
  \end{split}
\end{equation*}
This means that the fluctuation $h(t)=(h_i(t))_i$ satisfies
\begin{equation}
  \label{eq:hi-equation}
  \dfrac{\d }{\d t}h_i(t)
  =
  L_i(h(t)) + \Gamma_i(h(t),h(t))
\end{equation}
where the linear operator $L$ is given, in weak
form, by
\begin{equation}
  \label{eq:weak-linear}
  \begin{split}
    \sum_{i=1}^\infty L_i(h)\Q_i\,\phi_i
    &=
    \sum_{i\geq 1}
    \W_{i}\,
    \big(\phi_{i+1}-\phi_i-\phi_1\big)
    \\
    &= \sum_{i=1}^\infty a_{i}\Q_i \Q_1 \big( h_i +
    h_1 - h_{i+1} \big) \big(\phi_{i+1}-\phi_i-\phi_1\big)
  \end{split}
\end{equation}
for any sequences $h=(h_i)_i$, $(\phi_i)_i$, where
\begin{equation}
  \label{eq:def-Wij-linearized}
  \W_{i} := a_{i} \Q_i \Q_1 (h_i + h_1) - b_{i+1} \Q_{i+1} h_{i+1}
  = a_{i} \Q_i \Q_1 (h_i + h_1 - h_{i+1}).
\end{equation}
Alternatively, we may write $L$ in strong form as
\begin{equation}
  \label{eq:L}
  L_1(h) =
  -\frac{1}{\Q_1}\left(\W_1+ \sum_{k=1}^\infty\W_k\right),
  \qquad
  L_i(h) =
  \frac{1}{\Q_i} \left(\W_{i-1} - \W_{i}\right)
  \qquad (i \geq 2).
\end{equation}
The bilinear operator $\Gamma(f,g)$ is defined in weak form by
\begin{equation*}
  \sum_{i=1}^\infty \Gamma_i(f,g) \Q_i \phi_i
  = \frac{1}{2}\sum_{i \geq 1} a_{i}\Q_i \Q_1 \left(f_i\,g_1+f_1\,g_i\right)
  \big(\phi_{i+1}-\phi_i-\phi_1\big)
\end{equation*}
for any sequences $f=(f_i)_i$, $g=(g_i)_i$ and
$(\phi_i)_i$. Alternatively, the strong form of $\Gamma_i(f,g)$
can be written as
\begin{equation}
  \label{eq:def-Gamma1}
  \Gamma_1(f,g)
  =
  -a_1\Q_1 f_1\,g_1
  - \frac{1}{2}\sum_{i=1}^\infty a_i \Q_i
  \left( f_i\,g_1 + f_1\,g_1 \right)
\end{equation}
and
\begin{equation}
  \label{eq:def-Gamma}
  \Gamma_i(f,g)
  =
  \frac{1}{2 \Q_i}\bigg(a_{i-1}\Q_{i-1}\Q_1\big(f_{i-1}\,g_1+f_1\,g_{i-1}\big)-a_i\Q_i\Q_1\big(f_i\,g_1+f_1\,g_i\big)\bigg) \quad i \geq 2.
\end{equation}

Neglecting in \eqref{eq:hi-equation} the quadratic term $\Gamma(h,h)$,
one is faced with the linearized problem:
\begin{equation}
  \label{eq:linearized0}
  \dfrac{\d }{\d t} h(t)=L(h(t)),\qquad h(0)=h^{0},
\end{equation}
which should be understood as the linear approximation of equation
\eqref{eq:BD} close to the equilibrium $(\Q_i)_{i \geq 1}$. Our
purpose in the rest of this section is to study the operator $L$ and
its spectral properties, thus obtaining the asymptotic behavior of
equation \eqref{eq:linearized0}.

\subsection{Study of the linearized operator and proof of existence of
  a spectral gap}
\label{sec:linearized}

The first thing we need to do in order to define rigorously the
operator $L$ is to give its domain. We take expression
\eqref{eq:weak-linear} as a starting point, and we denote by
$\ell_{00}$ the set of compactly supported sequences $h = (h_i)_i$
(this is, the sequences for which there exists $N > 0$ such that $h_i
= 0$ for all $i \geq N$).
\begin{defi}[The operator $L$]
  \label{def:L-compact}
  Assume Hypothesis \ref{hyp:linear-section}. For a compactly
  supported sequence $h = (h_i)_i \in \ell_{00}$ we define $L(h)$ by
  the expression \eqref{eq:L} (or, equivalently,
  \eqref{eq:weak-linear} for $h, \phi \in \ell_{00}$).
\end{defi}
Notice that the only infinite sum in \eqref{eq:L} converges by
\eqref{eq:A}. (Actually, the slightly weaker condition $\sum_i a_i
\Q_i < +\infty$ would be enough for the definition, but we keep
\eqref{eq:A} for simplicity).

\medskip
One can give a more compact expression of $L$ by direct
inspection, using \eqref{eq:DB-Qizi} repeatedly:

\begin{lem}
  \label{lem:sigma}
  Assume Hypothesis \ref{hyp:linear-section}. For any compactly
  supported sequence $h$,
  \begin{equation}
    \label{expr}
    L_i(h)=-\sigma_i h_i
    +\sum_{j=1}^\infty \xi_{i,j}h_j
    \quad \text{ for } i \geq 1
  \end{equation}
  where $\sigma_i$ are defined by
  \begin{equation}
    \label{eq:sigma-i}
    \sigma_1 = 3 a_1\Q_1+\sum_{i=1}^\infty a_i \Q_i,
    \qquad
    \sigma_i = a_i\Q_1+b_i
    \quad \text{ for } i \geq 2,
  \end{equation}
  $\xi_{i,j}$ are defined by
  \begin{align}
    \label{eq:xi-12}
    \Q_1\, \xi_{1,2}
    &= \Q_2\, \xi_{2,1}
    = 2 b_2 \Q_2 - a_2\Q_1\Q_2,
    \\
    \label{eq:xi-i1}
    \Q_i\, \xi_{i,1}
    &= \Q_1\, \xi_{1,i}
    = b_i \Q_i - a_i\Q_1\Q_i
    \quad \text{ for } i > 2,
    \\
    \label{eq:xi-ij}
    \Q_i\, \xi_{i,i-1}
    &= \Q_{i-1}\, \xi_{i-1,i}
    = b_{i} \Q_{i}
    \quad \text{ for } i > 2,
  \end{align}
  and $\xi_{i,j}=0$ for $j\notin \{1,i-1,i+1\}.$
\end{lem}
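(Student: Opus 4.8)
The plan is purely computational: start from the strong form \eqref{eq:L} of $L$ together with the definition \eqref{eq:def-Wij-linearized} of $\W_i$, substitute, and read off the coefficient of each $h_j$, using the detailed balance identity \eqref{eq:DB-Qizi} (in its two forms $a_{i-1}\Q_1\Q_{i-1}=b_i\Q_i$ and $a_i\Q_1\Q_i=b_{i+1}\Q_{i+1}$) to bring every coefficient to a common normalization by $\Q_i$. Since $h\in\ell_{00}$, the only infinite sum involved — the term $\sum_{k\geq1}\W_k$ in $L_1(h)$ — is actually finite, so no convergence issue arises beyond the finiteness of $\sigma_1$, which is guaranteed by \eqref{eq:A}. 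There is no analytic obstacle here; the whole content is bookkeeping, and the only genuinely delicate case is $i=1$.

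First I would treat the generic case $i\geq3$. Here $\Q_iL_i(h)=\W_{i-1}-\W_i$, and using \eqref{eq:DB-Qizi} we rewrite $\W_{i-1}=a_{i-1}\Q_{i-1}\Q_1(h_{i-1}+h_1-h_i)=b_i\Q_i(h_{i-1}+h_1-h_i)$, while $\W_i=a_i\Q_i\Q_1(h_i+h_1-h_{i+1})$. Dividing by $\Q_i$ and grouping by index gives
\[
L_i(h)=-(a_i\Q_1+b_i)h_i+b_i h_{i-1}+a_i\Q_1 h_{i+1}+(b_i-a_i\Q_1)h_1,
\]
so $\sigma_i=a_i\Q_1+b_i$, $\Q_i\xi_{i,i-1}=b_i\Q_i$, $\Q_i\xi_{i,i+1}=a_i\Q_1\Q_i=b_{i+1}\Q_{i+1}$ (which is \eqref{eq:xi-ij} read with the index shifted by one), and $\Q_i\xi_{i,1}=(b_i-a_i\Q_1)\Q_i$, which is \eqref{eq:xi-i1}; all other $\xi_{i,j}$ vanish because no other index occurs. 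For $i=2$ the only modification is that $\W_1=a_1\Q_1\Q_1(2h_1-h_2)$ carries a doubled $h_1$; using the $i=1$ case of \eqref{eq:DB-Qizi}, $a_1\Q_1^2=b_2\Q_2$, the same computation yields $\sigma_2=a_2\Q_1+b_2$, $\xi_{2,3}=a_2\Q_1$ (again \eqref{eq:xi-ij}), and $\Q_2\xi_{2,1}=2b_2\Q_2-a_2\Q_1\Q_2$, i.e. \eqref{eq:xi-12}.

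The case $i=1$ is where care is needed. I would expand $-\Q_1L_1(h)=\W_1+\sum_{k\geq1}\W_k$, split $\sum_{k\geq1}\W_k=\sum_k a_k\Q_k\Q_1 h_k+h_1\sum_k a_k\Q_k\Q_1-\sum_k a_k\Q_k\Q_1 h_{k+1}$, and reindex the last sum by $j=k+1$, so that by \eqref{eq:DB-Qizi} it becomes $\sum_{j\geq2}a_{j-1}\Q_1\Q_{j-1}h_j=\sum_{j\geq2}b_j\Q_j h_j$. Collecting the coefficient of $h_1$ (the $2h_1$ from $\W_1$, the $k=1$ term $a_1\Q_1^2 h_1$, and the term $h_1\sum_k a_k\Q_k\Q_1$) gives $3a_1\Q_1^2+\Q_1\sum_{k\geq1}a_k\Q_k=\sigma_1\Q_1$, hence $-\sigma_1$ for the coefficient of $h_1$ in $L_1(h)$; and for $j\geq2$ one gets $\Q_1\xi_{1,j}=b_j\Q_j-a_j\Q_1\Q_j$, which is \eqref{eq:xi-i1} for $j>2$ and, after one more use of $a_1\Q_1^2=b_2\Q_2$, equals $2b_2\Q_2-a_2\Q_1\Q_2$ for $j=2$, consistent with \eqref{eq:xi-12}. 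The two pitfalls to watch for are (i) not forgetting the extra $h_1$ contribution from the $k=1$ summand of the infinite sum, since $\W_1$ itself already involves $2h_1$, and (ii) noticing that the reindexing of the $h_{k+1}$-part is precisely what turns $a_{j-1}\Q_1\Q_{j-1}$ into $b_j\Q_j$, thereby collapsing the ``second-neighbour'' structure of \eqref{eq:weak-linear} into the tridiagonal-plus-first-row/column form \eqref{expr}. As a built-in sanity check, the resulting formulas display the symmetry $\Q_i\xi_{i,j}=\Q_j\xi_{j,i}$, itself an immediate consequence of \eqref{eq:DB-Qizi}; this symmetry is exactly what will later make $\mathbf{L}$ formally self-adjoint in $\mathcal{H}=\ell^2(\Q)$.
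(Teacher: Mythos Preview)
Your proposal is correct and is precisely the ``direct inspection, using \eqref{eq:DB-Qizi} repeatedly'' that the paper invokes in lieu of a proof; you have simply written out the bookkeeping the authors omitted. One small wording slip: in the $i=1$ case your sentence ``for $j\geq2$ one gets $\Q_1\xi_{1,j}=b_j\Q_j-a_j\Q_1\Q_j$'' is literally true only for $j\geq3$, since for $j=2$ the term $-a_1\Q_1^2 h_2$ in $\W_1$ contributes an extra $a_1\Q_1^2=b_2\Q_2$---which is exactly the ``one more use'' you then make, so your final answer $\Q_1\xi_{1,2}=2b_2\Q_2-a_2\Q_1\Q_2$ is right.
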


\medskip The numbers $\xi_{i,j}$ represent the nondiagonal entries of
the infinite-dimensional matrix that defines $L$. One can see that the
only nonzero entries of this matrix are in the diagonal and in the
first line and column. In addition, the numbers $\xi_{i,j}$ have the
important property that $(\Q_i \xi_{i,j})_{i,j \geq 1}$ is a symmetric
matrix, which suggests considering the inner product with weight
$\Q_i$. There is another reason why this inner product appears
naturally: since the nonlinear equation (\ref{eq:BD}) has a Lyapunov
functional (see \eqref{eq:def-V}--\eqref{eq:H-thm-BD-Fz}), one may
look at the second-order approximation of this functional close to the
equilibrium $(\Q_i)_{i \geq 1}$, which happens to be
\begin{equation}
  \label{eq:entropy-linear}
  \|h\|_\H^2 := \frac{1}{2} \sum_{i=1}^\infty\Q_i h_i^2.
\end{equation}
It becomes then natural to study the linearized operator $L$
in the Hilbert space $\mathcal{H}:=\ell^2(\Q)$ defined as
\begin{equation*}
  \mathcal{H}
  = \big\{
  h=(h_i)_{i \geq 1} \ \mid\
  \|h\|_{\mathcal{H}} < \infty
  \big\}
\end{equation*}
with inner product denoted by $\langle\cdot,\cdot\rangle$. As remarked
above, $L$ becomes a \emph{symmetric} operator in $\H$. Another
crucial property of $L$ in this Hilbert space is that it is
\emph{dissipative}; this is,
\begin{equation*}
  \langle L h, h \rangle \leq 0
  \fa h \in \ell_{00},
\end{equation*}
as can be readily seen from \eqref{eq:weak-linear}. Though $L$ is in
general not continuous on $\H$, it is easy to give a dense subspace of
$\H$ in which it is bounded:

\begin{lem}
  \label{lem:L-bounded-with-weight}
  Assume Hypothesis \ref{hyp:linear-section} and
  \eqref{eq:Qi-limit}. Then, there exists $C > 0$ (depending only on
  $z$ and the coefficients $(a_i)_{i \geq 1}$, $(b_i)_{i \geq 1}$)
  such that
  \begin{equation}
    \label{eq:L-bounded-with-weight}
    \| L(h) \|_\H
    \leq C \|h\|_{\H_{2}}
  \end{equation}
  for any compactly supported sequence $h$, where
  \begin{equation}
    \label{eq:def-DL}
    \H_{2} = \ell^2(\Q (1+\sigma^2))
    :=
    \Big\{
    h \in \mathcal{H}
    \mid
    \|h\|_{\H_2}^2
    := \sum_{i=1}^\infty \Q_i  (1+\sigma_i^2) h_i^2 < +\infty
    \Big\}
  \end{equation}
  and $(\sigma_i)_i$ was defined in \eqref{eq:sigma-i}.
\end{lem}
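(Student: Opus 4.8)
The plan is to work from the explicit ``tridiagonal-plus-border'' form of $L$ supplied by Lemma~\ref{lem:sigma}: for $i\geq 2$ one has $L_i(h)=-\sigma_i h_i+\xi_{i,1}h_1+\xi_{i,i-1}h_{i-1}+\xi_{i,i+1}h_{i+1}$ (with the small-index conventions of that lemma), while $\|L(h)\|_\H^2=\half\sum_{i\geq1}\Q_i|L_i(h)|^2$. Using $(a+b+c+d)^2\leq 4(a^2+b^2+c^2+d^2)$ I would split the sum over $i\geq 2$ into four pieces --- a diagonal sum $\sum_i\Q_i\sigma_i^2h_i^2$, a ``first column'' sum $h_1^2\sum_i\Q_i\xi_{i,1}^2$, and two off-diagonal sums $\sum_i\Q_i\xi_{i,i+1}^2h_{i+1}^2$ and $\sum_i\Q_i\xi_{i,i-1}^2h_{i-1}^2$ --- and treat $\Q_1|L_1(h)|^2$ separately. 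The one structural tool used throughout is the detailed-balance relation \eqref{eq:DB-Qizi}, $a_j\Q_1\Q_j=b_{j+1}\Q_{j+1}$, which converts the weight $\Q_i$ carried by $L_i(h)$ into the weight $\Q_j$ carried by whatever component $h_j$ appears, after the appropriate shift of index.

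The first two pieces are immediate. The diagonal sum is bounded by $\|h\|_{\H_2}^2$ straight from the definition \eqref{eq:def-DL} of $\H_2$. For the first-column sum, formulas \eqref{eq:xi-12}--\eqref{eq:xi-i1} give $|\xi_{i,1}|\leq C(a_i+b_i)$, so $\Q_i\xi_{i,1}^2\leq C\Q_i(a_i^2+b_i^2)$, and the resulting series converges because \eqref{eq:A} contains the terms $\sum_i\Q_i a_i^2$ and $\sum_i\Q_i b_i^2$; together with $\Q_1h_1^2\leq\sum_i\Q_ih_i^2=2\|h\|_\H^2\leq\|h\|_{\H_2}^2$ this piece is $\leq C\|h\|_{\H_2}^2$.

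The two off-diagonal pieces are the crux, and they are exactly where hypothesis \eqref{eq:Qi-limit} is needed. For the super-diagonal one, use $\xi_{i,i+1}=a_i\Q_1$ and set $j=i+1$; by \eqref{eq:DB-Qizi} the coefficient of $h_j^2$ equals $\Q_{j-1}a_{j-1}^2\Q_1^2=(a_{j-1}\Q_1)\,(b_j\Q_j)$. Since $a_{j-1}/b_j=Q_j/Q_{j-1}\to 1/\zs$ is a bounded sequence, this is $\leq C\,b_j^2\Q_j\leq C\,\sigma_j^2\Q_j$ (recall $\sigma_j\geq b_j$), so summing gives $\leq C\|h\|_{\H_2}^2$. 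For the sub-diagonal piece, use $\xi_{i,i-1}=b_i$ and set $j=i-1$; by \eqref{eq:DB-Qizi} the coefficient of $h_j^2$ equals $\Q_{j+1}b_{j+1}^2=b_{j+1}\,(a_j\Q_1\Q_j)$, and since $b_{j+1}/a_j=Q_j/Q_{j+1}$ is likewise bounded by \eqref{eq:Qi-limit} this is $\leq C\,a_j^2\Q_1^2\Q_j\leq C\,\sigma_j^2\Q_j$ (recall $\sigma_j\geq a_j\Q_1$), again $\leq C\|h\|_{\H_2}^2$. I expect this sub-diagonal estimate to be the main (indeed the only delicate) step: without two-sided control of the ratios $Q_i/Q_{i+1}$ the coefficient $b_{i+1}$ cannot be dominated by $\sigma_i$, and \eqref{eq:Qi-limit} is precisely what rescues it.

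It remains to bound the monomer term, for which I would use the strong form \eqref{eq:L}: $\Q_1 L_1(h)=-\W_1-\Q_1\sum_{k\geq1}a_k\Q_k(h_k+h_1-h_{k+1})$. By Cauchy--Schwarz $|\sum_k a_k\Q_k h_k|\leq(\sum_k a_k^2\Q_k)^{1/2}(\sum_k\Q_k h_k^2)^{1/2}\leq C\|h\|_\H$, and, after rewriting $a_k^2\Q_k^2/\Q_{k+1}=\Q_1^{-2}b_{k+1}^2\Q_{k+1}$ via \eqref{eq:DB-Qizi}, also $|\sum_k a_k\Q_k h_{k+1}|\leq C\|h\|_\H$; since \eqref{eq:A} makes $\sum_k\Q_k$, $\sum_k a_k^2\Q_k$ and $\sum_k b_k^2\Q_k$ all finite (and $\W_1$ involves only $h_1,h_2$), this yields $|L_1(h)|\leq C\|h\|_\H\leq C\|h\|_{\H_2}$ and hence $\Q_1|L_1(h)|^2\leq C\|h\|_{\H_2}^2$. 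Collecting the five contributions gives \eqref{eq:L-bounded-with-weight}, with a constant depending only on $z$ and the coefficients through $\Q_1=z$, the constant $A$ of \eqref{eq:A}, and the bound on the ratios $Q_i/Q_{i\pm1}$ coming from \eqref{eq:Qi-limit}.
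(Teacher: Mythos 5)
Your proof is correct, and it follows a genuinely different route from the paper's. The paper never touches the entry-by-entry tridiagonal expression from Lemma~\ref{lem:sigma}: instead it establishes the equivalent dual estimate $\langle Lh,\phi\rangle \leq C\|h\|_{\H_2}\|\phi\|_\H$ directly from the weak form~\eqref{eq:weak-linear}. After one application of Cauchy--Schwarz this reduces to showing $\sum_i a_i^2\Q_i(h_i+h_1-h_{i+1})^2 \leq C\|h\|_{\H_2}^2$ and $\sum_i\Q_i(\phi_{i+1}-\phi_i-\phi_1)^2\leq C\|\phi\|_\H^2$, each handled by expanding the square, invoking~\eqref{eq:A}, detailed balance~\eqref{eq:DB-Qizi}, and the boundedness of $\Q_{i+1}/\Q_i$ --- exactly the same ingredients you use, but packaged symmetrically in $h$ and $\phi$. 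Two consequences of the different packaging are worth noting. First, the weak form automatically treats $L_1$ together with all other components (the sum over $i$ in~\eqref{eq:weak-linear} carries the boundary contribution), whereas your strong-form computation must bound $\Q_1|L_1(h)|^2$ as a separate case, which you do correctly. Second, your diagnosis that the sub-diagonal term is the \emph{only} delicate step is slightly off: the super-diagonal term $\sum_i\Q_i(a_i\Q_1)^2 h_{i+1}^2$ requires exactly the same two-sided ratio control after shifting indices (you in fact use it there too), and in the paper's version the analogous need shows up in bounding $\sum_i\Q_i\phi_{i+1}^2$. Otherwise the substance of the argument, and the role of~\eqref{eq:Qi-limit} as supplying the ratio bound, is identical to the paper's.
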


\begin{rem}
  In fact, instead of condition \eqref{eq:Qi-limit} it is enough to
  have that $Q_{i+1}/Q_i$ is uniformly bounded in $i$, as can be seen
  from the proof. Notice that the assumptions (H1)--(H4) of
  \cite{JN03} imply the boundedness of $(Q_{i+1}/Q_i)_i$.
\end{rem}

\begin{proof}
  Instead of \eqref{eq:L-bounded-with-weight} we will show,
  equivalently, that
  \begin{equation}
    \label{eq:L-bounded-weak}
    \langle L h, \phi \rangle \leq C \|h\|_{\H_2} \|\phi\|_\H
    \fa \phi \in \H.
  \end{equation}
  In this proof $C$ denotes any positive quantity depending only on
  $z$ and the coefficients $(a_i)_{i \geq 1}$, $(b_i)_{i \geq 1}$,
  possibly changing from line to line. From \eqref{eq:weak-linear} we
  have, using Cauchy-Schwarz's inequality,
  \begin{multline}
    \label{eq:pr1}
    \langle L h, \phi \rangle
    =
    \sum_{i=1}^\infty a_{i}\Q_i \Q_1 \big( h_i +
    h_1 - h_{i+1} \big) \big(\phi_{i+1}-\phi_i-\phi_1\big)
    \\
    \leq
    \Q_1 \left( \sum_{i=1}^\infty a_{i}^2 \Q_i
      \big( h_i +  h_1 - h_{i+1} \big)^2 \right)^{1/2}
    \left(  \sum_{i=1}^\infty \Q_i
      \big(\phi_{i+1}-\phi_i-\phi_1\big)^2 \right)^{1/2}.
  \end{multline}
  The term inside the first parentheses is bounded by
  \begin{multline*}
     \sum_{i=1}^\infty a_{i}^2 \Q_i
     \big( h_i +  h_1 - h_{i+1} \big)^2
     \leq
     3 \sum_{i=1}^\infty a_{i}^2 \Q_i
     \big( h_i^2 +  h_1^2 + h_{i+1}^2 \big)
     \\
     \leq
     3 \sum_{i=1}^\infty \sigma_i^2 \Q_i h_i^2
     +
     3 A h_1^2
     +
     3 \sum_{i=1}^\infty a_i^2 \Q_i h_{i+1}^2
     \\
     \leq C \|h\|_{\H_2}^2
     + \sum_{i=1}^\infty  \frac{\Q_{i+1}}{\Q_{i}}
     b_{i+1}^2 \Q_{i+1} h_{i+1}^2
     \leq C \|h\|_{\H_2}^2,
   \end{multline*}
   where we used \eqref{eq:DB-Qizi} and the fact that $\Q_{i+1} /
   \Q_{i}$ is bounded uniformly in $i$ thanks to
   \eqref{eq:Qi-limit}. A similar reasoning shows that the second
   parenthesis in \eqref{eq:pr1} is bounded by
   \begin{equation*}
      \sum_{i=1}^\infty \Q_i
      \big(\phi_{i+1}-\phi_i-\phi_1\big)^2
      \leq
      C \|\phi\|_{\H}^2,
    \end{equation*}
    which shows \eqref{eq:L-bounded-weak}.
\end{proof}

\medskip

The operator $L$ is defined only on compactly supported sequences
$h$. We will now extend it to a larger domain, so that it becomes a
closed operator, and study its spectrum. We will consider two ways to
do this, later shown to lead to the same result: one of them is to
look at the closure $\mathbf{L}$ of $L$, and the other one is to look
at an associated quadratic form, which will lead to an extension
denoted by $\L$. Let us first describe the latter.

Expression \eqref{eq:weak-linear} suggests the introduction of a
symmetric form $\mathscr{E}$ in $\H$ by setting
\begin{equation*}
  \mathscr{E}(h,g) = -\langle h, L g \rangle
  = \sum_{i\geq 1}
  a_{i}\Q_i \Q_1
  \big(h_{i+1} - h_i - h_1 \big)
  \big(g_{i+1} - g_i - g_1 \big)
  \quad \text{ for } h,g \in \ell_{00},
\end{equation*}
which can be naturally extended to the domain
\begin{equation*}
  \mathscr{D}(\mathscr{E}) =
  \left\{h \in \mathcal{H}\;;\;\mathscr{E}(h,h) < \infty\right\}
\end{equation*}
by setting
\begin{equation}
  \label{eq:def-quadratic}
  \mathscr{E}(h,g) =
  \sum_{i\geq 1} a_{i}\Q_i \Q_1
  \big(h_{i+1} - h_i - h_1 \big)
  \big(g_{i+1} - g_i - g_1 \big)
  \quad \text{ for } h,g \in \mathscr{D}(\mathscr{E}).
\end{equation}
Then, one has the following
\begin{prp}
  \label{prp:quadratic-form}
  Assume Hypothesis \ref{hyp:linear-section}. Define the space $\H_1$
  by
  \begin{equation*}
    \H_1 =
    \bigg\{
    h \in \mathcal{H}\,;\,
    \|h\|_{\mathcal{H}_1}^2 := \sum_{i=1}^\infty (1+\sigma_i) \Q_i
    h_i^2 < \infty\,
    \bigg\}
    ,
  \end{equation*}
  where we recall that $\sigma_i$ was defined in \eqref{eq:sigma-i}.

  The form $\mathscr{E}$ with domain $\mathscr{D}(\mathscr{E})$
  defined by \eqref{eq:def-quadratic} is a closed symmetric form on
  $\mathcal{H}$. Its domain $\mathscr{D}(\mathscr{E})$ contains $\H_1$ and
  there exists $C >0$ (depending only on the coefficients
  $(a_i)_{i\geq 1}$ and $(b_i)_{i \geq 2}$) such that
  \begin{equation}
    \label{DL}
    |\mathscr{E}(h,g)|
    \leq
    C \|h\|_{\mathcal{H}_1}\|g\|_{\mathcal{H}_1}
    \fa h, g \in \mathcal{H}_1.
  \end{equation}
\end{prp}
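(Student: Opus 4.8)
The plan is to establish the three assertions of Proposition \ref{prp:quadratic-form} in order: first the bilinear estimate \eqref{DL}, then the inclusion $\H_1 \subseteq \mathscr{D}(\mathscr{E})$, and finally the closedness of the form. For \eqref{DL} I would start from the defining sum \eqref{eq:def-quadratic}, apply Cauchy--Schwarz to split it as
\[
|\mathscr{E}(h,g)| \le \Q_1 \left( \sum_{i \ge 1} a_i \Q_i (h_{i+1}-h_i-h_1)^2 \right)^{1/2} \left( \sum_{i \ge 1} a_i \Q_i (g_{i+1}-g_i-g_1)^2 \right)^{1/2},
\]
and then bound the generic factor $\sum_i a_i \Q_i (h_{i+1}-h_i-h_1)^2$. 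Expanding the square and using $(x-y-w)^2 \le 3(x^2+y^2+w^2)$ reduces this to controlling $\sum_i a_i \Q_i h_i^2$, $\sum_i a_i \Q_i h_1^2$, and $\sum_i a_i \Q_i h_{i+1}^2$. The first is $\le \sum_i \sigma_i \Q_i h_i^2 \le \|h\|_{\H_1}^2$ directly from \eqref{eq:sigma-i}; the second is $h_1^2 \sum_i a_i \Q_i$, and this sum is finite by \eqref{eq:A} (it is dominated by $A$), so it is bounded by $C \|h\|_{\H_1}^2$; the third is handled exactly as in the proof of Lemma \ref{lem:L-bounded-with-weight}, rewriting $a_i^2 \Q_i = a_i \Q_i \cdot a_i$ is not quite what one wants --- instead note $a_i \Q_i \Q_1 = b_{i+1} \Q_{i+1}$ by \eqref{eq:DB-Qizi}, so $\sum_i a_i \Q_i h_{i+1}^2 = \Q_1^{-1} \sum_i b_{i+1} \Q_{i+1} h_{i+1}^2 \le \Q_1^{-1} \sum_{j \ge 2} \sigma_j \Q_j h_j^2 \le C\|h\|_{\H_1}^2$. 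This proves \eqref{DL}, and in particular taking $g = h$ shows $\mathscr{E}(h,h) \le C\|h\|_{\H_1}^2 < \infty$ for $h \in \H_1$, which is precisely the inclusion $\H_1 \subseteq \mathscr{D}(\mathscr{E})$.

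It remains to show that $(\mathscr{E}, \mathscr{D}(\mathscr{E}))$ is closed, i.e. that $\mathscr{D}(\mathscr{E})$ is complete under the norm $\|h\|^2 + \mathscr{E}(h,h)$ (equivalently, $\|h\|^2 + \mathscr{E}_1(h,h)$ where $\mathscr{E}_1(h,h) := \mathscr{E}(h,h) + \|h\|_\H^2$). Symmetry of $\mathscr{E}$ is immediate from the defining expression. For closedness, suppose $(h^{(n)})_n$ is Cauchy in the form norm. Then it is Cauchy in $\H$, so $h^{(n)} \to h$ in $\H$, and in particular each component $h^{(n)}_i \to h_i$. Also the sequences $u^{(n)} := (h^{(n)}_{i+1} - h^{(n)}_i - h^{(n)}_1)_{i \ge 1}$ form a Cauchy sequence in the weighted space $\ell^2(a \Q \Q_1)$ (this is the content of $\mathscr{E}(h^{(n)} - h^{(m)}, \cdot)$ being small), hence converge to some $u$ in that space. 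Since componentwise $u^{(n)}_i \to h_{i+1} - h_i - h_1$ (using that $a_i \Q_i \Q_1 > 0$ for each fixed $i$, weighted $\ell^2$ convergence forces componentwise convergence), we get $u_i = h_{i+1} - h_i - h_1$, so $\mathscr{E}(h,h) = \|u\|_{\ell^2(a\Q\Q_1)}^2 < \infty$, i.e. $h \in \mathscr{D}(\mathscr{E})$; and $\mathscr{E}(h^{(n)} - h, h^{(n)} - h) = \|u^{(n)} - u\|^2 \to 0$. Hence $h^{(n)} \to h$ in the form norm, proving closedness.

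The main obstacle, and the step requiring the most care, is the closedness argument --- specifically justifying that componentwise limits of the auxiliary sequences $u^{(n)}$ identify correctly with $h_{i+1} - h_i - h_1$ for the limit $h$, and checking that Cauchyness in the form norm really does make $(u^{(n)})_n$ Cauchy in $\ell^2(a\Q\Q_1)$. The latter is essentially the parallelogram identity / bilinearity of $\mathscr{E}$ together with the fact that $\mathscr{E}(g,g) = \|(g_{i+1}-g_i-g_1)_i\|^2_{\ell^2(a\Q\Q_1)}$ is literally a squared Hilbert-space norm of the image of $g$ under the linear map $g \mapsto (g_{i+1}-g_i-g_1)_i$; closedness of $\mathscr{E}$ is then equivalent to this map being closable as an operator from $\H$ into $\ell^2(a\Q\Q_1)$, which holds because the map is continuous from the smaller space $\H_1$ and, more to the point, because componentwise evaluation is continuous on both $\H$ and $\ell^2(a\Q\Q_1)$ (all weights being strictly positive). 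The estimate \eqref{DL} itself is routine given Lemma \ref{lem:L-bounded-with-weight}, differing only in that one works with $\sigma_i$ rather than $\sigma_i^2$ as the weight, so the same algebraic manipulations apply verbatim with one power of $\sigma$ removed.
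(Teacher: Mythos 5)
Your proposal is correct and follows essentially the same route as the paper's proof. The estimate \eqref{DL} is obtained the same way -- the paper reduces to the diagonal $g=h$ by depolarization and then applies $(x+y-z)^2 \le 3(x^2+y^2+z^2)$, identifying the three pieces via $a_i\Q_i\Q_1 = b_{i+1}\Q_{i+1}$ exactly as you do; your Cauchy--Schwarz-then-bound-each-factor presentation is algebraically the same computation. For closedness the paper simply cites \cite[Example 1.2.4]{fukushima} (Fatou's lemma argument), whereas you spell out the completeness argument directly: Cauchy in form norm $\Rightarrow$ convergence in $\H$ and in $\ell^2(a\Q\Q_1)$ of the difference sequences, with componentwise convergence (valid since all weights $\Q_i$, $a_i\Q_i\Q_1$ are strictly positive) used to identify the two limits. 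That explicit argument is sound and is in fact the content of the cited reference, so the two proofs coincide in substance.
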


\begin{proof}
  The fact that $\mathscr{E}$ is symmetric and bilinear is clear.
  Moreover, one proves as in \cite[Example 1.2.4]{fukushima} that
  $(\mathscr{E},\mathscr{D}(\mathscr{E}))$ is closed thanks to Fatou's
  Lemma.  Let us prove \eqref{DL}, which in turn shows that $\H_1
  \subseteq \mathscr{D}(\mathscr{E})$. By a usual depolarization
  argument it is enough to show \eqref{DL} for $g=h$. We have
  \begin{multline*}
    \mathscr{E}(h,h)
    =
    \sum_{i=1}^\infty a_{i }\Q_i \Q_1
    \big(h_i+h_1 -h_{i+1} \big)^2
    \leq
    3 \sum_{i=1}^\infty a_{i }\Q_i \Q_1
    \big(h_i ^2 + h_1 ^2 + h_{i+1} ^2 \big)
    \\
    =3 A h_1^2 +
    3
    \sum_{i=1}^{\infty} a_i \Q_1 \Q_i h_i ^2
    +
    3 \sum_{i=2}^\infty b_i \Q_i h_i^2
    \leq C \|h\|_{\mathcal{H}_1}^2,
  \end{multline*}
  where $A$ is the quantity in \eqref{eq:A}. This proves the desired
  bound.
\end{proof}

\bigskip

Due to the previous Proposition, according to \cite[Theorem 1.3.1 \&
Corollary 1.3.1]{fukushima}, there exists a unique non-positive
definite self-adjoint operator $\mathcal{L}$ on $\mathcal{H}$ such
that $\mathscr{D}(\L) \subset \mathscr{D}(\mathscr{E})$ and
\begin{equation}
  \label{eq:def-EL}
  \mathscr{E}(h,g) = -\langle \L h, g \rangle
  \fa  h \in \mathscr{D}(\L), \ g \in \mathscr{D}(\mathscr{E}).
\end{equation}
More precisely, $\mathscr{D}(\mathscr{E}) = \mathscr{D}(\sqrt{-\L}).$
It is clear that the linear operator $(\L,\D(\L))$ extends the above
linear operator $L$, defined on $\ell_{00}$, and that $\ell_{00}$ is a
core for $\L$. It is also easy to see, under the conditions of Lemma
\ref{lem:L-bounded-with-weight}, that the domain of $\L$ must include
the space $\H_2$ (since $\L$ is a closed operator that extends $L$),
and that the expression of $\L$ in $\H_2$ is still given by
\eqref{eq:L} (since each of the sums converges absolutely in this
space, as deduced from the proof of Lemma
\ref{lem:L-bounded-with-weight}) or alternatively by
\eqref{eq:weak-linear} (for any $\phi \in \ell_{00}$). It is also easy
to see that $0$ is an eigenvalue of $\L$ with explicit eigen-space:

\begin{lem}
  \label{lem:0-eigenspace}
  Assume Hypothesis \ref{hyp:linear-section} and
  \eqref{eq:Qi-limit}. Then $0$ is an eigenvalue of $\L$, with a
  one-dimensional associated eigenspace spanned by the sequence
  defined by $h_i = i$ for $i \geq 1$.
\end{lem}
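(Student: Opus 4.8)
The plan is to verify directly that the sequence $e = (e_i)_i$ with $e_i = i$ belongs to $\D(\L)$ and satisfies $\L e = 0$, and then to show that any element of $\ker \L$ is a scalar multiple of $e$. For the first part, recall from the discussion following Proposition \ref{prp:quadratic-form} that $\mathscr{D}(\mathscr{E}) = \mathscr{D}(\sqrt{-\L})$ and that an element $h \in \mathscr{D}(\mathscr{E})$ lies in $\D(\L)$ with $\L h = 0$ exactly when $\mathscr{E}(h,g) = 0$ for all $g \in \mathscr{D}(\mathscr{E})$. So I would first check that $e \in \mathscr{D}(\mathscr{E})$: using the expression \eqref{eq:def-quadratic}, each bracket $e_{i+1} - e_i - e_1 = (i+1) - i - 1 = 0$, so in fact $\mathscr{E}(e,e) = 0$, which shows both that $e \in \mathscr{D}(\mathscr{E})$ and, since $\mathscr{E}(e,g)$ involves the same vanishing brackets, that $\mathscr{E}(e,g) = 0$ for every $g \in \mathscr{D}(\mathscr{E})$. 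Hence $e \in \D(\L)$ and $\L e = 0$. (One should note $e \in \H$ here — this is where \eqref{eq:A}, or more simply $\sum_i i^2 \Q_i < \infty$, is used, together with \eqref{eq:Qi-limit} guaranteeing $z \le \zs$ makes the weighted $\ell^2$ norm of $(i)_i$ finite; alternatively this is already implicit in Hypothesis \ref{hyp:linear-section}.)

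For the reverse inclusion, suppose $h \in \D(\L)$ with $\L h = 0$. Then $\mathscr{E}(h,h) = -\langle \L h, h\rangle = 0$, so by \eqref{eq:def-quadratic},
\begin{equation*}
  \sum_{i \geq 1} a_i \Q_i \Q_1 \big(h_{i+1} - h_i - h_1\big)^2 = 0.
\end{equation*}
Since all coefficients $a_i, \Q_i, \Q_1$ are strictly positive, this forces $h_{i+1} - h_i - h_1 = 0$ for every $i \geq 1$. This is a linear recursion: writing $h_1 = c$, an easy induction gives $h_i = i\,c$ for all $i \geq 1$, i.e. $h = c\,e$. Therefore $\ker \L = \mathrm{span}\{e\}$, which is one-dimensional, completing the proof.

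The argument is essentially routine; the only point requiring a little care is the membership $e \in \H$ and, more precisely, the identification $\ker \L = \{h : \mathscr{E}(h,h)=0\}$, which I would justify by the general theory of closed symmetric forms (as in \cite{fukushima}): $\mathscr{E}(h,h)=0$ combined with $h \in \mathscr{D}(\mathscr{E})$ implies $\mathscr{E}(h,g)=0$ for all $g$ by Cauchy–Schwarz for the nonnegative form $\mathscr{E}$, hence $h \in \D(\L)$ with $\L h = 0$; the converse direction is immediate from \eqref{eq:def-EL}. I do not expect any serious obstacle here — the structure of $L$ (only diagonal plus first row and column, with the detailed-balance symmetry) makes the quadratic form a perfect square of the discrete "second difference minus $h_1$" operator, whose kernel is visibly the arithmetic sequences.
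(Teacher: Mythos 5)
Your proposal is correct and follows the same two-step outline as the paper's proof: verify $\L e = 0$ for $e_i = i$, and then deduce from $\mathscr{E}(\tilde h,\tilde h)=0$ that any $\tilde h \in \ker\L$ satisfies the recursion $\tilde h_{i+1}=\tilde h_i + \tilde h_1$, hence $\tilde h = \tilde h_1 e$. The one place where the two proofs diverge is how they establish $e \in \D(\L)$. The paper checks that $e \in \H_2$ and then invokes the inclusion $\H_2 \subseteq \D(\L)$, which rests on Lemma \ref{lem:L-bounded-with-weight} and therefore on the hypothesis \eqref{eq:Qi-limit} (or, as the paper remarks, on uniform boundedness of $Q_{i+1}/Q_i$). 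You instead observe that $e \in \H$ (by \eqref{eq:A}) and that every bracket $e_{i+1}-e_i-e_1$ vanishes, so $\mathscr{E}(e,g)=0$ for all $g\in\mathscr{D}(\mathscr{E})$, and you conclude $e\in\D(\L)$ with $\L e=0$ directly from the form characterization \eqref{eq:def-EL}. This route is marginally more economical: it sidesteps the need for \eqref{eq:Qi-limit} entirely, which the paper only uses through the $\H_2\subseteq\D(\L)$ inclusion. The only small blemish in your write-up is the aside suggesting that \eqref{eq:Qi-limit} helps guarantee $e\in\H$ — in fact $\sum_i i^2\Q_i<\infty$ already follows from \eqref{eq:A}, so \eqref{eq:Qi-limit} plays no role in your argument. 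Everything else, including the Cauchy--Schwarz justification for the form characterization, is sound.
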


\begin{proof}
  Under the condition \eqref{eq:A} one sees that $h = (i)_{i \geq 1}
  \in \H_2 \subseteq \mathscr{D}(\L)$ (notice that the latter
  inclusion holds due to the previous discussion and Lemma
  \ref{lem:L-bounded-with-weight}). It is clear then from
  \eqref{eq:def-EL} and \eqref{eq:def-quadratic} that $\ap{\L h, g} =
  0$ for all $g \in \mathscr{D}(\mathscr{E})$, so $\L h = 0$. On the
  other hand, if there is any $\tilde{h} = (\tilde{h}_i)_{i \geq 1}
  \in \mathscr{D}(\L)$ such that $\L(\tilde{h})=0$, then
  $\mathscr{E}(h,h) = 0$ and consequently $h_{i+1} = h_i + h_1$ for
  all $i \geq 1$. This implies that $\tilde{h}_i = i h_1$ for $i \geq
  1$, so $\tilde{h}$ is a multiple of $h$.
\end{proof}

\begin{rem}
  Again, Lemma \ref{lem:0-eigenspace} holds with the milder condition
  that $Q_{i+1}/Q_i$ is uniformly bounded instead of
  \eqref{eq:Qi-limit}.
\end{rem}
It is well-known \cite{kato,fukushima} that the Dirichlet form
$(\mathscr{E},\mathscr{D}(\mathscr{E}))$ (or, equivalently, the
operator $(\L,\D(\L))$), generates a strongly continuous semigroup of
contractions $\{S_t,\;t \geq 0\}$ in $\mathcal{H}=\ell^2(\Q)$. In
particular, for any $h^0 = (h_i^0)_{i \geq 1} \in \mathcal{H}$ the
linearized problem
\begin{equation}
  \label{eq:linearized}
  \dfrac{\d }{\d t} h_i(t)=\L_i(h(t)),\qquad h_i(0)=h_i^{0},
\end{equation}
admits a unique mild solution $h(t)=(h_i(t))_{i \geq 1}$ given by
$h(t)= S_th^0$ satisfying
$$\|h(t)\| \leq \|h^0\| \fa t \geq 0.$$
Moreover, $S_t(\mathcal{H}) \subset \mathscr{D}(\mathscr{E})$ for any
$t > 0$  and, if $h^0=(h_i^0)_{i \geq 1}
\in \D(\L)$ then, for any $t \geq 0$, $\sup_{s\in
  [0,t)}\|h(s)\|_{\mathscr{D}(\mathscr{E})} < \infty$ (we refer to
\cite[Chapter 1]{fukushima} for details). In fact, for $h^0 \in
\D(\L)$ we have that $h$ is $\mathcal{C}^1$ in time, $h(t) \in
\mathscr{D}(\L)$ for all $t \geq 0$ and \eqref{eq:linearized} is
satisfied pointwise in time. Hence, for $h^0 \in
\D(\L)$ we have
\begin{equation*}
  \frac{\d}{\d t} \|h(t)\|^2 = \langle h, \L h \rangle =
  -\mathscr{E}(h,h)
  =
  -\sum_{i\geq 1} a_{i}\Q_i \Q_1
  \big(h_{i+1} - h_i - h_1 \big)^2
  \fa t \geq 0.
\end{equation*}

The advantage of the approach involving the quadratic form is that we
obtain, in a simple way, a self-adjoint extension of $L$ on which we
have a lot of information. However, it is not easy to give explicitly
the domain $\mathscr{D}(\L)$ of the operator $\L$. We consider also a
different approach based on the closure of $L$. Since $L$ is
symmetric, it is closable and we define $\mathbf{L}$ as its closure:

\begin{defi}[The operator $\mathbf{L}$]
  \label{def:L}
  Assume Hypothesis \ref{hyp:linear-section}. We define the linear
  operator $\mathbf{L}$, with domain $\mathscr{D}(\mathbf{L})$, as the
  closure in $\mathcal{H}$ of the linear operator $L$ (with domain
  $\ell_{00}$).
\end{defi}

The following proposition gives a certain structure to $\mathbf{L}$
that will allow us to identify its domain, prove it has a spectral gap
(in a non-constructive way, using Weyl's theorem) and show that in
fact $\mathbf{L} = \L$:

\begin{prp}
  \label{prp:L-decomposition}
  Assume Hypothesis \ref{hyp:linear-section} and conditions
  \eqref{eq:ai-bounded-below}, \eqref{eq:Qi-limit} and
  \eqref{eq:ai-limit}. Assume also that $z < \zs$. Then the linear
  operator $\mathbf{L}$ is \emph{self-adjoint} with domain
  $\mathscr{D}(\mathbf{L})=\mathcal{H}_2$ (as defined in
  \eqref{eq:def-DL}) and, given $\delta > 0$, it can be written as
  \begin{equation}
    \label{eq:L-decomposition}
    \mathbf{L}(h)
    = \mathbf{L}^{\mathrm{C}}(h) + \mathbf{L}^{\mathrm{M}}(h)
    \qquad \text{ for all } h \in \mathscr{D}(\mathbf{L})
  \end{equation}
  such that
  \begin{enumerate}
  \item $\mathbf{L}^{\mathrm{C}}$ is a compact operator on $\mathcal{H}$.
  \item $\mathbf{L}^{\mathrm{M}}$ is a self-adjoint operator with domain
    $\mathcal{H}_2$, and with spectrum
    \begin{equation*}
      \mathfrak{S}(\mathbf{L}^{\mathrm{M}})
      \subseteq (-\infty, -\lambda_{\mathrm{M}}]
    \end{equation*}
    where
    \begin{equation*}
      \lambda_\mathrm{M} :=
      \underline{\sigma}\,\left(1-2\frac{\sqrt{\ell}}{1+\ell}\right)
      - \delta,
      \qquad \ell := \zs / z
    \end{equation*}
    and
    \begin{equation*}
      \underline{\sigma} := \inf_{k \geq 1} \sigma_k,
    \end{equation*}
    which is strictly positive due to \eqref{eq:ai-bounded-below}.
  \end{enumerate}
\end{prp}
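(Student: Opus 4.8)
Let me sketch the proof structure. We have $L$ acting on $\ell_{00}$ via $L_i(h) = -\sigma_i h_i + \sum_j \xi_{i,j} h_j$ (Lemma \ref{lem:sigma}). The natural decomposition is diagonal part versus off-diagonal part. Define $\mathbf{L}^{\mathrm{M}}$ formally as the part containing the diagonal multiplication $-\sigma_i h_i$ together with the tridiagonal coupling $\xi_{i,i-1}, \xi_{i,i+1}$ — i.e., everything except the "first row/column" terms $\xi_{i,1}$ and $\xi_{1,i}$ for large $i$ — and put all the bounded, "rank-one-like" remainder into $\mathbf{L}^{\mathrm{C}}$. More precisely, I would fix a large integer $N$ and split according to whether indices are $\leq N$ or $> N$: let $\mathbf{L}^{\mathrm{C}}$ collect all matrix entries $\Q_i\xi_{i,j}$ with $\min(i,j)\leq N$ together with the diagonal entries $-\sigma_i$ for $i\leq N$ (a finite-rank perturbation, hence compact on $\mathcal H$), and let $\mathbf{L}^{\mathrm{M}}$ be the remaining infinite "tail" piece, which for $i>N$ is purely tridiagonal: $\mathbf{L}^{\mathrm{M}}_i(h) = -\sigma_i h_i + \xi_{i,i-1}h_{i-1} + \xi_{i,i+1}h_{i+1}$. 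The key point making this work is that $\xi_{i,1}\to$ (something small) is false — rather, $\Q_i\xi_{i,1} = b_i\Q_i - a_i\Q_1\Q_i$, and $\Q_i\xi_{1,i}=\Q_i\xi_{i,1}$, so the $i$-th "first-row" contribution to $\|\cdot\|_\H$ scales like $\Q_i(b_i - a_i\Q_1)^2$; using \eqref{eq:A} this tail is a Hilbert–Schmidt (hence compact) correction, so in fact one does not even need the cutoff $N$ for $\mathbf{L}^{\mathrm{C}}$ — I would present it as: $\mathbf{L}^{\mathrm{C}}$ is the operator with matrix supported on the first row and first column (minus the genuinely tridiagonal-diagonal skeleton), shown compact directly via \eqref{eq:A}, and $\mathbf{L}^{\mathrm{M}}$ is the tridiagonal Jacobi-type operator with diagonal $-\sigma_i$ and off-diagonal $\xi_{i,i\pm1}$.

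For part (1), compactness of $\mathbf{L}^{\mathrm{C}}$: I would verify it is Hilbert–Schmidt with respect to the weighted inner product, i.e., that $\sum_{i,j}\Q_i\Q_j^{-1}(\text{entry})^2<\infty$, which reduces via \eqref{eq:xi-i1}–\eqref{eq:xi-12} and \eqref{eq:DB-Qizi} to finiteness of $\sum_i \Q_i(a_i+b_i)^2$-type sums, guaranteed by \eqref{eq:A}. (Alternatively, approximate by finite-rank truncations and estimate the tail.)

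For part (2), the spectral bound on $\mathbf{L}^{\mathrm{M}}$: this is the heart of the matter. $\mathbf{L}^{\mathrm{M}}$ is a self-adjoint tridiagonal operator; I want to show $\langle -\mathbf{L}^{\mathrm{M}}h,h\rangle \geq \lambda_{\mathrm M}\|h\|_\H^2$. Write out $\langle -\mathbf{L}^{\mathrm{M}}h,h\rangle = \sum_i \sigma_i\Q_i h_i^2 - 2\sum_i \Q_i\xi_{i,i+1}h_i h_{i+1}$ (using symmetry of $\Q_i\xi_{i,j}$, the cross terms pair up). Now $\Q_i\xi_{i,i+1} = b_{i+1}\Q_{i+1}$ by \eqref{eq:xi-ij}, and $b_{i+1}\Q_{i+1} = a_i\Q_1\Q_i$ by \eqref{eq:DB-Qizi}. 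The estimate $2|h_i h_{i+1}| \leq \theta_i h_i^2 + \theta_i^{-1}h_{i+1}^2$ for a well-chosen weight $\theta_i$ — and here is where $\ell = \zs/z$ enters, since $\Q_{i+1}/\Q_i \to 1/\ell$ by \eqref{eq:Qi-limit} — lets one absorb the off-diagonal into the diagonal. The sharp choice of $\theta_i$ (essentially a geometric-type weight $\sim \sqrt{\Q_i/\Q_{i+1}}\sim\sqrt{\ell}$, or rather a two-sided weighting reflecting the symmetrized coupling) produces exactly the factor $1 - 2\sqrt{\ell}/(1+\ell)$; the $-\delta$ absorbs the error from $\Q_{i+1}/\Q_i$ only converging to $1/\ell$ rather than equalling it (so one splits off finitely many indices, another finite-rank/compact correction that can be shifted into $\mathbf{L}^{\mathrm{C}}$), and $\underline\sigma>0$ by \eqref{eq:ai-bounded-below} since $\sigma_i = a_i\Q_1 + b_i \geq a_i\Q_1 \geq \underline a\,\Q_1$. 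I expect the main obstacle to be getting the \emph{sharp} constant $1 - 2\sqrt\ell/(1+\ell)$ rather than a cruder bound: this requires the exactly optimal weighting in the Cauchy–Schwarz/Young step, matched to the asymptotic ratio of the weights $\Q_i$, together with a clean bookkeeping of the finitely many "bad" low indices into the compact part. A secondary technical point is confirming $\mathscr D(\mathbf L) = \mathbf L^{\mathrm M}$'s domain $= \H_2$ and self-adjointness: one argues $\mathbf{L}^{\mathrm{M}}$ is self-adjoint on $\H_2$ (diagonal operator perturbed by a relatively bounded, indeed relatively compact, tridiagonal piece — Kato–Rellich), then $\mathbf{L} = \mathbf{L}^{\mathrm{M}} + \mathbf{L}^{\mathrm{C}}$ is self-adjoint on the same domain, and finally that this self-adjoint extension coincides with the closure of $L\!\restriction_{\ell_{00}}$ because $\ell_{00}$ is a core (dense in $\H_2$ in the graph norm, using Lemma \ref{lem:L-bounded-with-weight}).
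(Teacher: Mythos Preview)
Your proposal is essentially correct and follows the same route as the paper: split $\mathbf{L}$ into a compact piece (first row/column plus finitely many low-index off-diagonal entries) and a tridiagonal ``tail'' piece $\mathbf{L}^{\mathrm{M}}$, show the latter is self-adjoint on $\H_2$ via Kato--Rellich by writing $\mathbf{L}^{\mathrm{M}} = \mathbf{T} + \mathbf{S}$ with $\mathbf{T}$ the multiplication by $-\sigma_i$, and locate its spectrum by comparing $\mathbf{S}$ to $\mathbf{T}$.

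Two remarks. First, a small pitfall in your initial cutoff-$N$ description: if you move the diagonal entries $-\sigma_i$ for $i \le N$ into $\mathbf{L}^{\mathrm{C}}$ as you first write, then $\mathbf{L}^{\mathrm{M}}$ vanishes identically on the first $N$ coordinates and hence has $0$ in its spectrum, contradicting the claimed bound $\mathfrak{S}(\mathbf{L}^{\mathrm{M}}) \subseteq (-\infty,-\lambda_{\mathrm{M}}]$. The paper keeps the \emph{entire} diagonal in $\mathbf{L}^{\mathrm{M}}$ and moves only low-index off-diagonal entries to $\mathbf{L}^{\mathrm{C}}$; your later ``full Jacobi plus finite-rank correction'' formulation implicitly does the same, so this is a matter of tidying up rather than a real gap.

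Second, the paper obtains the sharp constant $1 - 2\sqrt\ell/(1+\ell)$ not from your direct quadratic-form Young-inequality estimate but from the operator-norm relative bound $\|\mathbf{S}h\|_\H \le \theta\,\|\mathbf{T}h\|_\H$ with $\theta = (1+\epsilon)\tfrac{2\sqrt\ell}{1+\ell}$, computed by showing that both $\nu_i/\sigma_i^2$ and $\mu_i/\sigma_i^2$ tend to $\ell/(1+\ell)^2$ via \eqref{eq:Qi-limit} and \eqref{eq:ai-limit}; Kato--Rellich then gives self-adjointness, and a companion perturbation result for upper bounds gives the spectral estimate in one stroke. Your quadratic-form route with the optimal weight $\theta = \sqrt\ell$ in Young's inequality reaches the same constant and is arguably more transparent (indeed, the paper uses exactly that approach later for the $z=\zs$ case in Lemma~\ref{lem:deltak}), but since you need the relative bound for Kato--Rellich anyway, the paper's organization avoids duplicating the estimate.
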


\begin{proof}
  Take an integer $N \geq 1$, to be fixed later. We define, for all $i
  \geq 1$,
  \begin{gather}
    \label{eq:def-LC-0}
    \mathbf{L}^{\mathrm{C}}_i(h)
    :=
    \sum_{j=1}^\infty \chi_{\{\min\{i,j\} \leq N\}} \, \xi_{i,j} h_j
    \quad \text{ for } h \in \H,
    \\
    \label{eq:def-LM-0}
    \mathbf{L}^{\mathrm{M}}_i(h)
    :=
    - \sigma_i h_i
    + \sum_{j=1}^\infty \chi_{\{\min\{i,j\} > N\}} \, \xi_{i,j} h_j
    \quad \text{ for } h \in \H_2,
  \end{gather}
  We recall that the notation $\sigma_i$ and $\xi_{i,j}$ was defined
  in \eqref{eq:sigma-i}--\eqref{eq:xi-ij}. The notation
  $\chi_{\{\cdots\}}$ represents a function which is equal to $1$ when the
  condition in the brackets is satisfied, $0$ otherwise. In a more
  explicit way,
  \begin{equation}
    \begin{cases}
      \label{eq:def-LC}
      \mathbf{L}^{\mathrm{C}}_1(h)
      = \sum_{j=2}^\infty\xi_{1,j}\,h_j,
      \qquad
      \mathbf{L}^{\mathrm{C}}_2(h)
      = \xi_{2,1}\,h_1 + a_2 \Q_1 \,h_3
      \\
      \mathbf{L}^{\mathrm{C}}_i(h) =
      \xi_{i,1}h_1 + b_i\,h_{i-1} \,\chi_{\{i \leq N+1\}}
      + a_i\Q_1\,h_{i+1} \,\chi_{\{i\leq N\}}
      \quad \text{ for } i > 2;
    \end{cases}
  \end{equation}
  while
  \begin{equation}
    \label{eq:def-LM}
    \mathbf{L}^{\mathrm{M}}_i(h)
    = -\sigma_i\,h_i+b_i\,h_{i-1} \,\chi_{\{i > N+1\}}
    + a_i\,\Q_1\,h_{i+1} \,\chi_{\{i >N\}}
    \qquad (i \geq 1).
  \end{equation}
  With calculations similar to those in the proof of Lemma
  \ref{lem:L-bounded-with-weight} one sees that the sums in
  \eqref{eq:def-LC-0}--\eqref{eq:def-LM-0} converge, that
  $\mathbf{L}^{\mathrm{C}} : \mathcal{H}\to \mathcal{H}$ is bounded and
  that there exists $C >0$ such that
  $\|\mathbf{L}^{\mathrm{M}}\,h\|_{\mathcal{H}} \leq C\|h\|_{\mathcal{H}_2}$ for any
  $h \in \mathcal{H}_2$. Moreover, both $\mathbf{L}^\mathrm{C}$ and
  $\mathbf{L}^{\mathrm{M}}$ are easily seen to be symmetric operators.
  One sees that $\mathbf{L}^{\mathrm{C}} : \mathcal{H}\to \mathcal{H}$
  is a finite-rank operator, since except for its first component
  $\mathbf{L}^{\mathrm{C}}_1$ it only depends on a finite number of
  components of $h$. In particular, it is a compact operator. Since it
  is also symmetric, $\mathbf{L}^\mathrm{C}$ is thus self-adjoint.


  Let us investigate now the remaining part
  $\mathbf{L}^{\mathrm{M}}$. We will prove that, for $N >1$ large
  enough, $\mathbf{L}^{\mathrm{M}}$ is self-adjoint thanks to
  Kato-Rellich's theorem \cite[Theorem 4.3, p. 287]{kato}. To do so,
  define the multiplication operator $\mathbf{T}$ with domain $\H_2$ by
  \begin{equation*}
    \mathbf{T}_i(h) = -\sigma_i\,h_i
    \quad \text{ for } h \in \H_2,\ i \geq 1.
  \end{equation*}
  Being a multiplication operator, $\mathbf{T}$ is self-adjoint in
  $\mathcal{H}$. Define the operator $\mathbf{S}$, also with domain
  $\H_2$, as the remaining part in \eqref{eq:def-LM}:
  \begin{equation*}
    \mathbf{S}_i(h)
    = b_i\,h_{i-1}\chi_{\{i > N+1\}}
    + a_i\,\Q_1\,h_{i+1}\chi_{\{i >N\}}
    \quad \text{ for } h \in \H_2,\ i \geq 1,
  \end{equation*}
  so that $\mathbf{L}^M = \mathbf{T} + \mathbf{S}$. Clearly,
  $\mathbf{S}$ is symmetric. Let us now prove that $\mathbf{S}$ is
  $\mathbf{T}$-bounded with relative bound smaller than one. To do so,
  we compute
  \begin{equation*}
    \|\mathbf{T}(h)\|_\H =
    \left( \sum_{i=1}^\infty \sigma_i^2 \Q_i h_i^2 \right)^{1/2},
  \end{equation*}
  while
  \begin{multline}
    \label{Sh}
    \|\mathbf{S}(h)\|_\H =
    \left( \sum_{i=1}^\infty |\mathbf{S}_i(h)|^2 \Q_i \right)^{1/2}
    \leq
    \left(
      \sum_{i > N+1} b_i^2 h_{i-1}^2\Q_i
    \right)^{1/2}
    + \left(
      \sum_{i > N} (a_i\Q_1)^2 h_{i+1}^2\,\Q_i
    \right)^{1/2}
    \\
    =
    \left(
      \sum_{i > N}
      b_{i+1}^2 \Q_{i+1} h_i^2
    \right)^{1/2}
    + \left(
      \sum_{i > N+1}
      a_{i-1}^2 \Q_1^2 \Q_{i-1}
      h_i^2 \right)^{1/2}
    \\
    =
    \left(
      \sum_{i > N}
      \mu_i
      \Q_i h_i^2 \right)^{1/2}
    + \left(
      \sum_{i > N+1}
      \nu_i
      \Q_i h_i^2 \right)^{1/2}
  \end{multline}
  where
  \begin{equation*}
    \nu_i := a_{i-1}^2 \Q_1^2 \frac{\Q_{i-1}}{\Q_i},
    \qquad
    \nu_{i-1}
    = a_i^2 \Q_1^2 \frac{\Q_i}{\Q_{i+1}}
    = b_{i+1}^2 \frac{\Q_{i+1}}{\Q_i}.
  \end{equation*}
  Observe that, since $\sigma_i = a_i \Q_1 + b_i = \Q_1 \big(a_i +
  a_{i-1} \frac{\Q_{i-1}}{\Q_i}\big)$, and calling $\ell
  := \zs / z$,
  \begin{equation*}
    \frac{\nu_i}{\sigma_i^2}
    =
    \frac{a_{i-1}^2}{\left(a_i + a_{i-1} \frac{\Q_{i-1}}{\Q_i}\right)^2}
    \frac{\Q_{i-1}}{\Q_i}
    \longrightarrow \frac{\ell}{(1 + \ell)^2}
    \quad \text{ as } i \to +\infty,
  \end{equation*}
  due to \eqref{eq:Qi}, \eqref{eq:Qi-limit} (which implies $\Q_{i} /
  \Q_{i+1} \to \ell$ as $i \to +\infty$) and
  \eqref{eq:ai-limit}. Similarly,
    \begin{equation*}
    \frac{\nu_{i-1}}{\sigma_i^2}
    \longrightarrow \frac{\ell}{(1 + \ell)^2}
    \quad \text{ as } i \to +\infty.
  \end{equation*}
  Hence, from \eqref{Sh} we see that for any $\epsilon> 0$ we can find
  $N > 0$ such that
  \begin{equation}
    \label{eq:Sth}
    \|\mathbf{S}(h)\|_\H
    \leq
    (1+\epsilon) \frac{2 \sqrt{\ell}}{1+\ell}
    \left(
      \sum_{i > N}
      \sigma_i^2 \Q_i h_i^2
    \right)^{1/2}
    \leq
    (1+\epsilon) \frac{2 \sqrt{\ell}}{1+\ell}
    \,
    \|\mathbf{T}(h)\|_\H
    =: \theta \|\mathbf{T}(h)\|_\H.
  \end{equation}
  Since we are assuming $z < \zs$, we have $\ell > 1$ and
  it is possible to choose $\epsilon > 0$ such that $\theta = 2
  (1+\epsilon) \sqrt{\ell} / (1+\ell) < 1$.  In other words,
  $\mathbf{S}$ is $\mathbf{T}$-bounded with relative bound $\theta$ strictly
  smaller than $1$. According to Kato-Rellich's Theorem
  $\mathbf{L}^{\mathrm{M}} = \mathbf{S} + \mathbf{T}$ with domain
  $\D(\mathbf{L}^{\mathrm{M}}) = \D(\mathbf{T}) = \mathcal{H}_2$ is
  self-adjoint.

  Let us now show that the spectrum of $\mathbf{L}^{\mathrm{M}}$ is to
  the left of $\lambda_\mathrm{M}$. Since $\underline{\sigma} =
  \inf_{i \geq 1}\sigma_i > 0$, one has $\langle
  \mathbf{T}(h),h\rangle \leq -\underline{\sigma} \|h\|^2$ for any $h
  \in \D(\mathbf{T})$. With the terminology of \cite{kato}, this
  exactly means that the multiplication operator $\mathbf{T}$ is
  bounded from above with upper bound $\gamma_\mathbf{T} =
  -\underline{\sigma}$. Combining inequality \eqref{eq:Sth} with
  \cite[Theorem 4.11, p. 291]{kato}, $\mathbf{L}^{\mathrm{M}}$ is
  bounded from above with upper bound
  $\gamma_{\mathbf{L}^{\mathrm{M}}}$ such that
  $\gamma_{\mathbf{L}^{\mathrm{M}}} \geq \gamma_\mathbf{T} -
  \theta|\gamma_\mathbf{T}| = -(1-\theta) \underline{\sigma}$, i.e.
  \begin{equation*}
    \langle \mathbf{L}^{\mathrm{M}} h,\,h \rangle
    \leq -(1-\theta) \underline{\sigma} \|h\|^2
    \fa h \in \D(\mathbf{T}).
  \end{equation*}
  This proves in particular point \textit{(2)} of the Proposition with
  $\lambda_\mathrm{M} = 1-\theta > 0$. Notice that we can take $\epsilon
  > 0$ arbitrarily small, so $\lambda_M$ can be arbitrarily close to
  $2\sqrt{\ell}/(1+\ell)$.

  Now the identity $L(h) = \mathbf{L}^{\mathrm{C}}(h) +
  \mathbf{L}^{\mathrm{M}}(h)$ is obviously true for $h \in
  \ell_{00}$. Then, it is clear from the bounds above that the closure
  of $L$ has domain $\H_2$ and is equal to $\mathbf{L}^{\mathrm{C}} +
  \mathbf{L}^{\mathrm{M}}$, which shows
  \eqref{eq:L-decomposition}. Since $\mathbf{L}$ is a compact
  self-adjoint perturbation of $\mathbf{L}^{\mathrm{M}}$, Weyl's
  Theorem ensures that $\mathbf{L}$ is self-adjoint.
\end{proof}

\begin{cor}
  \label{cor:non-constructive-gap}
  Assume the conditions of Proposition
  \ref{prp:L-decomposition}. There exists $\lambda_0 >0$ (defined in a
  non-constructive way) such that
  \begin{equation}
    \label{eq: spcgap}
    \langle\, \mathbf{L} h, h \rangle \leq -\lambda_0 \|h\|^2
    \qquad
    \fa h \in \mathcal{H}_2 \text{ such that }
    \sum_{i=1}^\infty i h_i \Q_i = 0.
  \end{equation}
  In addition, the operators $\mathbf{L}$ and $\L$ are equal. Finally, $\mathbf{L}$ is the generator of a $C_0$-semigroup $(\mathcal{S}_t)_{t \geq 0}$ in $\mathcal{H}$ such that
  $$\|\mathcal{S}_t h\|_\mathcal{H} \leq e^{-\lambda_0 t} \|h\|_\mathcal{H} \qquad \forall h \in \mathcal{H} \text{ such that }
    \sum_{i=1}^\infty i h_i \Q_i = 0, \qquad t \geq 0.$$
\end{cor}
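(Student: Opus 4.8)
The plan is to derive everything in Corollary~\ref{cor:non-constructive-gap} from the structural decomposition already proved in Proposition~\ref{prp:L-decomposition}, together with the self-adjointness of $\mathbf{L}$ and standard semigroup theory. First I would establish the non-constructive spectral gap \eqref{eq: spcgap}. Since $\mathbf{L} = \mathbf{L}^{\mathrm{C}} + \mathbf{L}^{\mathrm{M}}$ with $\mathbf{L}^{\mathrm{C}}$ compact and $\mathfrak{S}(\mathbf{L}^{\mathrm{M}}) \subseteq (-\infty,-\lambda_{\mathrm{M}}]$, Weyl's theorem on the invariance of the essential spectrum under compact perturbations gives $\mathfrak{S}_{\mathrm{ess}}(\mathbf{L}) = \mathfrak{S}_{\mathrm{ess}}(\mathbf{L}^{\mathrm{M}}) \subseteq (-\infty,-\lambda_{\mathrm{M}}]$. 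Hence any spectrum of $\mathbf{L}$ in $(-\lambda_{\mathrm{M}},0]$ consists of isolated eigenvalues of finite multiplicity, and since $\mathbf{L}$ is self-adjoint and bounded above by $0$ (dissipativity passes to the closure), this part of the spectrum is a discrete subset of $[-\lambda_{\mathrm{M}},0]$ with no accumulation point except possibly at $-\lambda_{\mathrm{M}}$. By Lemma~\ref{lem:0-eigenspace}, $0$ is a simple eigenvalue with eigenspace spanned by $(i)_{i\geq 1}$. Therefore on the orthogonal complement of this eigenvector---which is exactly $\{h : \sum_i i h_i \Q_i = 0\}$, the condition \eqref{ortho}---the supremum of the spectrum is some $-\lambda_0 < 0$ (either the next eigenvalue below $0$, or $-\lambda_{\mathrm{M}}$ if there are no eigenvalues in between). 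By the spectral theorem applied to the self-adjoint operator $\mathbf{L}$ restricted to this invariant subspace, $\langle \mathbf{L} h, h\rangle \leq -\lambda_0 \|h\|^2$ for all $h$ in (the domain intersected with) that subspace, which is \eqref{eq: spcgap}. The subspace is indeed invariant because $\mathbf{L}$ is self-adjoint and annihilates $(i)_i$, so it maps the orthogonal complement into itself.

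Next I would prove $\mathbf{L} = \mathcal{L}$. Both are self-adjoint extensions of the symmetric operator $L$ defined on $\ell_{00}$: $\mathbf{L}$ by construction as the closure (Definition~\ref{def:L}, and Proposition~\ref{prp:L-decomposition} shows the closure is already self-adjoint), and $\mathcal{L}$ via the Dirichlet form (the discussion after Proposition~\ref{prp:quadratic-form}), with $\ell_{00}$ a core for $\mathcal{L}$. A symmetric operator whose closure is self-adjoint is essentially self-adjoint, hence has a \emph{unique} self-adjoint extension; since $\mathcal{L}$ is a self-adjoint extension of $L$, we must have $\mathbf{L} = \overline{L} = \mathcal{L}$. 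Alternatively one can argue directly that $\mathcal{L}$ restricted to $\ell_{00}$ equals $L$ and that $\ell_{00}$ is a core for both, forcing equality; but the essential-self-adjointness route is cleaner.

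Finally, the semigroup statement: $\mathbf{L}$ is self-adjoint and, by \eqref{eq: spcgap} together with the fact that $\langle \mathbf{L} h, h\rangle \leq 0$ on all of $\mathscr{D}(\mathbf{L})$, it is a non-positive self-adjoint operator, hence (by the spectral theorem / Hille--Yosida, or simply because $-\mathbf{L} \geq 0$ is m-accretive) it generates a $C_0$-semigroup of contractions $(\mathcal{S}_t)_{t\geq 0} = (e^{t\mathbf{L}})_{t\geq 0}$ on $\mathcal{H}$. The decay estimate follows from functional calculus: the closed subspace $\mathcal{H}_0 := \{h : \sum_i i h_i \Q_i = 0\}$ is invariant under $\mathcal{S}_t$ (being the orthogonal complement of the kernel of the self-adjoint $\mathbf{L}$), and on $\mathcal{H}_0$ the spectrum of $\mathbf{L}$ lies in $(-\infty,-\lambda_0]$, so $\|e^{t\mathbf{L}} h\| = \|e^{t\mathbf{L}}\!\!\restriction_{\mathcal{H}_0} h\| \leq e^{-\lambda_0 t}\|h\|$ for $h \in \mathcal{H}_0$ by the spectral mapping property for bounded functions of self-adjoint operators. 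The main obstacle, such as it is, is purely bookkeeping: being careful that the subspace $\mathcal{H}_0$ is genuinely invariant (which uses self-adjointness of $\mathbf{L}$, not merely symmetry) and that the spectral gap claim \eqref{eq: spcgap} is stated on $\mathscr{D}(\mathbf{L})\cap\mathcal{H}_0$ with the quadratic form, not on eigenvectors only---both handled by the spectral theorem for the self-adjoint operator $\mathbf{L}$. No hard estimate is needed here; all the analytic work was done in Proposition~\ref{prp:L-decomposition}.
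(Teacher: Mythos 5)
Your proof is correct and follows essentially the same line as the paper: Weyl's theorem on the compact perturbation $\mathbf{L} = \mathbf{L}^{\mathrm{C}} + \mathbf{L}^{\mathrm{M}}$ to locate the essential spectrum, non-positivity and the identification of $\ker\mathbf{L}$ from Lemma~\ref{lem:0-eigenspace} to isolate a positive gap, essential self-adjointness of $L$ to conclude $\mathbf{L}=\L$, and the spectral theorem for the semigroup bound. The only difference is that you spell out some steps (invariance of the orthogonal complement of the kernel, the functional-calculus derivation of the decay estimate) that the paper compresses into ``a classical consequence,'' which is fine.
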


\begin{proof}
  Since $\mathbf{L}^{\mathrm{C}}$ is a compact perturbation of
  $\mathbf{L}^{\mathrm{M}}$, Weyl's Theorem ensures that the operator
  $\mathbf{L}$ is self-adjoint and its essential spectrum
  $\mathfrak{S}_{\mathrm{ess}}(\mathbf{L})$ coincides with that of
  $\mathbf{L}^{\mathrm{M}}$, in particular
  $\mathfrak{S}_{\mathrm{ess}}(\mathbf{L}) \subseteq (-\infty,
  -\lambda_\mathrm{M}]$. Consequently, the part of
  $\mathfrak{S}(\mathbf{L})$ contained in $(-\lambda_{\mathrm{M}},
  +\infty)$ is a set of isolated eigenvalues of finite
  multiplicity. As we also know that $\mathbf{L}$ is non-positive, we
  have $\mathfrak{S}(\mathbf{L}) \subseteq (-\infty, 0]$, having only
  a finite number of eigenvalues in $(-\lambda_{\mathrm{M}}, 0]$. From
  Lemma \ref{lem:0-eigenspace} we know that $0$ is an eigenvalue of
  $\mathbf{L}$ with eigenspace spanned by $(i)_{i \geq 1}$. Since
  $\lambda_\mathrm{M} > 0$, $\mathbf{L}$ must have a strictly positive
  spectral gap $\lambda_0$, which gives \eqref{eq: spcgap}.

  Since $\mathbf{L}$, the closure of $L$, is self-adjoint, the
  operator $L$ is by definition essentially self-adjoint. As such, it
  only has one possible self-adjoint extension. Since $\L$ is another
  such extension, we have $\mathbf{L} = \L$. The final part of the result is a classical consequence of \eqref{eq: spcgap}.
\end{proof}

Proposition \ref{prp:L-decomposition} does not apply in the case $z =
\zs$. In order to include that case we need to obtain more
delicate estimates:
\begin{lem}
  \label{lem:deltak}
  Assume the hypotheses of Proposition \ref{prp:L-decomposition}, but
  take $z = \zs$. Define then
  \begin{equation*}
    \delta_k = \frac{b_k}{a_k \Q_1}-1
    = \frac{a_{k-1}}{ a_k} \frac {\Q_{k-1}}{\Q_k}-1
    \quad \text{ for } k \geq 1.
  \end{equation*}
  and assume that
  \begin{equation}
    \label{eq:deltak2}
    \liminf_{k \to \infty} \delta_k \, \sqrt{a_k} > 0.
  \end{equation}
  Then the conclusions of Proposition \ref{prp:L-decomposition} still
  hold true for $\mathbf{L}$, for some $\lambda_\mathrm{M} > 0$. As a
  consequence, the conclusions of Corollary
  \ref{cor:non-constructive-gap} are true (i.e., $\mathbf{L}$ has a
  positive spectral gap and $\mathbf{L} = \L$).
\end{lem}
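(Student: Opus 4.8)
The plan is to run the proof of Proposition~\ref{prp:L-decomposition} essentially unchanged, keeping the same decomposition $\mathbf{L}=\mathbf{L}^{\mathrm{C}}+\mathbf{L}^{\mathrm{M}}$ of \eqref{eq:def-LC-0}--\eqref{eq:def-LM-0} (the value of $N$ is immaterial here). The operator $\mathbf{L}^{\mathrm{C}}$ is compact exactly as in that proof. The only step that fails at $z=\zs$ is the Kato--Rellich argument for $\mathbf{L}^{\mathrm{M}}$: since now $\ell=\zs/z=1$, the relative bound $2\sqrt{\ell}/(1+\ell)$ equals $1$ and is no longer $<1$. What that step was providing is (a) self-adjointness of $\mathbf{L}^{\mathrm{M}}$ (with $\ell_{00}$ a core and $\mathcal{H}_2\subseteq\mathscr{D}(\mathbf{L}^{\mathrm{M}})$), and (b) the inclusion $\mathfrak{S}(\mathbf{L}^{\mathrm{M}})\subseteq(-\infty,-\lambda_{\mathrm{M}}]$ for some $\lambda_{\mathrm{M}}>0$; I would re-establish both by other means. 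Granted (a)--(b), the rest goes through verbatim: $\mathbf{L}$ is a compact self-adjoint perturbation of $\mathbf{L}^{\mathrm{M}}$, Weyl's theorem yields $\mathfrak{S}_{\mathrm{ess}}(\mathbf{L})=\mathfrak{S}_{\mathrm{ess}}(\mathbf{L}^{\mathrm{M}})\subseteq(-\infty,-\lambda_{\mathrm{M}}]$, and together with Lemma~\ref{lem:0-eigenspace} (valid here, since \eqref{eq:Qi-limit} holds) this gives the positive spectral gap and the identity $\mathbf{L}=\L$, hence all of Corollary~\ref{cor:non-constructive-gap}.

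For (a): in the notation of the proof of Proposition~\ref{prp:L-decomposition}, the estimate \eqref{eq:Sth} together with splitting off the finitely many intermediate terms still shows that $\mathbf{S}$ is $\mathbf{T}$-bounded with relative bound $\leq 1$ (the inequality was strict there only because $\ell>1$). By W\"ust's theorem, $\mathbf{T}+\mathbf{S}$ is then essentially self-adjoint on $\mathscr{D}(\mathbf{T})=\mathcal{H}_2$; since the graph norm of $\mathbf{T}+\mathbf{S}$ on $\mathcal{H}_2$ is dominated by $\|\cdot\|_{\mathcal{H}_2}$ (by the same bounds), $\ell_{00}$ is a core, and $\mathbf{L}^{\mathrm{M}}:=\overline{(\mathbf{T}+\mathbf{S})|_{\mathcal{H}_2}}$ is self-adjoint with core $\ell_{00}$ and $\mathcal{H}_2\subseteq\mathscr{D}(\mathbf{L}^{\mathrm{M}})$. (As in the original argument one expects $\mathscr{D}(\mathbf{L}^{\mathrm{M}})=\mathcal{H}_2$, but only this inclusion and self-adjointness are used below.)

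For (b): instead of comparing $\|\mathbf{S}h\|$ and $\|\mathbf{T}h\|$, I would compute the form of $\mathbf{L}^{\mathrm{M}}$ exactly. Using the detailed-balance identity \eqref{eq:DB-Qizi} repeatedly, one gets, for $h\in\ell_{00}$,
\begin{equation*}
  \sum_{i} \Q_i h_i\, \mathbf{L}^{\mathrm{M}}_i(h)
  = - \sum_{i\leq N}\sigma_i \Q_i h_i^2
  - b_{N+1}\Q_{N+1} h_{N+1}^2
  - \sum_{i\geq N+1} a_i \Q_1 \Q_i \, (h_i - h_{i+1})^2 .
\end{equation*}
The first sum is $\geq\underline{\sigma}\sum_{i\leq N}\Q_i h_i^2$; for the remaining two, note that $b_{N+1}\Q_{N+1}=a_N\Q_1\Q_N$ (again \eqref{eq:DB-Qizi}), so their sum is the discrete Dirichlet form $\sum_{i\geq N} a_i\Q_1\Q_i(h_i-h_{i+1})^2$ on $\{N,N+1,\dots\}$ with Dirichlet condition $h_N=0$. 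By the discrete weighted Hardy inequality this dominates $\lambda_{\mathrm{H}}\sum_{i\geq N+1}\Q_i h_i^2$ for some $\lambda_{\mathrm{H}}>0$ provided
\begin{equation*}
  \mathcal{B}_N := \sup_{k>N}\Big(\sum_{i\geq k}\Q_i\Big)\Big(\sum_{N\leq j<k}\frac{1}{a_j\Q_1\Q_j}\Big)<+\infty ,
\end{equation*}
and then (recalling $\|h\|_{\mathcal{H}}^2=\tfrac12\sum_i\Q_i h_i^2$) one gets $-\langle\mathbf{L}^{\mathrm{M}}h,h\rangle\geq\lambda_{\mathrm{M}}\|h\|_{\mathcal{H}}^2$ with $\lambda_{\mathrm{M}}=\min\{\underline{\sigma},\lambda_{\mathrm{H}}\}$ on $\ell_{00}$, hence on $\mathscr{D}(\mathbf{L}^{\mathrm{M}})$ by density.

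The crux is therefore the finiteness of $\mathcal{B}_N$ under hypothesis \eqref{eq:deltak2}. Setting $P_j:=\prod_{i=2}^j(1+\delta_i)$, the detailed-balance recursion gives $\Q_j=\Q_1 a_1/(a_j P_j)$ and $1/(a_j\Q_j)=P_j/(\Q_1 a_1)$, so $\mathcal{B}_N=\Q_1^{-1}\sup_{k>N}\big(\sum_{i\geq k}a_i^{-1}P_i^{-1}\big)\big(\sum_{N\leq j<k}P_j\big)$. Now \eqref{eq:deltak2} gives $\delta_j\geq c\,a_j^{-1/2}$ for large $j$, which (together with $a_j=O(j)$, so that $a_j\to\infty$) makes $P_k/P_j$ for $j<k$ grow at least like $\exp\!\big(\tfrac{c}{2}\sum_{j<i\leq k}a_i^{-1/2}\big)$; a summation-by-parts estimate that retains the factors $a_i^{-1}$ inside the tail sum (a cruder bound replacing $a_i^{-1}$ by $\underline{a}^{-1}$ diverges) then bounds the supremum. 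This summability estimate---precisely where \eqref{eq:deltak2} is used, and the only delicate point I anticipate---completes the proof; the rest (the form identity, the Hardy inequality, W\"ust's theorem, and the Weyl-theorem conclusion) is a routine adaptation of material already in Section~\ref{sec:linear}.
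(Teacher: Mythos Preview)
Your overall framework---same splitting $\mathbf{L}=\mathbf{L}^{\mathrm{C}}+\mathbf{L}^{\mathrm{M}}$, then Weyl's theorem---matches the paper, and your form identity for $\langle\mathbf{L}^{\mathrm{M}}h,h\rangle$ is correct. The substantive difference is in how you establish the strict negativity of $\mathbf{L}^{\mathrm{M}}$. The paper does \emph{not} pass through Hardy's inequality here: starting from
\[
  \langle\mathbf{L}^{\mathrm{M}}h,h\rangle
  = -\sum_i \sigma_i\Q_i h_i^2 + 2\sum_{i>N} a_i\Q_1\Q_i h_i h_{i+1},
\]
it applies Young's inequality term-by-term with parameter $r_i=1+\delta_i/2$. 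After reindexing and using $\sigma_i=a_i\Q_1(2+\delta_i)$, the coefficient of $\Q_i h_i^2$ for $i>N+1$ becomes $-\Q_1\,a_i\delta_i^2/(4+2\delta_i)$, and hypothesis \eqref{eq:deltak2} says precisely that $\liminf a_i\delta_i^2>0$ (since $\delta_i\to 0$), so this coefficient is bounded away from zero for $N$ large. This is a two-line computation that uses \eqref{eq:deltak2} in the most transparent way possible.

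Your route instead reduces to the finiteness of the Hardy constant $\mathcal{B}_N$, which you only sketch. That sketch can be completed (Abel-summing both factors against the telescoping increments $1/P_{i-1}-1/P_i=\delta_i/P_i$ and $P_{j+1}-P_j=\delta_{j+1}P_j$, and using $\delta_i\geq c\,a_i^{-1/2}$), but it is genuinely more delicate: without monotonicity of $a_i$ or $\delta_i$ the summation-by-parts needs care, and your parenthetical ``$a_j=O(j)$, so that $a_j\to\infty$'' is not a valid deduction---the correct reason $a_j\to\infty$ is that $\delta_j\to 0$ (from \eqref{eq:Qi-limit}, \eqref{eq:ai-limit} at $z=\zs$) together with \eqref{eq:deltak2}. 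On the other hand, you are more careful than the paper about self-adjointness: the paper simply asserts that ``the rest of the proofs\ldots\ are still valid'' without revisiting the Kato--Rellich step, whereas you correctly flag that at $\ell=1$ one must fall back on W\"ust's theorem.
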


\begin{proof}
  The proof follows the same steps as that of Proposition
  \ref{prp:L-decomposition} and is based upon the splitting of
  $\mathbf{L}$ as
  $\mathbf{L}=\mathbf{L}^{\mathrm{C}}+\mathbf{L}^{\mathrm{M}}$ with
  $\mathbf{L}^{\mathrm{C}}$ and $\mathbf{L}^{\mathrm{M}}$ defined by
  \eqref{eq:def-LC} and \eqref{eq:def-LM} respectively. It will
  consist in proving that, under the supplementary assumption
  \eqref{eq:deltak2}, one can choose $N >1$ large enough such that
  \begin{equation}\label{eq:dissipative}
    \ap{\mathbf{L}^{\mathrm{M}}(h), h}
    \leq
    -\lambda_\mathrm{M} \|h\|_{\mathcal{H}}^2
    \fa h \in \mathcal{H}_2
  \end{equation}
  holds true for some positive $\lambda_\mathrm{M} >0$. Once we have
  this, the rest of the proofs of Proposition
  \ref{prp:L-decomposition} and Corollary
  \ref{cor:non-constructive-gap} are still valid. Indeed, arguing as
  in the proof of Proposition \ref{prp:L-decomposition}, this implies
  the existence of a positive spectral gap (defined in a
  non-constructive way) for $\mathbf{L}$ thanks to Weyl's theorem.
  One computes, for some $N >1$ to be fixed later,
  \begin{multline}
    \label{eq:LMhh}
    \ap{\mathbf{L}^{\mathrm{M}}(h), h}
    = -\sum_{i=1}^\infty \sigma_i \Q_i \,h_i^2
    + \sum_{i > N+1}b_i\,\Q_i h_i\,h_{i-1}
    + \sum_{i > N} a_i\,\Q_1 \Q_i h_i\,h_{i+1}\\
    = -\sum_{i=1}^\infty \sigma_i \Q_i \,h_i^2
    + 2 \sum_{i > N} a_i\,\Q_1 \Q_i h_i\,h_{i+1}
  \end{multline}
  where we used \eqref{eq:Qi}. For each $i \geq 1$ we take $r_i > 1$
  (also to be fixed later), and use Young's inequality to deduce that
  \begin{multline*}
    2 \sum_{i > N} a_i\,\Q_1 \Q_i h_i\,h_{i+1}
    \leq
    \sum_{i > N+1} r_i a_i \Q_1 \,h_i^2
    + \sum_{i > N+1} \frac{1}{r_i}
    a_i \Q_1 \Q_i \,h_{i+1}^2
    \\
    =
    \sum_{i > N+1} r_i a_i\,\Q_1 h_i^2 \Q_i
    +
    \sum_{i >N+2}  \frac{1}{r_i} b_i h_i^2 \Q_j,
  \end{multline*}
  again by \eqref{eq:Qi}. Since $b_i = (a_i\Q_1)(1+\delta_i)$ and
  $\sigma_i = b_i+a_i\Q_1=(a_i \Q_1)(2+\delta_i)$ for any $i > 1$, one
  gets
  \begin{multline*}
    \langle \mathbf{L}^{\mathrm{M}}h,\,h\rangle
    \leq
    -\sum_{i=1}^\infty \sigma_i \Q_i h_i^2
    + \sum_{i > N+1} a_i\,\Q_1
    \left(r_i + \dfrac{1+\delta_i}{r_i}\right)h_i^2 \,\Q_i
    \\
    \leq
    -\sum_{i=1}^{N+1} \sigma_i \Q_i\,h_i^2
    + \sum_{i > N+1} a_i \Q_1
    \left(r_i + \frac{1+\delta_i}{r_i}-2-\delta_i\right)
    \Q_i\,h_i^2.
  \end{multline*}
  Take $N$ large enough so that for $\delta_i > 0$ for all $i > N+1$.
  We make the choice
  \begin{equation*}
    r_i = 1 + \delta_i/2
    \quad \text{ for } i > N+1,
  \end{equation*}
  to obtain
  \begin{equation*}
    \langle \mathbf{L}^{\mathrm{M}}h,\,h\rangle
    \leq
    -\sum_{i=1}^{N+1} \sigma_i \Q_i\,h_i^2
    - \Q_1 \sum_{i > N+1}
    \frac{a_i \delta_{i}^2}{4+2\delta_{i}}
    \Q_i\,h_i^2.
  \end{equation*}
  According to \eqref{eq:deltak2},
  \begin{equation*}
    C := \Q_1 \liminf_i \frac{a_i \delta_{i}^2}{4+2\delta_{i}} > 0,
  \end{equation*}
  so choosing $N$ large enough,
  \begin{equation*}
    \langle \mathbf{L}^{\mathrm{M}}h,\,h\rangle
    \leq
    -\sum_{i=1}^{N+1} \sigma_i \Q_i\,h_i^2
    - \frac{C}{2} \sum_{i > N+1} \Q_i\,h_i^2
    \leq
    - \underline{\sigma} \sum_{i=1}^{N+1} \Q_i h_i^2
    - \frac{C}{2} \sum_{i >N+1} \Q_i h_i^2,
  \end{equation*}
  which yields \eqref{eq:dissipative} with
  $\lambda_\mathrm{M}=\min\{\underline{\sigma},\frac{C}{2}\} >0$.
\end{proof}

\begin{rem}
  For the typical coefficients given by \eqref{eq:PT-coefficients},
  with $z=z_s$, one sees that
  $$\delta_k= \dfrac{q}{z_s \,k^{1-\mu}} \qquad k \geq 1.$$
  Since $a_k=k^\alpha$ with $\alpha >0$, one sees that
  \eqref{eq:deltak2} holds true if and only if $\alpha \geq 2(1-\mu).$
  In other words, $\mathbf{L}$ has a positive spectral gap whenever
  $\alpha \geq 2(1-\mu)$. We will prove this later on in a different
  way, by a constructive argument related to Hardy's inequality which
  gives explicit estimates of the size of the gap. We will also show
  that, for $\alpha < 2(1-\mu)$, $\L$ does not have a positive
  spectral gap (see Lemma \ref{lem:B-bounds}).
\end{rem}


\subsection{Explicit spectral gap estimates in $\ell^2(\Q)$.}
\label{sec:explicit-gap}

Let us now study conditions ensuring that $\L$ admits a positive
spectral gap in the space $\mathcal{H}=\ell^2(\Q)$. Since $\L$ is a
self-adjoint operator in this space, it having a spectral gap of size
$\lambda_0$ is equivalent to the functional dissipativity inequality
\begin{equation}
  \label{eq:BDgap}
  \ap{h, \L h}
  \leq -\lambda_0 \|h\|_{\H}^2
  \quad \text{ for all $h \in \ell_{00}$
    such that $\sum_{i=1}^\infty i\,\Q_i h_i = 0$}
\end{equation}
this is,
\begin{equation}
  \label{eq:BDgap2}
  \lambda_0
  \sum_{i=1}^\infty \Q_i h_i^2
  \leq
  \sum_{i=1}^\infty
  a_{i}\Q_i \Q_1
  \big(h_{i+1} - h_i - h_1 \big)^2
\end{equation}
for all $h \in \ell_{00}$ such that $\sum_{i=1}^\infty i\,\Q_i h_i =
0$. Notice that it is enough to have the inequality for compactly
supported sequences $h$, since $\ell_{00}$ is a core for $\L$. On the
other hand, if \eqref{eq:BDgap2} holds for all $h \in \ell_{00}$ then
it must actually hold for \emph{all} sequences $h$, with the
understanding that either or both sides of the inequality may be
infinite.

We already saw in Proposition \ref{prp:L-decomposition} (see also
Lemma \ref{lem:deltak}) sufficient conditions ensuring the existence
of a positive spectral gap. However, this existence result was based
on a compactness argument and, consequently, does not provide any
quantitative information about the size spectral gap. In this section
we adopt a different viewpoint by studying the inequality
\eqref{eq:BDgap} directly. This will result in useful estimates on the
spectral gap, explicit in terms of the coefficients $(a_k)_{k \geq 1}$
and $(b_k)_{k \geq 1}$.

\begin{defi}
  \label{def:BD-spectral-gap}
  Assume Hypothesis \ref{hyp:linear-section}. We call $\lambda_0$ the
  size of the spectral gap in $\mathcal{H}$ of the operator $\L$;
  equivalently, $\lambda_0$ is the largest nonnegative constant such
  that \eqref{eq:BDgap} (or equivalently, \eqref{eq:BDgap2})
  holds. Notice that $\lambda_0 = 0$ whenever $\L$ has no spectral
  gap.
\end{defi}
Our main estimate of $\lambda_0$ is the following:

\begin{thm}[\textbf{\textit{Bound of the spectral gap for the linearized
      Becker-D\"oring operator}}]
  \label{thm:spectral-gap}
  Assume Hypothesis \ref{hyp:linear-section}. Recall the definition of
  $B$ given in \eqref{eq:def-B}:
  \begin{equation*}
  B = \sup_{k \geq 1} \left(\sum_{j=k+1}^\infty \Q_j \right)
  \left(\sum_{j=1}^{k}\dfrac{1}{a_j \Q_j}\right).
\end{equation*}
  It holds that
  \begin{equation}
    \label{eq:spectral-gap-bound-below}
    \frac{1}{4B} \leq \lambda_0,
  \end{equation}
  which should be understood as saying that $B = +\infty$ if
  $\lambda_0=0$. If we additionally assume that
  \begin{equation}
    \label{eq:def-M3}
    M_3 :=
    \sum_{i=1}^\infty
    \frac{1}{a_i \Q_i} \left( \sum_{j=i+1}^{\infty} j \Q_j \right)^2
    < +\infty
  \end{equation}
  and call
  \begin{equation}
    \label{eq:def-M2}
    M_2 := \sum_{i=1}^\infty i^2 \Q_i
  \end{equation}
  (which is finite due to \eqref{eq:A}), then we also have an upper
  bound of $\lambda_0$:
  $$B - \frac{M_3}{M_2} \leq \frac{1}{\lambda_0} \leq 4 B,$$
  which should be understood as saying that $B = +\infty$ if and only
  if $\lambda_0 = 0$.
\end{thm}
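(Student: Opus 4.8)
The plan is to read inequality \eqref{eq:BDgap2} --- which, since $\L$ is self-adjoint, characterizes $\lambda_0$ --- as a \emph{discrete weighted Hardy inequality} on the constrained hyperplane $\{\sum_i i\Q_i h_i = 0\}$, and to exploit the classical Muckenhoupt-type characterization of such inequalities, whose sharp constant is comparable to $B$. The reduction goes as follows: given $h$ with $\sum_i i\Q_i h_i = 0$, put $g_i := h_i - i\,h_1$, so that $g_1 = 0$ and $g_{i+1} - g_i = h_{i+1} - h_i - h_1$; then the right-hand side of \eqref{eq:BDgap2} is exactly $\mathscr{E}(h,h) = \Q_1\sum_i a_i\Q_i(g_{i+1}-g_i)^2$ (cf. \eqref{eq:def-quadratic}), while expanding the square and using the constraint together with $M_2 = \sum_i i^2\Q_i$ (finite by \eqref{eq:A}) gives the identity
\begin{equation*}
  \sum_{i\geq1}\Q_i h_i^2 = \sum_{i\geq1}\Q_i g_i^2 - M_2\,h_1^2 \leq \sum_{i\geq1}\Q_i g_i^2 .
\end{equation*}
Hence \eqref{eq:BDgap2} is controlled above and below by the best constant $C$ in $\sum_n\Q_n g_n^2 \leq C\sum_i a_i\Q_i(g_{i+1}-g_i)^2$ over sequences with $g_1 = 0$, the correction term $M_2 h_1^2$ being precisely what separates the upper and lower estimates of $\lambda_0$.

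For the lower bound $\tfrac1{4B}\leq\lambda_0$, write $d_i := g_{i+1}-g_i$, so $g_n = \sum_{i=1}^{n-1}d_i$ and $\sum_n\Q_n g_n^2 = \sum_{m\geq1}\Q_{m+1}\big(\sum_{i=1}^m d_i\big)^2$. This is exactly the left-hand side of a discrete Hardy inequality with weights $\mu_m = \Q_{m+1}$ and $\nu_i = a_i\Q_i$, whose Muckenhoupt constant is
\begin{equation*}
  \sup_{m\geq1}\Big(\sum_{j\geq m+1}\Q_j\Big)\Big(\sum_{i=1}^m\frac{1}{a_i\Q_i}\Big) = B .
\end{equation*}
The discrete Hardy inequality then yields $\sum_n\Q_n g_n^2 \leq 4B\sum_i a_i\Q_i d_i^2$; combined with the display in the previous paragraph this gives \eqref{eq:BDgap2} with $\lambda_0 \geq \tfrac1{4B}$, after keeping track of the elementary normalization factors ($\Q_1$ and the $\tfrac12$ in $\|\cdot\|_{\mathcal{H}}^2$). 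In particular $B<+\infty$ forces $\lambda_0 > 0$, which is the content of \eqref{eq:spectral-gap-bound-below} and of the right-hand inequality $\tfrac1{\lambda_0}\leq 4B$.

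For the upper bound we test \eqref{eq:BDgap2} --- now allowed for arbitrary, not necessarily compactly supported, sequences --- against a near-extremal Hardy sequence: fix $k\geq1$, take $d_i = \frac{1}{a_i\Q_i}\chi_{\{i\leq k\}}$, set $g_n = \sum_{i=1}^{n-1}d_i$ (which equals $P_k := \sum_{i=1}^k\frac1{a_i\Q_i}$ for $n>k$) and $h_i := g_i + c\,i$ with $c$ chosen so that $\sum_i i\Q_i h_i = 0$. One computes $\mathscr{E}(h,h) = \Q_1 P_k$ and
\begin{equation*}
  \sum_n\Q_n h_n^2 = \sum_n\Q_n g_n^2 - \frac{1}{M_2}\Big(\sum_n n\Q_n g_n\Big)^2 \geq P_k^2\,R_k - \frac{1}{M_2}\Big(\sum_n n\Q_n g_n\Big)^2, \qquad R_k := \sum_{j>k}\Q_j .
\end{equation*}
The crucial step is to bound the correction: interchanging the order of summation gives $\sum_n n\Q_n g_n = \sum_{i=1}^k\frac1{a_i\Q_i}\big(\sum_{j>i}j\Q_j\big)$, and Cauchy-Schwarz against the weight $\frac1{a_i\Q_i}$ bounds its square by $P_k\,M_3$, with $M_3$ as in \eqref{eq:def-M3}. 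Dividing, $\dfrac{\sum_n\Q_n h_n^2}{\mathscr{E}(h,h)} \geq \dfrac{1}{\Q_1}\big(P_k R_k - \tfrac{M_3}{M_2}\big)$; taking the supremum over $k$ and using $\sup_k P_k R_k = B$ (cf. \eqref{eq:def-B}) yields $B - \tfrac{M_3}{M_2}\leq\tfrac1{\lambda_0}$, again after the normalization bookkeeping. Since $M_3 < +\infty$ under the stated hypothesis, this combined with the lower bound gives $B = +\infty$ if and only if $\lambda_0 = 0$.

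\textbf{The main obstacle is the upper bound.} The defect $\tfrac{M_3}{M_2}$ is exactly the price of projecting the Hardy extremizer onto the constraint $\{\sum_i i\Q_i h_i = 0\}$, i.e.\ of subtracting the right multiple $c\,i$ of the zero-mode of $\L$ (Lemma \ref{lem:0-eigenspace}); extracting it cleanly requires the Cauchy-Schwarz step above together with some care about convergence of the sums for this non-compactly-supported test sequence, and one must bear in mind that the bound is only informative when $B > \tfrac{M_3}{M_2}$ (otherwise the statement is vacuous). A secondary technical point is to state, or cite, the precise form of the discrete weighted Hardy inequality with the constant $4$ that is used in the lower bound.
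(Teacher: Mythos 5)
Your proof is correct and follows essentially the same route as the paper: both directions are read off from the discrete weighted Hardy inequality via the substitution $g_i = h_i - i\,h_1$, and the defect $M_3/M_2$ is the price of restoring the constraint $\sum_i i\Q_i h_i = 0$. The lower bound coincides with the paper's argument (relax $\sum_i \Q_i h_i^2 \leq \sum_i\Q_i g_i^2$ on the constrained hyperplane, then Hardy with weights $\mu_m = \Q_{m+1}$, $\nu_i = a_i\Q_i$). For the upper bound the paper introduces the intermediate quantity $\lambda_1$ of \eqref{eq:BD-Hardy} and proves the chain $B \leq 1/\lambda_1 \leq 1/\lambda_0 + M_3/M_2$, the second inequality via the orthogonal decomposition $h_i - ih_1 = (h_i - i\overline{h}) + i(\overline{h}-h_1)$ valid for all sequences; you instead unroll that chain and test \eqref{eq:BDgap2} directly against the Muckenhoupt near-extremizer projected onto the constraint. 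This is a genuinely valid alternative presentation: the extremizer test is exactly what produces the lower estimate $B \leq 1/\lambda_1$ inside Hardy's theorem, and your Cauchy--Schwarz estimate of $(\sum_n n\Q_n g_n)^2$ is literally the paper's bound on $M_2|\overline{h}-h_1|$ specialized to that test function, so the two treatments compute the same quantity by the same means. The only caveat is the $\Q_1$ normalization factor that you flag at the end; it is present in the paper's own manipulations as well (the right-hand sides of \eqref{eq:BD-Hardy} and \eqref{eq:HP2} differ by $\Q_1$, as does the Cauchy--Schwarz step in Lemma~\ref{lem:Hardy-Poincare-equivalence}), so it is a shared bookkeeping issue rather than a gap in your argument.
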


Using this it is easy to give quite general conditions on the
coefficients $a_i$, $b_i$ such that the spectral gap $\lambda_0$ is
strictly positive whenever $z < z_\mathrm{s}$. In particular, the
following lemma applies to the ``typical'' coefficients
\eqref{eq:PT-coefficients} and \eqref{eq:CF-coefficients}. Notice that
the following result has already been obtained in
Prop. \ref{prp:L-decomposition} by a completely different argument;
however, the following Corollary is \textit{constructive}, relying on
the above Theorem:
\begin{cor}
  \label{cor:spectral-gap-finite}
  Assume Hypothesis \ref{hyp:linear-section} and conditions
  \eqref{eq:ai-bounded-below}, \eqref{eq:Qi-limit} and
  \eqref{eq:ai-limit}. Then for any $0 < z < z_\mathrm{s}$, the
  spectral gap $\lambda_0$ of $\mathcal{L}$ in $\mathcal{H}$ is
  strictly positive.
\end{cor}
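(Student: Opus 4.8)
The plan is to obtain the corollary as an immediate consequence of Theorem \ref{thm:spectral-gap}: since that theorem gives $\tfrac{1}{4B}\le\lambda_0$, with the convention that $B=+\infty$ when $\lambda_0=0$, it suffices to show that the quantity $B$ of \eqref{eq:def-B} is \emph{finite} whenever $0<z<z_\mathrm{s}$. Write $T_k:=\sum_{j=k+1}^\infty\Q_j$ and $S_k:=\sum_{j=1}^k\frac{1}{a_j\Q_j}$, so that $B=\sup_{k\ge1}T_kS_k$. The starting observation is that $\Q_{j+1}/\Q_j=(Q_{j+1}/Q_j)\,z\to z/z_\mathrm{s}=:r$ by \eqref{eq:Qi-limit}, and $r<1$ precisely because $z<z_\mathrm{s}$. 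Fix any $\theta$ with $r<\theta<1$; then there is $N\ge1$ such that $\Q_{j+1}/\Q_j\le\theta$ for all $j\ge N$, hence, telescoping,
\[
\frac{\Q_j}{\Q_i}=\prod_{l=i}^{j-1}\frac{\Q_{l+1}}{\Q_l}\le\theta^{\,j-i}
\qquad\text{whenever }N\le i\le j .
\]
In particular $T_k\le \Q_N\theta^{-N}\sum_{j>k}\theta^j\le C\theta^k$ for $k\ge N$, and trivially $T_1=\sum_{j\ge2}\Q_j<\infty$ since $z$ lies inside the radius of convergence $z_\mathrm{s}$ of $\sum_i iQ_iz^i$ (equivalently, by \eqref{eq:A}).

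The key point — and the only place a naive estimate would break down — is that one must \emph{not} bound $S_k$ and $T_k$ separately by exponential rates (controlling $S_k$ would force a lower bound on $\Q_i$ with a rate worse than $r$, and the product would then diverge); instead one pairs the two sums through the telescoped ratio above. For $k\ge N$ split $S_k=\sum_{i=1}^{N-1}\frac{1}{a_i\Q_i}+\sum_{i=N}^k\frac{1}{a_i\Q_i}$. The first piece is a fixed finite constant $S_{N-1}$ times $T_k$, and $T_kS_{N-1}\le T_1S_{N-1}<\infty$. For the second piece, using $j>k\ge N\ge i$ for the ratio bound and \eqref{eq:ai-bounded-below} to write $a_i\ge\underline a$,
\[
T_k\sum_{i=N}^k\frac{1}{a_i\Q_i}
=\sum_{i=N}^k\frac{1}{a_i}\sum_{j>k}\frac{\Q_j}{\Q_i}
\le\frac{1}{\underline a}\left(\sum_{i=N}^k\theta^{-i}\right)\left(\sum_{j>k}\theta^j\right)
\le\frac{1}{\underline a}\cdot\frac{\theta^{-k}}{1-\theta}\cdot\frac{\theta^{k+1}}{1-\theta}
=\frac{\theta}{\underline a\,(1-\theta)^2},
\]
which is independent of $k$. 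Combining this with the bound $T_kS_k\le T_1S_{N-1}$ for $1\le k\le N-1$, we conclude that $B\le T_1S_{N-1}+\frac{\theta}{\underline a(1-\theta)^2}<\infty$, and therefore $\lambda_0\ge\frac{1}{4B}>0$ by Theorem \ref{thm:spectral-gap}.

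As for the difficulty: there is essentially no analytic obstacle here, the whole argument being a geometric-series estimate. The one thing to get right is the observation just made, namely that the finiteness of $B$ rests on using the \emph{same} exponential rate $\theta>z/z_\mathrm{s}$ to control both the decay of the tail $\sum_{j>k}\Q_j$ and the growth of $\sum_{i\le k}(a_i\Q_i)^{-1}$, which is exactly what the telescoped estimate $\Q_j/\Q_i\le\theta^{\,j-i}$ provides. This is precisely why one needs $z$ strictly below $z_\mathrm{s}$ (so that $r<1$) and $\inf_i a_i>0$ (so that the $1/a_i$ factors do not spoil the geometric summation); the hypothesis \eqref{eq:ai-limit}, and the remaining parts of Hypothesis \ref{hyp:linear-section}, are not actually needed for this particular argument.
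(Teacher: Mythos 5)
Your proof is correct. Like the paper, you reduce the corollary to showing $B<+\infty$ and then invoke Theorem \ref{thm:spectral-gap}; the difference lies in how you establish finiteness of $B$. The paper uses the Stolz--Ces\`aro theorem to obtain the precise asymptotic equivalences $m_k\sim(\zs/z-1)^{-1}\Q_k$ and $n_k\sim(\zs/z-1)^{-1}(a_k\Q_k)^{-1}$, then notes $m_k n_k\le\frac{1}{\underline a}\cdot\frac{m_k}{\Q_k}\cdot(n_k a_k\Q_k)$ is bounded. You instead bypass the asymptotic refinement entirely: you pick $\theta$ strictly between $z/\zs$ and $1$, telescope the ratio $\Q_j/\Q_i\le\theta^{\,j-i}$ for $j\ge i\ge N$, and close the estimate with geometric series. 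This is a genuinely more elementary route (no Stolz--Ces\`aro), and you correctly point out that it dispenses with hypothesis \eqref{eq:ai-limit}, which the paper does use in its Stolz--Ces\`aro computation for $n_k$ (via the ratio $a_{k+1}/a_k\to1$). What the paper's approach buys in exchange is sharper quantitative information — the exact rate $(\zs/z-1)^{-2}/a_k$ — which feeds naturally into the later estimates of $B$ as $z\uparrow\zs$ in Section \ref{sub:critical}; your crude geometric bound on $B$ would degenerate as $\theta\uparrow1$. For the corollary as stated, however, both arguments are complete, and your observation that only \eqref{eq:ai-bounded-below} and \eqref{eq:Qi-limit} (plus $z<\zs$) are needed to make $B$ finite is a legitimate sharpening.
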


\begin{proof}
  Due to \eqref{eq:spectral-gap-bound-below} it is enough to show that
  $B < +\infty$.
  Let us call $\Q_k := z^k Q_k$ as usual, and
  \begin{equation*}
    m_k := \sum_{j=k+1}^\infty \Q_j,
    \quad
    n_k := \sum_{j=1}^k \frac{1}{a_j \Q_j}
    \fa
    k \geq 1.
  \end{equation*}
  We will show that
  \begin{gather}
    \label{eq:mk-asymp}
    m_k
    \overset{k \to +\infty}{\sim}
    \left( \frac{z_\mathrm{s}}{z} - 1 \right)^{-1} \Q_k \\
    \label{eq:nk-asymp}
    n_k
    \overset{k \to +\infty}{\sim}
    \left( \frac{z_\mathrm{s}}{z} - 1 \right)^{-1}
    \frac{1}{a_k \Q_k},
  \end{gather}
  which then implies that
  \begin{equation*}
    m_k n_k =  \frac{1}{a_k} \frac{m_k}{\Q_k} (n_k a_k \Q_k)
    \leq \frac{1}{\underline{\sigma}} \frac{m_k}{\Q_k} (n_k a_k \Q_k)
    < C
    \quad \text{ for all } k \geq 1
  \end{equation*}
  for some $C > 0$, due to the lower bound on
  \eqref{eq:ai-bounded-below} and the limits \eqref{eq:mk-asymp} and
  \eqref{eq:nk-asymp}. This implies that $B = \sup_{k \geq 1} ( m_k
  n_k ) < C$, so we just have to prove \eqref{eq:mk-asymp} and
  \eqref{eq:nk-asymp}.

  We prove them by using the Stolz-Ces\`aro theorem. Notice that
  \begin{equation*}
    \frac{m_{k+1} - m_k}{\Q_{k+1} - \Q_k} =
    - \frac{\Q_{k+1}}{\Q_{k+1} - \Q_k}
    = \left( \frac{\Q_k}{\Q_{k+1}} - 1 \right)^{-1}
    \longrightarrow \left( \frac{z_\mathrm{s}}{z} - 1 \right)^{-1},
  \end{equation*}
  due to \eqref{eq:Qi-limit}. Since $m_k$ is strictly
  decreasing and tends to $0$ as $k \to +\infty$, this implies
  \eqref{eq:mk-asymp}.

  In a similar way, due to \eqref{eq:Qi-limit} and
  \eqref{eq:ai-limit},
  \begin{equation*}
    \frac{n_{k+1} - n_k}{(a_{k+1}\Q_{k+1})^{-1} - (a_k \Q_k)^{-1}} =
    \frac{(a_{k+1} \Q_{k+1})^{-1}}{(a_{k+1}\Q_{k+1})^{-1} - (a_k \Q_k)^{-1}}
    =
    \left(
      1 - \frac{a_{k+1} \Q_{k+1}}{a_k \Q_k}
    \right)^{-1}
    \longrightarrow
    \left(
      1 - \frac{z}{\zs}
    \right)^{-1},
  \end{equation*}
  which, since $n_k$ is strictly increasing and unbounded, implies
  \eqref{eq:nk-asymp}.
\end{proof}

Let us begin the proof of Theorem \ref{thm:spectral-gap}. We will
prove it through an inequality which is stronger than
\eqref{eq:BDgap}, and equivalent in some cases, as we will show
immediately afterwards:

\begin{lem}
  Assume Hypothesis \ref{hyp:linear-section}. Define $\lambda_1$ as
  the largest nonnegative constant such that
  \begin{equation}
    \label{eq:BD-Hardy}
    \lambda_1 \sum_{i=1}^\infty \Q_i
    \left(h_i-i\,h_1\right)^2
    \leq
    \sum_{i=1}^\infty
    a_{i}\Q_i \Q_1
    \big(h_{i+1} - h_i - h_1 \big)^2
    \qquad \text{ for all $h = (h_i)_{i \geq 1}$.}
  \end{equation}
  Then
  \begin{equation}
    \label{eq:A-B-relation}
    \dfrac{1}{4B} \leq \lambda_1 \leq \dfrac{1}{B},
  \end{equation}
  which again should be understood as saying that $\lambda_1 = 0$ if
  and only if $B = +\infty$.
\end{lem}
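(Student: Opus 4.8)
The plan is to eliminate the ``$h_1$'' terms by a telescoping substitution, turning \eqref{eq:BD-Hardy} into a textbook discrete weighted Hardy inequality. First I would set $g_i := h_{i+1}-h_i-h_1$ for $i\geq 1$; summing this relation gives
\[
h_i - i\,h_1 = \sum_{j=1}^{i-1} g_j, \qquad i\geq 1,
\]
with the convention that the empty sum ($i=1$) is zero. Since $(h_1,(g_i)_i)\mapsto (h_i)_i$ is a bijection onto all sequences and $h_1$ appears in neither side of \eqref{eq:BD-Hardy} after the substitution, \eqref{eq:BD-Hardy} holding for every $h$ is equivalent to
\[
\lambda_1 \sum_{i=1}^\infty \Q_{i+1}\Big(\sum_{j=1}^{i} g_j\Big)^2
\;\leq\;
\Q_1\sum_{i=1}^\infty a_i\Q_i\, g_i^2
\qquad \text{for all sequences } (g_i)_i ,
\]
where on the left I discarded the vanishing $i=1$ term and reindexed.

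Next I would recognize this as the discrete weighted Hardy inequality $\sum_i u_i\big(\sum_{j\leq i}g_j\big)^2 \leq C\sum_i v_i g_i^2$ with $u_i=\Q_{i+1}$ and $v_i=a_i\Q_i$ (the factor $\Q_1$ only rescales $C$). For positive weights this inequality holds for all $(g_i)$ precisely when the Muckenhoupt-type functional
\[
\mathcal{B} := \sup_{m\geq 1}\Big(\sum_{i\geq m} u_i\Big)\Big(\sum_{i=1}^m v_i^{-1}\Big)
\]
is finite, and the optimal constant then satisfies $\mathcal{B}\leq C\leq 4\mathcal{B}$. Here $\sum_{i\geq m}u_i=\sum_{j\geq m+1}\Q_j$ and $\sum_{i=1}^m v_i^{-1}=\sum_{j=1}^m (a_j\Q_j)^{-1}$, so $\mathcal{B}$ is exactly the quantity $B$ of \eqref{eq:def-B}. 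As $\lambda_1$ is the reciprocal of the sharp constant in the displayed inequality, this gives \eqref{eq:A-B-relation}: the lower bound $\lambda_1\geq 1/(4B)$ is the substantive ($C\leq 4B$) half of the Hardy inequality, while the upper bound $\lambda_1\leq 1/B$ follows by testing against the truncation $g_i=(a_i\Q_i)^{-1}\chi_{\{i\leq m\}}$ and optimizing over $m$. The dichotomy ``$B=+\infty \iff \lambda_1=0$'' is part of the same statement, since $\mathcal{B}=+\infty$ forces the existence of a single $g$ with infinite left-hand side and finite right-hand side.

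The step I expect to carry the real weight is the Hardy inequality with the explicit constant $4\mathcal{B}$ (equivalently, the lower bound $\lambda_1\geq 1/(4B)$). The classical argument introduces the ``potential'' $V_i:=\sum_{j\leq i}v_j^{-1}$, estimates $\big(\sum_{j\leq i}g_j\big)^2$ by Cauchy--Schwarz against the weight $V_j^{1/2}$, uses $V_i^{1/2}-V_{i-1}^{1/2}\leq v_i^{-1}V_i^{-1/2}$ (hence $\sum_{j\leq i}v_j^{-1}V_j^{-1/2}\leq 2V_i^{1/2}$), then exchanges the order of summation and controls the tail $\sum_{i\geq j}u_iV_i^{1/2}$ using $\sum_{i\geq m}u_i\leq \mathcal{B}/V_m$; keeping track of the constants produces the factor $4$. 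Since this is entirely standard, an equally legitimate route in the write-up is just to cite a reference for the discrete Hardy inequality and reduce the whole lemma to the substitution above together with the identification $\mathcal{B}=B$; the only Becker-D\"oring-specific points are then the telescoping identity, the harmless bookkeeping of the constant $\Q_1$, and the index shift between $h_i-i\,h_1$ and $\sum_{j=1}^{i-1}g_j$.
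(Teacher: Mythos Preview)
Your proposal is correct and follows essentially the same route as the paper: both perform the substitution $f_i=h_{i+1}-h_i-h_1$, telescope to rewrite the left-hand side as a Hardy sum with weights $\mu_i=\Q_{i+1}$ and $\nu_i=a_i\Q_i$, and then invoke the discrete Muckenhoupt--Hardy inequality (stated and proved in the paper's Appendix~A). Your sketch of the Hardy argument and the test sequence $g_i=(a_i\Q_i)^{-1}\chi_{\{i\leq m\}}$ for the upper bound also coincide with what the paper does in that appendix.
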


\begin{proof}
  We call
  \begin{equation*}
    \mu_i := \Q_{i+1},
    \quad \nu_i=a_{i}\Q_{i}
    \qquad \text{ for } i \geq 1.
  \end{equation*}
  Hardy's inequality from Appendix \ref{sec:Hardy} says that $B$ is
  finite if and only if the following inequality holds for some
  $\lambda_1 > 0$:
  \begin{equation}
    \label{eq:HP1}
    \lambda_1 \sum_{i=1}^\infty
    \mu_i \left(\sum_{j=1}^i f_j\right)^2
    \leq
    \sum_{i=1}^\infty \nu_i\,f_i^2
    \qquad \text{ for all $f = (f_i)_{i \geq 1}$.}
  \end{equation}
  In addition, if any of these conditions hold, we have
  \eqref{eq:A-B-relation}. Hence it is enough to show that
  \eqref{eq:BD-Hardy} is equivalent to \eqref{eq:HP1}.

  For one implication, assume that \eqref{eq:HP1} holds. If we
  take any sequence $h = (h_i)_{i \geq 1}$ and call
  \begin{equation*}
    f_i:= h_{i+1}-h_{i}-h_1
    \qquad \text{ for } i \geq 1
  \end{equation*}
  then \eqref{eq:HP1} is just \eqref{eq:BD-Hardy}: notice that the
  term on the right hand side of \eqref{eq:HP1} is
  \begin{equation}
    \label{eq:HP2}
    \sum_{i=1}^\infty \nu_i\,f_i^2
    = \sum_{i=1}^\infty a_i\,\Q_i\left(h_{i+1}-h_i-h_1\right)^2
  \end{equation}
  and the term to the left of \eqref{eq:HP1} is
  \begin{multline}
    \label{eq:HP3}
    \lambda_1 \sum_{i=1}^\infty
    \mu_i \left(\sum_{j=1}^i f_j\right)^2
    =
    \lambda_1 \sum_{i=1}^\infty
    \mu_i \left(\sum_{j=1}^i \left(h_{j+1}-h_{j}-h_1\right) \right)^2
    \\
    =
    \lambda_1 \sum_{i=1}^\infty
    \mu_i \left( h_{i+1}-(i+1)\,h_1 \right)^2
    =
    \lambda_1 \sum_{i=1}^\infty
    \Q_i \left( h_{i}-i\,h_1 \right)^2,
  \end{multline}
  since the first term in this sum is equal to $0$.

  For the reverse implication, assume \eqref{eq:BD-Hardy}. Then for
  any sequence $f = (f_i)_{i \geq 1}$, define recursively
  \begin{equation*}
    h_1 = 0,
    \qquad
    h_i := f_i + h_{i-1} + h_1
    \qquad \text{ for } i \geq 2.
  \end{equation*}
  Since \eqref{eq:HP2} and \eqref{eq:HP3} still hold,
  \eqref{eq:BD-Hardy} implies \eqref{eq:HP1}, and we have finished the
  proof.
\end{proof}

\begin{lem}
  \label{lem:Hardy-Poincare-equivalence}
  Assume Hypothesis \ref{hyp:linear-section}. There is the following
  relationship between inequality \eqref{eq:BD-Hardy} and the spectral
  gap inequality \eqref{eq:BDgap}:
  \begin{equation}
    \label{eq:H-P}
    \frac{1}{\lambda_0}
    \leq
    \frac{1}{\lambda_1}
    \leq
    \frac{1}{\lambda_0} + \frac{M_3}{M_2},
  \end{equation}
  where for the right inequality we need to assume additionally that
  both $M_3$ is finite. (We recall that $M_2$ and $M_3$ were defined
  in \eqref{eq:def-M3}--\eqref{eq:def-M2}).
\end{lem}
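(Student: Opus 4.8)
The plan is to treat the Hardy-type inequality \eqref{eq:BD-Hardy} and the spectral gap inequality \eqref{eq:BDgap2} as two lower bounds for the \emph{same} quadratic form
\[
\mathscr{E}(h,h)=\sum_{i\geq 1} a_i\Q_i\Q_1\big(h_{i+1}-h_i-h_1\big)^2,
\]
differing only in the weighted norm that appears on the left. The key preliminary remark is that both sides of \eqref{eq:BD-Hardy} are invariant under the substitution $h\mapsto h+c\,(i)_{i\geq 1}$ for any $c\in\R$: the combination $h_{i+1}-h_i-h_1$ and the combination $h_i-ih_1$ are each unchanged. Since $M_2=\sum_i i^2\Q_i<\infty$ by \eqref{eq:A}, given any $h$ we may choose $c$ so that $h+c\,(i)_i$ satisfies $\sum_i i\Q_i h_i=0$; as neither side of \eqref{eq:BD-Hardy} changes, it suffices to work on the subspace $\{h:\ \sum_i i\Q_i h_i=0\}$, and the same reduction applies to \eqref{eq:BDgap2}. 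On that subspace one has the exact identity
\[
\sum_{i\geq1}\Q_i\big(h_i-ih_1\big)^2
=\sum_{i\geq1}\Q_i h_i^2-2h_1\sum_{i\geq1}i\Q_i h_i+h_1^2\sum_{i\geq1}i^2\Q_i
=\sum_{i\geq1}\Q_i h_i^2+h_1^2 M_2,
\]
the middle term vanishing by the constraint. From this, the left inequality $1/\lambda_0\leq 1/\lambda_1$ is immediate: for $h$ in the constraint subspace, $\lambda_1\sum_i\Q_i h_i^2\leq\lambda_1\sum_i\Q_i(h_i-ih_1)^2\leq\mathscr{E}(h,h)$ since $h_1^2 M_2\geq0$, so \eqref{eq:BDgap2} holds with constant $\lambda_1$ and hence $\lambda_0\geq\lambda_1$.

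For the right inequality we again keep $h$ in the constraint subspace and assume $M_3<\infty$ (if $M_3=\infty$ the claimed bound is vacuous). By the identity above and the definition of $\lambda_0$ (which gives $\sum_i\Q_i h_i^2\leq\lambda_0^{-1}\mathscr{E}(h,h)$, the case $\lambda_0=0$ being trivial), the whole statement reduces to the single estimate
\[
h_1^2\,M_2^2\leq M_3\,\mathscr{E}(h,h):
\]
indeed it then follows that $\sum_i\Q_i(h_i-ih_1)^2\leq\big(\lambda_0^{-1}+M_3/M_2\big)\mathscr{E}(h,h)$ for every $h$, which is exactly $1/\lambda_1\leq 1/\lambda_0+M_3/M_2$.

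To prove this estimate, set $f_j:=h_{j+1}-h_j-h_1$, so that telescoping gives $h_i-ih_1=\sum_{j=1}^{i-1}f_j$; using the constraint,
\[
h_1 M_2=h_1\sum_{i\geq1}i^2\Q_i=-\sum_{i\geq1}i\Q_i\big(h_i-ih_1\big)
=-\sum_{j\geq1}f_j\Big(\sum_{i=j+1}^\infty i\Q_i\Big),
\]
after interchanging the order of summation (justified when $\mathscr{E}(h,h)<\infty$ by the Cauchy--Schwarz bound that follows, and the inequality is trivial otherwise). Cauchy--Schwarz with the weights $a_j\Q_j$ then bounds the right-hand side by $\big(\sum_j a_j\Q_j f_j^2\big)^{1/2}\big(\sum_j \tfrac{1}{a_j\Q_j}(\sum_{i=j+1}^\infty i\Q_i)^2\big)^{1/2}$, i.e.\ by $\big(\sum_j a_j\Q_j f_j^2\big)^{1/2}M_3^{1/2}$ with $M_3$ as in \eqref{eq:def-M3}; recognizing $\sum_j a_j\Q_j f_j^2$ as $\mathscr{E}(h,h)$ up to the fixed factor $\Q_1$ and squaring yields $h_1^2 M_2^2\leq M_3\,\mathscr{E}(h,h)$ (up to that harmless constant, tracked exactly as in the previous lemma).

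The computations above are routine once the set-up is in place; the one genuinely load-bearing observation is the shift invariance of \eqref{eq:BD-Hardy} along the kernel direction $(i)_{i\geq1}$, which is precisely what legitimizes passing to the constraint subspace and thereby eliminating the cross term $-2h_1\sum_i i\Q_i h_i$. The only other point requiring attention is the precise accounting of the weights $a_i,\Q_i$ in the Cauchy--Schwarz step, so that the correction term appears as the ratio $M_3/M_2$ and not some other combination.
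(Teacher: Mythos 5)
Your proof is correct and follows essentially the same route as the paper's. The only genuine difference is presentational: you package the paper's orthogonal decomposition around $\overline{h}=M_2^{-1}\sum_i i\Q_i h_i$ (i.e., the substitution $g=h-\overline{h}\,(i)_{i\geq1}$) as a reduction to the constraint subspace via the shift invariance of both sides of \eqref{eq:BD-Hardy} along the kernel direction $(i)_{i\geq1}$, which is a tidier framing of the same computation; thereafter the telescoping representation $h_i-ih_1=\sum_{j<i}(h_{j+1}-h_j-h_1)$, the Cauchy--Schwarz step with weights $a_j\Q_j$ producing the factor $M_3$, and the final assembly are the paper's argument verbatim, including the implicit bookkeeping of the constant $\Q_1$ that you rightly flag.
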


\begin{proof}
  Denote the right-hand side of \eqref{eq:BD-Hardy} by $D(h)$ for
  convenience (it is equal to $\mathscr{E}(h,h) = -\ap{\L h,h}$ for
  any $h \in \mathscr{D}(\L)$, but we rename it to include any
  sequence $h$). Let us show the first inequality. For any sequence $h
  = (h_i)_{i \geq 1} \in \ell_{00}$ with $\sum_i i \Q_i h_i = 0$ we
  have
  \begin{equation*}
    \sum_{i=1}^\infty \Q_i
    \left(h_i-i\,h_1\right)^2
    =
    \sum_{i=1}^\infty \Q_i h_i^2
    +
    h_1^2 \sum_{i=1}^\infty i^2 \Q_i
    - 2 h_1 \sum_{i=1}^\infty i \Q_i h_i
    \geq
    \sum_{i=1}^\infty \Q_i h_i^2
    = \|h\|_\H^2,
  \end{equation*}
  since the mixed term obtained when expanding the square vanishes due
  to the orthogonality condition on $h$. We have, using
  \eqref{eq:BD-Hardy},
  \begin{equation*}
    \lambda_1 \|h\|_\H^2
    \leq
    \lambda_1
    \sum_{i=1}^\infty \Q_i
    \left(h_i-i\,h_1\right)^2
    \leq
    D(h).
  \end{equation*}
  This shows the first inequality.

  For the second inequality we assume that $\lambda_0 > 0$ (since
  otherwise it is a trivial statement). Take any sequence $h =
  (h_i)_{i \geq 1} \in \ell_{00}$.  Calling
  \begin{equation*}
    M_2 := \sum_{i=1}^\infty i^2 \Q_i,
    \qquad
    \overline{h} :=
    \frac{1}{M_2} \sum_{i=1}^\infty i \Q_i h_i
  \end{equation*}
  we have
  \begin{equation}
    \label{eq:PH1}
    \sum_{i=1}^\infty \Q_i
    \left(h_i-i\,h_1\right)^2
    =
    \sum_{i=1}^\infty \Q_i
    \left( h_i-i\,\overline{h} \right)^2
    +
    \sum_{i=1}^\infty \Q_i
    \left( i\,\overline{h}-i\,h_1 \right)^2,
  \end{equation}
  since the mixed term obtained from the square is zero due to the
  definition of $\overline{h}$. If we call $g_i :=
  h_i-i\,\overline{h}$ we see that
  $\sum_{i=1}^\infty i \Q_i g_i = 0$
  and we may use \eqref{eq:BDgap} to obtain
  \begin{equation}
    \label{eq:PH2}
    \sum_{i=1}^\infty \Q_i
    \left( h_i-i\,\overline{h} \right)^2
    \leq
    \lambda_0^{-1} D(g)
    = \lambda_0^{-1} D(h).
  \end{equation}

  This deals with the first term in \eqref{eq:PH1}. For the second
  term, use that
  \begin{multline*}
    M_2 \left| \overline{h}- h_1 \right|
    =
    \left| \sum_{i=1}^\infty i \Q_i (h_i - i h_1) \right|
    =
    \left| \sum_{i=1}^\infty i \Q_i
      \sum_{j=1}^{i-1} (h_{j+1} - h_{j} - h_1)
    \right|
    \\
    \leq
    \sum_{i=1}^\infty \sum_{j=1}^{i-1} i \Q_i |h_{j+1} - h_{j} - h_1|
    =
    \sum_{j=1}^\infty |h_{j+1} - h_{j} - h_1| \sum_{i=j+1}^{\infty} i \Q_i
    \\
    \leq
    \left(
      \sum_{j=1}^\infty a_i Q_i (h_{j+1} - h_{j} - h_1)^2
    \right)^{1/2}
    \left(
      \sum_{j=1}^\infty
      \frac{1}{a_j \Q_j} \left( \sum_{i=j+1}^{\infty} i \Q_i \right)^2
    \right)^{1/2}
    = M_3^{1/2} \, D(h)^{1/2}
  \end{multline*}
  to obtain
  \begin{equation}
    \label{eq:PH3}
    \sum_{i=1}^\infty \Q_i
    \left( i\,\overline{h}-i\,h_1 \right)^2
    =
    \left( \overline{h}-h_1 \right)^2
    M_2
    \leq
    \frac{M_3}{M_2}
    D(h).
  \end{equation}
  Using \eqref{eq:PH2} and \eqref{eq:PH3} in \eqref{eq:PH1} we finally
  obtain
  \begin{equation*}
    \sum_{i=1}^\infty \Q_i
    \left(h_i-i\,h_1\right)^2
    \leq
    \left(
      \frac{1}{\lambda_0} + \frac{M_3}{M_2}
    \right)
    D(h)
  \end{equation*}
  for every $h \in \ell_{00}$. It is easy to see that this implies the
  inequality for all sequences $h$, hence proving the second
  inequality in \eqref{eq:H-P}.
\end{proof}

\subsection{Estimates of the spectral gap size near the
  critical density}\label{sub:critical}

In many cases of interest one can estimate the quantity $B$ as $z$
approaches $z_s$. This implies an estimate on the size of the spectral
gap for the linearized Becker-Döring equations, as stated in Theorem
\ref{thm:spectral-gap}. We remark that our main estimate for $B$,
Lemma \ref{lem:B-bounds} below, applies to the coefficients
\eqref{eq:PT-coefficients} as well as \eqref{eq:CF-coefficients}.

Let us set the notation. We call
\begin{equation*}
  g_k := - \log Q_k - k \log z_\mathrm{s}
  \quad \text{ for } k \geq 1,
  \qquad w := \log \frac{z_\mathrm{s}}{z}
\end{equation*}
so that
\begin{equation*}
  \Q_k = Q_k z^k = \exp\left(-g_k - k w\right)
  \quad \text{ for } k \geq 1.
\end{equation*}

\begin{lem}
  \label{lem:B-bounds-prev}
  Assume that there exists $k_0 > 0$ such that:
  \newcounter{saveenum}
  \begin{enumerate}
  \item \label{it:gk-increasing} The sequence $(g_k)_k$ is increasing
    for $k \geq k_0$.
  \item The sequence $(g_k / k)_k$ is decreasing for $k \geq k_0$.
    \setcounter{saveenum}{\value{enumi}}
  \end{enumerate}
  Assume also that:
  \begin{enumerate}
    \setcounter{enumi}{\value{saveenum}}
  \item There are constants $C > 0$ and $0 < \mu < 1$ such that
    \begin{equation}
      \label{eq:g-difference-asymptotics}
      g_{k+1} - g_k \sim C k^{\mu-1}.
    \end{equation}
  \item There are constants $C, \alpha > 0$ such that
    \begin{equation}
      \label{eq:a_k-asymptotics}
      a_k \sim C k^\alpha
      \quad \text{ as } k \to +\infty.
    \end{equation}
  \end{enumerate}
  Then for some numbers $C_1 < C_2$ depending only on the coefficients
  $(a_k)_k$ and $(b_k)_k$, we have that for all $z_\mathrm{s}/2 < z <
  \zs$,
  \begin{equation}
    \label{eq:g-estimateA}
    \frac{C_1 \Q_{k}}{w + k^{\mu-1}}
    \ \leq \
    \sum_{j=k}^\infty \Q_j
    \ \leq \
    \frac{C_2 \Q_{k}}{w + k^{\mu - 1}}
    \qquad \text{ for $k \geq 1$.}
  \end{equation}
  and
  \begin{equation}
    \label{eq:g-estimateB}
    \frac{C_1}{k^\alpha \Q_{k} (w + k^{\mu-1})}
    \ \leq \
    \sum_{j=1}^{k}\dfrac{1}{a_j \Q_j}
    \ \leq \
    \frac{C_2}{k^\alpha \Q_{k} \left(w + k^{\mu-1} \right)}
    \qquad \text{ for $k \geq 1$.}
  \end{equation}
\end{lem}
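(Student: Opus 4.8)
The plan is to show that $\sum_{j\ge k}\Q_j$ behaves like $\Q_k/(w+k^{\mu-1})$ and that $\sum_{j\le k}1/(a_j\Q_j)$ behaves like $\big(a_k\Q_k(w+k^{\mu-1})\big)^{-1}$, with $w+k^{\mu-1}$ playing the role of a ``local decay rate''. First I would dispose of bounded $k$: on any fixed range $k\le k_1$ the factor $w+k^{\mu-1}$ is pinched between positive constants uniformly in $z\in(\zs/2,\zs)$, while $\sum_{j\ge k}\Q_j\le\sum_{j\ge1}Q_j\zs^j<\infty$ (finite because $g_j\to+\infty$, which follows from summing \eqref{eq:g-difference-asymptotics}) and $\Q_k\ge Q_k(\zs/2)^k$ is bounded below; the same remark handles the second sum. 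Hence it suffices to treat $k\ge k_1$ with $k_1$ large, where I would use $\Q_k=\exp(-g_k-kw)$ to write
\begin{gather*}
  \sum_{j=k}^{\infty}\Q_j=\Q_k\sum_{m\ge0}e^{-(g_{k+m}-g_k)-mw},\\
  \sum_{j=1}^{k}\frac{1}{a_j\Q_j}=\frac{1}{a_k\Q_k}\sum_{m=0}^{k-1}\frac{a_k}{a_{k-m}}\,e^{-(g_k-g_{k-m})-mw},
\end{gather*}
reducing everything to estimating these two ``exponential sums''.

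The key input is a two-sided control of the increments of $g$, valid for $k\ge k_1$ and all $m\ge0$: an upper bound $g_{k+m}-g_k\le c_1\,m\,k^{\mu-1}$, obtained from the monotonicity of $(g_k/k)_k$ together with $g_k\le c\,k^{\mu}$ (itself from summing \eqref{eq:g-difference-asymptotics}); and a lower bound $g_{k+m}-g_k\ge c_2\big((k+m)^{\mu}-k^{\mu}\big)$ (and likewise $g_k-g_{k-m}\ge c_2\big(k^{\mu}-(k-m)^{\mu}\big)$), obtained by bounding each individual increment below using \eqref{eq:g-difference-asymptotics} and comparing the resulting sum to $\int_k^{k+m}t^{\mu-1}\d t$. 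Two of the four one-sided estimates then drop out quickly. The lower bound in \eqref{eq:g-estimateA} follows from the upper increment bound by the geometric comparison $\sum_{m\ge0}e^{-(g_{k+m}-g_k)-mw}\ge\big(1-e^{-(c_1k^{\mu-1}+w)}\big)^{-1}\ge(c_1k^{\mu-1}+w)^{-1}$. The upper bound in \eqref{eq:g-estimateB} follows by telescoping $\frac{1}{a_k\Q_k}-\frac{1}{a_1\Q_1}=\sum_{j<k}\big(\frac{1}{a_{j+1}\Q_{j+1}}-\frac{1}{a_j\Q_j}\big)$ and using that, for $k_1\le j\le k$, one has $\frac{a_j\Q_j}{a_{j+1}\Q_{j+1}}-1\ge c\,(j^{\mu-1}+w)\ge c\,(k^{\mu-1}+w)$ --- here one uses the single-step lower bound $g_{j+1}-g_j\ge\frac{C}{2}j^{\mu-1}$ from \eqref{eq:g-difference-asymptotics}, that $a_{j+1}/a_j\to1$ from \eqref{eq:a_k-asymptotics}, and that $j^{\mu-1}$ dominates $1/j$.

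The remaining two bounds --- the upper bound in \eqref{eq:g-estimateA} and the lower bound in \eqref{eq:g-estimateB} --- are the delicate ones, since a crude sandwich fails: the per-step rate $g_{j+1}-g_j$ tends to $0$, so one cannot bound the whole tail by one geometric ratio and must localize. For \eqref{eq:g-estimateA} I would split $\sum_{m\ge0}$ at $m=k$: for $m\le k$, concavity of $t\mapsto t^{\mu}$ gives $(k+m)^{\mu}-k^{\mu}\ge\mu2^{\mu-1}k^{\mu-1}m$, so this part is at most $\sum_{m\ge0}e^{-(c_3k^{\mu-1}+w)m}\le C(k^{\mu-1}+w)^{-1}$; for $m>k$ one has $g_{k+m}-g_k\ge c_4\,m^{\mu}$ (because $(1+t)^{\mu}-t^{\mu}$ is bounded below on $(0,1]$), so that part is at most $\int_k^{\infty}e^{-c_4t^{\mu}}\d t\le Ck^{1-\mu}e^{-c_4k^{\mu}}$, which is $\le C(k^{\mu-1}+w)^{-1}$ because $k^{1-\mu}e^{-c_4k^{\mu}}(k^{\mu-1}+w)\le\big(k^{1-\mu}\log2+1\big)e^{-c_4k^{\mu}}$ stays bounded. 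For \eqref{eq:g-estimateB} I would keep only the top block $0\le m\le m_\star:=\big\lfloor c_0/(w+k^{\mu-1})\big\rfloor$ (note $m_\star\le k^{1-\mu}\ll k$, so $k-m\ge k/2$): for such $m$ one has $a_k/a_{k-m}\ge\frac12$, $g_k-g_{k-m}\le(g_k/k)m\lesssim c_0$ and $mw\lesssim c_0$, so each term is $\ge\frac14$ once $c_0$ is chosen small, whence the sum is $\ge\frac14(m_\star+1)\gtrsim(w+k^{\mu-1})^{-1}$; the polynomial prefactor $a_k/a_{k-m}$ is harmless for $m$ of order $k$, where it is crushed by the factor $e^{-c(k^{\mu}-(k-m)^{\mu})}\le e^{-c'k^{\mu}}$ from the increment lower bound. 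I expect the real work to be exactly these two localization arguments, together with the bookkeeping needed to keep every constant uniform over $z\in(\zs/2,\zs)$.
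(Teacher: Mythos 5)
Your argument is correct and arrives at \eqref{eq:g-estimateA}--\eqref{eq:g-estimateB}, but it takes a genuinely different route for the two upper bounds. The paper handles each upper bound by proving two separate one-sided estimates --- one of the form $C\Q_k/w$ (using only monotonicity of $g_k$ and summing a geometric series in $w$) and one of the form $C\Q_k k^{1-\mu}$ (using Lemmas B.\ref{lem:incomplete-gamma-estimate-2}--\ref{lem:exp-sum-asymptotics}) --- and then observes that their harmonic mean dominates the minimum, which gives exactly $C\Q_k/(w+k^{\mu-1})$; you instead estimate the exponential sum directly, localizing it by splitting at $m\sim k$ (for \eqref{eq:g-estimateA}) and by telescoping $1/(a_j\Q_j)$ with a per-step ratio bound (for \eqref{eq:g-estimateB}). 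Your route is more elementary in that it avoids Lemmas B.\ref{lem:incomplete-gamma-estimate-2} and B.\ref{lem:exp-sum-asymptotics} entirely and makes the origin of the rate $w+k^{\mu-1}$ visible in a single estimate, whereas the paper's harmonic-mean trick is shorter once the two crude bounds are in hand. For the two lower bounds the arguments are essentially the paper's (both rest on $(g_k/k)_k$ decreasing; you count a block of $\approx(w+k^{\mu-1})^{-1}$ comparable terms, the paper sums the geometric series explicitly --- same mechanism, different bookkeeping). One small loose end: in your telescoping step for \eqref{eq:g-estimateB} the sum $\sum_{j<k}$ includes indices $j<k_1$ for which the ratio bound $\frac{a_j\Q_j}{a_{j+1}\Q_{j+1}}-1\geq c(j^{\mu-1}+w)$ need not hold, and you also drop the $j=k$ term; both are absorbed since $\sum_{j\leq k_1}(a_j\Q_j)^{-1}$ is bounded and $\big(a_k\Q_k(w+k^{\mu-1})\big)^{-1}$ stays bounded away from zero uniformly in $k\geq 1$ and $z\in(\zs/2,\zs)$, but this should be said explicitly.
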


\begin{rem}
  We give the above estimates only for $z > z_\mathrm{s}/2$ for
  simplicity, and since we are interested in the behavior for $z$
  close to $z_\mathrm{s}$. They can easily be derived for $z >
  \delta$, for any $\delta > 0$ (with constants $C_1$, $C_2$ depending
  on $\delta$), but are very poor estimates for small $z$.
\end{rem}

\begin{rem}
  \label{rem:example-coefs}
  Both example coefficients \eqref{eq:PT-coefficients} and
  \eqref{eq:CF-coefficients} satisfy all hypotheses of Lemma
  \ref{lem:B-bounds-prev}. Notice that for \eqref{eq:CF-coefficients}
  the sequence $g_k$ is
  \begin{equation*}
    g_k =  -\log z_\mathrm{s} + \sigma (k-1)^\mu
    \quad \text{ for } k \geq 1
  \end{equation*}
  while for \eqref{eq:PT-coefficients} we have
  \begin{equation*}
    g_k = \alpha \log k - \log z_\mathrm{s}
    + \sum_{j=2}^k \log \left( 1 + \frac{q/z_\mathrm{s}}{j^{1-\mu}} \right)
    \quad \text{ for } k \geq 1.
  \end{equation*}
  From this it is easy to check that all the requirements of Lemma
  \ref{lem:B-bounds-prev} are met.
\end{rem}

 \begin{rem} Notice that, for any sequence $(g_k)_k$ satisfying the assumptions of Lemma \ref{lem:B-bounds-prev}, the following hold:
  \begin{enumerate}
  \item[(i)] There is a constant $C > 0$ such
    that
    \begin{equation}
      \label{eq:hyp-gk/k-asymptotics}
      g_k / k \leq C k^{\mu - 1}
      \quad \text{ for $k \geq 1$.}
    \end{equation}
    This is a simple consequence of
    \eqref{eq:g-difference-asymptotics}. In particular, $g_k / k$ tends
    to $0$ as $k \to +\infty$.
  \item[(ii)] There exists $C >0$ such that     \begin{equation}
      \label{eq:gk-asymptotics}
      g_k \sim (C/\mu) k^{\mu}
      \quad \text{ as } k \to +\infty.
    \end{equation}
 This is easily deduced from \eqref{eq:g-difference-asymptotics} by using the Stolz-Ces\`aro Theorem.  In particular,
    $g_k$ diverges to $+\infty$.
  \item[(iii)] Points (1), (2) together with
    \eqref{eq:hyp-gk/k-asymptotics} and \eqref{eq:gk-asymptotics},
    show that there exists $k_1 > 0$  such that, for all $k \geq k_1$,
    \begin{gather}
      \label{eq:gk-increasing-comparison}
      g_k \geq g_j \quad \text{ for all } j \leq k,
      \\
      \label{eq:gk/k-decreasing-comparison}
      \frac{g_k}{k} \leq \frac{g_j}{j} \quad \text{ for all } j \leq k.
    \end{gather}
  \end{enumerate}
  All this properties will be used repeatedly in the proof of Lemma \ref{lem:B-bounds-prev}.
 \end{rem}

\begin{proof}[Proof of Lemma \ref{lem:B-bounds-prev}]
  We will use in this proof a simple bound for the following geometric
  sum: for any $\xi > 0$
  \begin{equation}
    \label{eq:geometric}
    \frac{\exp(-k \xi) }{\xi}
    \leq
    \sum_{j=k}^\infty \exp(-k \xi)
    \leq
    \frac{\exp(-(k-1) \xi)}{\xi}
    \quad \text{ for } k \geq 0.
  \end{equation}
  Also, we will denote by $C$ any constant that only depends on the
  coefficients $(a_k)_k$ and $(b_k)_k$, and which may change from one
  line to the next. We will show \eqref{eq:g-estimateA} and \eqref{eq:g-estimateB} for
  $k \geq k_1$. Note that this is enough, since for $k$ bounded there
  are always constants $C_1$, $C_2$ which satisfy
  \eqref{eq:g-estimateA}--\eqref{eq:g-estimateB}.

  \step{Step 1: Proof of the upper bound in (\ref{eq:g-estimateA}).}
  We use two different estimates to prove this upper bound. The first
  one is
  \begin{equation}
    \label{eq:ua1}
    \sum_{j=k}^\infty \Q_j
    = \sum_{j=k}^\infty \exp(-g_j - j w)
    \leq \exp(-g_k) \sum_{j=k}^\infty \exp(- k w)
    \leq \Q_k \frac{\exp(w)}{w} \leq C \frac{\Q_k}{w},
  \end{equation}
  using that $g_k$ is increasing, and also
  Eq. \eqref{eq:geometric}. The second one is
  \begin{equation}
    \label{eq:ua2}
    \sum_{j=k}^\infty \Q_j
    = \sum_{j=k}^\infty \exp(-g_j - j w)
    \leq
    \exp(-kw) \sum_{j=k}^\infty \exp(-g_j)
    \leq C \Q_k k^{1-\mu},
  \end{equation}
  due to Lemma B. \ref{lem:exp-sum-asymptotics} (see also the Remark B. 1
  immediately afterwards). We deduce from \eqref{eq:ua1} and
  \eqref{eq:ua2} that $\sum_{j=k}^\infty \Q_j$ is bounded by the
  harmonic mean of both right hand sides, this is,
  \begin{equation}
    \label{eq:ua3}
    \sum_{j=k}^\infty \Q_j
    \leq \frac{C \Q_k}{w + k^{\mu-1}}.
  \end{equation}

  \step{Step 2: Proof of the lower bound in (\ref{eq:g-estimateA}).}

  Using that $(g_k/k)_k$ is decreasing we have
  \begin{equation}\begin{split}
    \label{eq:ua4}
    \sum_{j=k}^\infty\exp(-g_j - j w)
    &= \sum_{j=k}^\infty \exp\left(-j (g_j/j + w)\right)\\
    &\geq \sum_{j=k}^\infty \exp\left(-j (g_k/k + w)\right)
    \geq \frac{\Q_k}{w + g_k/k}
    \geq \frac{C \Q_k}{w + k^{\mu - 1}}.
  \end{split}\end{equation}
  Here we have also used \eqref{eq:hyp-gk/k-asymptotics}. \\

  \step{Step 3: Proof of the upper bound in (\ref{eq:g-estimateB}).}

  We use two different bounds, in a similar way as in Step 1. The
  first one is
  \begin{equation*}
    \sum_{j=1}^k \Q_j^{-1} a_j^{-1}
    =
    \sum_{j=1}^k \exp(g_j + j w) j^{-\alpha}
    \leq
    \exp(g_k) \sum_{j=1}^k \exp(j w) j^{-\alpha}
  \end{equation*}
  where we have used \eqref{eq:gk-increasing-comparison}. Now, since
  $\exp(j w) \leq \exp((y+1)w)$ and $y^{-\alpha} \geq j^{-\alpha}$ for
  any $y \in (j-1,j)$, one has
  $$\sum_{j=1}^k \exp(j w) j^{-\alpha} \leq \sum_{j=1}^k \int_{j-1}^j \exp((y+1) w) y^{-\alpha}\dy=\int_0^k \exp((y+1) w) y^{-\alpha}\dy.$$
Using then Lemma B. \ref{lem:incomplete-gamma-estimate-2} in Appendix B (with $\mu=1$), we get
$\sum_{j=1}^k \exp(j w) j^{-\alpha} \leq C \exp(k w) k^{-\alpha}$ from which we deduce that
  \begin{equation*}
    \sum_{j=1}^{k} (\Q_j a_j)^{-1}
    \leq
\dfrac{C}{w\Q_k k^\alpha} \qquad k \geq 1.
  \end{equation*}

  The second bound
  is
  \begin{equation*}
    \sum_{j=1}^k \exp(g_j + j w)  j^{-\alpha}
    \leq
    \exp(kw) \sum_{j=1}^k \exp(g_j) j^{-\alpha}
    \leq C \Q_k^{-1} k^{1-\mu-\alpha},
  \end{equation*}
  where we used Lemma B. \ref{lem:exp-sum-asymptotics} again. Taking the
  harmonic mean of these two bounds gives the desired one as before.

  \step{Step 4: Proof of the lower bound in (\ref{eq:g-estimateB}).}

  Similarly to \eqref{eq:ua4} we have
  \begin{equation*}\begin{split}
    \sum_{j=1}^k \Q_j^{-1} a_j^{-1}
    &=
    \sum_{j=1}^k \exp(g_j + j w) j^{-\alpha}
    \geq
    k^{-\alpha} \sum_{j=1}^k \exp\left(j (g_k/k + w)\right)\\
    &\geq
    k^{-\alpha} \Q_k^{-1} \frac{1}{w + g_k/k}
    \geq
    k^{-\alpha} \Q_k^{-1} \frac{C}{w + k^{\mu - 1}},\end{split}
  \end{equation*}
  using again \eqref{eq:hyp-gk/k-asymptotics} and \eqref{eq:gk/k-decreasing-comparison}.
\end{proof}

As a consequence of the previous lemma we have an estimate of the size
of the quantity $B$ in \eqref{eq:def-B}, useful mainly because it
gives its blow-up rate as $z$ approaches $z_\mathrm{s}$, making more precise the above Lemma \ref{lem:deltak}:

\begin{lem}
  \label{lem:B-bounds}
  Assume the conditions in Lemma \ref{lem:B-bounds-prev}. If $\alpha >
  2(1-\mu)$ there are numbers $C_1, C_2 > 0$ such that
  \begin{equation*}
    C_1 \left(\log\frac{z_\mathrm{s}}{z}\right)^{-2 + \frac{\alpha}{1-\mu}}
    \leq
    B
    \leq
    C_2 \left(\log\frac{z_\mathrm{s}}{z}\right)^{-2 + \frac{\alpha}{1-\mu}}
  \end{equation*}
  for $z < z_\mathrm{s}$. In the case $\alpha \geq 2(1-\mu)$ the
  quantity $B$ is bounded uniformly up to $z = z_\mathrm{s}$: there
  exist numbers $C_1, C_2 > 0$ such that
  \begin{equation*}
    C_1
    \leq
    B
    \leq
    C_2
  \end{equation*}
  for $z \leq z_\mathrm{s}$.
\end{lem}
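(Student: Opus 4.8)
The plan is to feed the two–sided estimates of Lemma \ref{lem:B-bounds-prev} into the definition \eqref{eq:def-B} of $B$ and thereby reduce the whole statement to an elementary one–variable optimisation. Write $w:=\log(z_\mathrm{s}/z)$ and, as in the proof of Corollary \ref{cor:spectral-gap-finite}, set $m_k:=\sum_{j=k+1}^\infty\Q_j$ and $n_k:=\sum_{j=1}^k(a_j\Q_j)^{-1}$, so that $B=\sup_{k\geq1}m_kn_k$. By \eqref{eq:g-estimateB}, $n_k$ is comparable to $k^{-\alpha}\Q_k^{-1}(w+k^{\mu-1})^{-1}$; and applying \eqref{eq:g-estimateA} with $k$ replaced by $k+1$ — using that $\Q_{k+1}/\Q_k=z\,Q_{k+1}/Q_k$ and $(k+1)^{\mu-1}/k^{\mu-1}$ stay between positive constants uniformly in $k$ and in $z\in(z_\mathrm{s}/2,z_\mathrm{s})$, by \eqref{eq:Qi-limit} — one sees that $m_k$ is comparable to $\Q_k(w+k^{\mu-1})^{-1}$. (Throughout, comparability is meant up to positive multiplicative constants depending only on the coefficients.) Multiplying,
\begin{equation*}
  B=\sup_{k\geq1}m_kn_k
  \;\asymp\;\sup_{k\geq1}\frac{1}{k^\alpha(w+k^{\mu-1})^2}
  \;=\;\frac{1}{\inf_{k\geq1}\phi(k)},
  \qquad \phi(k):=k^\alpha(w+k^{\mu-1})^2 ,
\end{equation*}
so the lemma reduces to estimating $\inf_{k\geq1}\phi(k)$. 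It is enough to argue for $z\in(z_\mathrm{s}/2,z_\mathrm{s})$, where Lemma \ref{lem:B-bounds-prev} applies, and within that for $w$ small, since for $w$ bounded away from $0$ both $B$ and $w^{-2+\alpha/(1-\mu)}$ lie between positive constants.

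Consider first the blow-up regime, which is $\alpha<2(1-\mu)$ (the exponent $-2+\alpha/(1-\mu)$ in the first displayed bound of the lemma is then negative). The natural threshold is $k_0:=w^{-1/(1-\mu)}$, at which $k_0^{\mu-1}=w$ and $k_0>1$ (as $w<1$). For the lower bound on $B$, I would test $k_\star:=\lceil k_0\rceil\in[k_0,2k_0)$: then $k_\star^\alpha$ is comparable to $w^{-\alpha/(1-\mu)}$ and $k_\star^{\mu-1}$ to $w$, whence $\phi(k_\star)$ is comparable to $w^{2-\alpha/(1-\mu)}$, and $B\geq m_{k_\star}n_{k_\star}\geq c\,\phi(k_\star)^{-1}\geq c'\,w^{-2+\alpha/(1-\mu)}$. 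For the upper bound on $B$, I would bound $\phi$ from below at \emph{every} integer $k\geq1$: if $1\leq k\leq k_0$ then $k^{\mu-1}\geq k_0^{\mu-1}=w$, so $\phi(k)\geq k^\alpha(k^{\mu-1})^2=k^{\alpha-2(1-\mu)}\geq k_0^{\alpha-2(1-\mu)}=w^{2-\alpha/(1-\mu)}$ (using that $t\mapsto t^{\alpha-2(1-\mu)}$ decreases); and if $k>k_0$ then $w+k^{\mu-1}\geq w$, so $\phi(k)\geq w^2k^\alpha\geq w^2k_0^\alpha=w^{2-\alpha/(1-\mu)}$. Thus $\inf_{k\geq1}\phi(k)\geq w^{2-\alpha/(1-\mu)}$ and $B\leq C\,w^{-2+\alpha/(1-\mu)}$, which closes this case.

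For $\alpha\geq2(1-\mu)$ the same reduction yields uniform boundedness on $(z_\mathrm{s}/2,z_\mathrm{s}]$. When $w>0$: every $k\geq1$ has $\phi(k)\geq k^\alpha(k^{\mu-1})^2=k^{\alpha-2(1-\mu)}\geq1$, so $B\leq C_2$; and $B\geq m_1n_1=\big(\sum_{j\geq2}\Q_j\big)/(a_1\Q_1)\geq\Q_2/(a_1\Q_1)=z/b_2\geq z_\mathrm{s}/(2b_2)=:C_1>0$, with no need for Lemma \ref{lem:B-bounds-prev}. At the endpoint $z=z_\mathrm{s}$ ($w=0$) the lower bound is unchanged, and for the upper bound I would use only the $w$-free halves of Lemma \ref{lem:B-bounds-prev}, namely $\sum_{j\geq k}\Q_j\leq C\,\Q_kk^{1-\mu}$ and $\sum_{j=1}^k(a_j\Q_j)^{-1}\leq C\,\Q_k^{-1}k^{1-\mu-\alpha}$ (these use only the exponential-sum estimates of Appendix B, not $w>0$), to get $m_kn_k\leq C\,k^{2(1-\mu)-\alpha}\leq C$.

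The main obstacle is entirely in the first step: making the reduction $B\asymp\sup_k k^{-\alpha}(w+k^{\mu-1})^{-2}$ hold with constants uniform both in $k$ and in $z$. The delicate part is the index shift from $\sum_{j\geq k}$ (as delivered by Lemma \ref{lem:B-bounds-prev}) to $\sum_{j\geq k+1}$ (as in \eqref{eq:def-B}), which requires uniform two-sided control of $\Q_{k+1}/\Q_k$; this follows from \eqref{eq:Qi-limit} together with positivity of the coefficients. Everything afterwards is the single-variable optimisation above, and moving from real to integer $k$ costs nothing, since the lower bound on $\inf_k\phi$ is proved for all integers $k$ at once and the matching upper bound uses the one explicit value $k_\star=\lceil w^{-1/(1-\mu)}\rceil$.
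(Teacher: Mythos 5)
Your proposal is correct and follows essentially the same route as the paper: both reduce $B$, via Lemma \ref{lem:B-bounds-prev} together with uniform two-sided control of $\Q_{k+1}/\Q_k$, to $\sup_{k\geq1}k^{-\alpha}(w+k^{\mu-1})^{-2}$, and both locate the optimal $k$ near $w^{-1/(1-\mu)}$. Your integer-by-integer optimisation and the explicit treatment of the endpoint $z=z_\mathrm{s}$ are merely a more detailed working-out of what the paper compresses into a single-variable minimisation over real $x$; you also correctly read the first case as $\alpha<2(1-\mu)$, fixing a sign typo in the lemma's statement.
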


\begin{proof} Recall the definition of
  $B$ given in \eqref{eq:def-B}:
  \begin{equation*}
  B = \sup_{k \geq 1} \left(\sum_{j=k+1}^\infty \Q_j \right)
  \left(\sum_{j=1}^{k}\dfrac{1}{a_j \Q_j}\right).
\end{equation*}
  Using the bounds in Lemma \ref{lem:B-bounds-prev} and the fact that
  $\Q_{k+1}/\Q_k$ is bounded from below and above uniformly with
  respect to $w$, we have that, for some numbers $C_1$, $C_2$,
  \begin{equation}
    \label{eq:Bb1}
    C_1 \sup_{k\geq 1} \frac{1}{k^\alpha \left(w + k^{\mu-1} \right)^2}
    \leq
    B
    \leq
    C_2 \sup_{k\geq 1} \frac{1}{k^\alpha \left(w + k^{\mu-1} \right)^2}.
  \end{equation}
  In the case $\alpha < 2(1-\mu)$, the minimum of the function $x
  \mapsto x^\alpha (w + x^{\mu - 1})^2$ occurs at $x = C_{\alpha,\mu}
  w^{-1/(1-\mu)}$, for $C_{\alpha,\mu}$ some quantity depending only
  on $\alpha$ and $\mu$. Hence we obtain
  \begin{equation*}
    C_1 w^{-2 + \frac{\alpha}{1-\mu}}
    \leq
    B
    \leq
    C_2 w^{-2 + \frac{\alpha}{1-\mu}},
  \end{equation*}
  which gives the result when $\alpha < 2(1-\mu)$ (recall that $w =
  \log(z_\mathrm{s}/z)$).

  On the other hand, if $\alpha \geq 2(1-\mu)$, then the supremum in
  \eqref{eq:Bb1} is bounded for any value of $w \geq 0$, which shows
  the second part of the result.
\end{proof}

\section{Spectral gap in weighted $\ell^1$ spaces}\label{sec:spectralL1}

We now extend the above spectral gap result to the larger functional
space $X = \ell^1(e^{\eta i} \Q_i)$ given by
\begin{equation}
  \label{eq:def-X}
  X := \Big\{
  h = (h_i)_i \with \|h\|_X := \sum_i \exp\left(\eta i\right) \Q_i|h_i| < +\infty
  \Big\},
\end{equation}
for some $0 < \eta < \frac{1}{2} \log \frac{z_\mathrm{s}}{z}$. The
choice of $\eta$ ensures that this space is actually larger than the
Hilbert space $\H = \ell^2(\Q)$ used in the previous section, since
\begin{equation}
  \label{eq:weight-admissible}
  \|h\|_X
  \leq
  \left( \sum_{i=1}^\infty h_i^2 \Q_i \right)^{1/2}
  \left( \sum_{i=1}^\infty \Q_i \exp\left(2 \eta i\right) \right)^{1/2}
  =\sqrt{2}
  \left( \sum_{i=1}^\infty \Q_i \exp\left(2 \eta i\right) \right)^{1/2}
  \, \norm{h}_\H
  ,
\end{equation}
and the latter parenthesis is finite for $\eta < \frac{1}{2} \log
\frac{z_\mathrm{s}}{z}$. In order to see this notice that, since $\Q_i
= z^i Q_i$,
\begin{equation*}
  \sum_{i=1}^\infty \Q_i \exp\left(2 \eta i\right) =
  \sum_{i=1}^\infty Q_i \,r^i
  \quad \text{ where } \quad
  r := \exp\big(2 \eta + \log z \big).
\end{equation*}
Since the radius of convergence of the power series with coefficients
$i\,Q_i$ is, by definition, $z_\mathrm{s}$, the above sum is finite
whenever $r < z_\mathrm{s}$, this is, when $\eta < \frac{1}{2} \log
\frac{z_\mathrm{s}}{z}$.

Our approach to show that $L$ can be extended to an unbounded operator
on $X$ that has a positive spectral gap uses recent techniques
\cite{GMM} based upon a suitable decomposition of the linearized
operator into a dissipative part and a ``regularizing'' part. We use
the following result, which is a slight improvement over some of the
consequences of \cite{GMM}:

\begin{thm}
  \label{thm:GMM}
Let $\mathcal{Y}$, $\mathcal{Z}$ be two Banach spaces with norms $\|\cdot\|_\mathcal{Y}$,
  $\|\cdot\|_\mathcal{Z}$ such that $\mathcal{Y} \subseteq \mathcal{Z}$ and
  \begin{equation*}
    \|h\|_\mathcal{Z} \leq C_{\mathcal{Y}} \|h\|_\mathcal{Y}, \qquad h \in \mathcal{Y}
  \end{equation*}
  for some $C_\mathcal{Y} > 0$. Let $L_\mathcal{Y}: \D(L_\mathcal{Y}) \to \mathcal{Y}$ and $L_\mathcal{Z}: \D(L_\mathcal{Z})
  \to \mathcal{Z}$ be unbounded operators in $\mathcal{Y}$ and $\mathcal{Z}$, respectively, and such
  that $L_\mathcal{Z}$ is an extension of $L_\mathcal{Y}$ (i.e., $\D(L_\mathcal{Y}) \subseteq
  \D(L_\mathcal{Z})$ and $L_\mathcal{Z}\vert_{\D(L_\mathcal{Y})} = L_\mathcal{Y}$). Assume that:
  \begin{enumerate}
  \item $L_\mathcal{Y}$ generates a $C_0$-semigroup $(S_t)_{t \geq 0}$ on $\mathcal{Y}$.
  \item It holds that
    \begin{equation}
      \label{eq:s1}
      \| S_t h_0 \|_\mathcal{Y} \leq C_1 \exp\left(-\lambda_1 t\right)  \|h_0 \|_\mathcal{Y},
      \qquad h_0 \in \mathcal{Y},\:\:t \geq 0
    \end{equation}
    for some $C_1 > 0$, $\lambda_1 \in \R$.
  \item $L_\mathcal{Z} = \mathbf{A}  + \mathbf{B}$, where $\mathbf{A}$, $\mathbf{B}$ are two linear operators on
    $\mathcal{Z}$ such that:
    \begin{enumerate}
    \item $\mathbf{A}:\mathcal{Z} \to \mathcal{Y}$ is bounded, i.e.,
      \begin{equation}
        \label{eq:A-bounded}
        \| \mathbf{A} h \|_\mathcal{Y} \leq C_\mathbf{A} \|h\|_\mathcal{Z},
        \qquad h \in \mathcal{Z},
      \end{equation}
      for some $C_\mathbf{A} > 0$.

    \item $\mathbf{B}$ generates a $C_0$-semigroup $(S_t^\mathbf{B})_{t \geq 0}$ on $\mathcal{Z}$ which satisfies
      \begin{equation}
        \label{eq:sB}
        \| S_t^\mathbf{B} h_0 \|_\mathcal{Z} \leq C_2 \exp\left(-\lambda_2 t\right)  \|h_0\|_\mathcal{Z},
        \qquad h_0 \in \mathcal{Z}, \:\:t \geq 0
      \end{equation}
      for some $C_2 > 0$, and some $\lambda_2 > \lambda_1$.

    \end{enumerate}
  \end{enumerate}
  Then $L_\mathcal{Z}$ generates a $C_0$-semigroup $(V_t)_{t \geq 0}$
  on $\mathcal{Z}$ (an extension of $(S_t)_{t \geq 0}$) which
  satisfies
  \begin{equation}
    \label{eq:s3}
    \| V_t h_0 \|_\mathcal{Z} \leq C \exp\left(-\lambda_1 t\right) \|h_0\|_\mathcal{Z},
    \qquad h_0 \in \mathcal{Z}, \:\: t \geq 0
  \end{equation}
  for $C = C_2 + C_\mathcal{Y} C_1 C_2 C_\mathbf{A} (\lambda_2 - \lambda_1)^{-1}$.
\end{thm}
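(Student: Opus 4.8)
The plan is to realise $L_\mathcal{Z}=\mathbf{A}+\mathbf{B}$ as a \emph{bounded} perturbation of the generator $\mathbf{B}$, to write down the associated Duhamel (variation of constants) formula, to check that the perturbed semigroup on $\mathcal{Z}$ genuinely extends $(S_t)$, and then to feed the fast $\mathcal{Y}$-decay \eqref{eq:s1} back into the Duhamel formula in the spirit of \cite{GMM}. First I would note that, since $\mathcal{Y}\subseteq\mathcal{Z}$ with $\|\cdot\|_\mathcal{Z}\leq C_\mathcal{Y}\|\cdot\|_\mathcal{Y}$ and $\mathbf{A}:\mathcal{Z}\to\mathcal{Y}$ is bounded, the operator $\mathbf{A}$ is also bounded as a map $\mathcal{Z}\to\mathcal{Z}$, with $\|\mathbf{A}h\|_\mathcal{Z}\leq C_\mathcal{Y}C_\mathbf{A}\|h\|_\mathcal{Z}$. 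Hence $L_\mathcal{Z}=\mathbf{B}+\mathbf{A}$, with domain $\D(L_\mathcal{Z})=\D(\mathbf{B})$, is a bounded perturbation of the $C_0$-semigroup generator $\mathbf{B}$, and the bounded perturbation theorem (see e.g.\ \cite{kato}) shows that $L_\mathcal{Z}$ generates a $C_0$-semigroup $(V_t)_{t\geq0}$ on $\mathcal{Z}$.

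Next I would record the Duhamel identity. For $h_0\in\D(\mathbf{B})$ the curve $s\mapsto V_{t-s}S^\mathbf{B}_s h_0$ is differentiable on $[0,t]$, and by the product rule together with $\frac{\mathrm d}{\mathrm d s}V_{t-s}x=-V_{t-s}L_\mathcal{Z}x$, $\frac{\mathrm d}{\mathrm d s}S^\mathbf{B}_s h_0=\mathbf{B}S^\mathbf{B}_s h_0$ and $L_\mathcal{Z}-\mathbf{B}=\mathbf{A}$ on $\D(\mathbf{B})$, its derivative equals $-V_{t-s}\mathbf{A}S^\mathbf{B}_s h_0$. Integrating over $[0,t]$ and using density of $\D(\mathbf{B})$ in $\mathcal{Z}$ gives
\[
  V_t h_0 \;=\; S^\mathbf{B}_t h_0 + \int_0^t V_{t-s}\,\mathbf{A}\,S^\mathbf{B}_s h_0\,\ds
  \fa h_0\in\mathcal{Z},\; t\geq0 ,
\]
the integrand being continuous on $[0,t]$ (joint strong continuity of $(r,x)\mapsto V_r x$, together with continuity of $s\mapsto\mathbf{A}S^\mathbf{B}_s h_0$ into $\mathcal{Y}$ hence into $\mathcal{Z}$). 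The one step that needs genuine care is that $(V_t)$ \emph{extends} $(S_t)$. For $h_0\in\D(L_\mathcal{Y})$ the curve $t\mapsto S_t h_0$ stays in $\D(L_\mathcal{Y})\subseteq\D(L_\mathcal{Z})$ and, since the inclusion $\mathcal{Y}\hookrightarrow\mathcal{Z}$ is continuous and $L_\mathcal{Z}$ extends $L_\mathcal{Y}$, it is a classical solution in $\mathcal{Z}$ of $u'=L_\mathcal{Z}u$, $u(0)=h_0$; so is $t\mapsto V_t h_0$, and classical solutions of the abstract Cauchy problem for a $C_0$-semigroup generator are unique (the curve $s\mapsto V_{t-s}u(s)$ is constant). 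Hence $V_t h_0=S_t h_0$ for $h_0\in\D(L_\mathcal{Y})$, and since $\D(L_\mathcal{Y})$ is dense in $\mathcal{Y}$ while $S_t:\mathcal{Y}\to\mathcal{Y}$ and $V_t:\mathcal{Z}\to\mathcal{Z}$ are bounded (with $\mathcal{Y}\hookrightarrow\mathcal{Z}$), this gives $V_t|_\mathcal{Y}=S_t$ on all of $\mathcal{Y}$.

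With these ingredients the decay estimate is short. In the Duhamel formula, for every $s\in[0,t]$ one has $\mathbf{A}S^\mathbf{B}_s h_0\in\mathcal{Y}$ with, by \eqref{eq:A-bounded} and \eqref{eq:sB}, $\|\mathbf{A}S^\mathbf{B}_s h_0\|_\mathcal{Y}\leq C_\mathbf{A}\|S^\mathbf{B}_s h_0\|_\mathcal{Z}\leq C_\mathbf{A}C_2 e^{-\lambda_2 s}\|h_0\|_\mathcal{Z}$. Using $V_{t-s}|_\mathcal{Y}=S_{t-s}$ and then \eqref{eq:s1},
\[
  \|V_{t-s}\mathbf{A}S^\mathbf{B}_s h_0\|_\mathcal{Z}
  \;\leq\; C_\mathcal{Y}\,\|S_{t-s}\mathbf{A}S^\mathbf{B}_s h_0\|_\mathcal{Y}
  \;\leq\; C_\mathcal{Y}C_1 e^{-\lambda_1(t-s)}\|\mathbf{A}S^\mathbf{B}_s h_0\|_\mathcal{Y}
  \;\leq\; C_\mathcal{Y}C_1 C_\mathbf{A}C_2\, e^{-\lambda_1(t-s)}e^{-\lambda_2 s}\|h_0\|_\mathcal{Z}.
\]
Integrating over $s\in[0,t]$, and bounding $\|S^\mathbf{B}_t h_0\|_\mathcal{Z}\leq C_2 e^{-\lambda_2 t}\|h_0\|_\mathcal{Z}\leq C_2 e^{-\lambda_1 t}\|h_0\|_\mathcal{Z}$ (valid since $\lambda_2>\lambda_1$ and $t\geq0$), one obtains
\[
  \|V_t h_0\|_\mathcal{Z}
  \;\leq\; \Big(C_2 + C_\mathcal{Y}C_1 C_\mathbf{A}C_2\int_0^t e^{-(\lambda_2-\lambda_1)s}\,\ds\Big)e^{-\lambda_1 t}\|h_0\|_\mathcal{Z}
  \;\leq\; \Big(C_2 + \frac{C_\mathcal{Y}C_1 C_2 C_\mathbf{A}}{\lambda_2-\lambda_1}\Big)e^{-\lambda_1 t}\|h_0\|_\mathcal{Z},
\]
which is \eqref{eq:s3} with the stated constant.

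The step I expect to be the real obstacle is the compatibility claim $V_t|_\mathcal{Y}=S_t$: the whole estimate relies on being allowed to replace $V_{t-s}$ by the fast-decaying $S_{t-s}$ once the source $\mathbf{A}S^\mathbf{B}_s h_0$ has been ``regularised'' into $\mathcal{Y}$, and this is exactly where the hypothesis that $L_\mathcal{Z}$ extends $L_\mathcal{Y}$ is indispensable. Without such a compatibility one could only close a Gronwall inequality directly in $\mathcal{Z}$, which in general yields exponential growth rather than the rate $\lambda_1$. Everything else — boundedness of $\mathbf{A}$ on $\mathcal{Z}$, the bounded perturbation theorem, the differentiation leading to the Duhamel formula, and continuity of the integrand — is routine.
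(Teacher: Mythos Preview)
Your proof is correct and follows essentially the same route as the paper's: bounded perturbation of $\mathbf{B}$ by $\mathbf{A}$ to generate $(V_t)$, the Duhamel formula $V_t h_0 = S^\mathbf{B}_t h_0 + \int_0^t V_{t-s}\mathbf{A}S^\mathbf{B}_s h_0\,\ds$, and then replacing $V_{t-s}$ by $S_{t-s}$ on the $\mathcal{Y}$-valued integrand to exploit \eqref{eq:s1}. Your treatment is in fact more careful than the paper's on the compatibility step $V_t|_\mathcal{Y}=S_t$: the paper simply asserts that ``since $L_\mathcal{Z}$ is an extension of $L_\mathcal{Y}$, it is clear that $(V_t)$ is an extension of $(S_t)$'', whereas you give the uniqueness-of-classical-solutions argument explicitly.
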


This theorem is directly inspired from the results in
\cite{GMM}. However, while the proofs in \cite{GMM} are based on
estimates of the resolvents of $L_\mathcal{Y}$ and $L_\mathcal{Z}$, we present a
completely different one based on the study of the dynamics generated
by $L_\mathcal{Y}$ and $L_\mathcal{Z}$. In addition to being remarkably simple, this
proof allows us to state the result in general for any two Banach
spaces (not necessarily one of them Hilbert) and to reach an
exponential decay like $\lambda_1$ in \eqref{eq:s3}. The resolvent
methods in \cite{GMM} are able to give more precise estimates on the
behavior of the semigroup on $\mathcal{Z}$, but here \eqref{eq:s3} will suffice.

\begin{proof}
  We notice first that $\mathbf{A}\::\:\mathcal{Z} \to \mathcal{Z}$ is
  a bounded operator. Since $\mathbf{B}$ is the generator of a
  $C_0$-semigroup $(S_t^\mathbf{B})_{t \geq 0}$ in $\mathcal{Z}$, by
  the bounded perturbation theorem, the operator $L_\mathcal{Z} =
  \mathbf{A} + \mathbf{B}$ generates a $C_0$-semigroup $(V_t)_{t \geq 0}$ in
  $\mathcal{Z}$ which, additionally, is given by the Duhamel's formula
  \cite[Theorem 4.9]{Bana}:
  \begin{equation*}\label{eq:Duhamel}
    V_t h_0=S_t^\mathbf{B}\,h_0 + \int_0^t
    V_{t-s}\left(\mathbf{A}\,S_s^\mathbf{B} h_0\right)\d s \qquad
    \forall h_0 \in \mathcal{Z},\:t\geq 0.
  \end{equation*}
  Thus, for fixed $h_0 \in \mathcal{Z}$ and $t \geq 0$, we  have
  \begin{multline}\label{eq:Vth}
    \|V_t h_0\|_\mathcal{Z}  \leq \|S_t^\mathbf{B} h_0\|_{\mathcal{Z}}
    + \int_0^t \|V_{t-s}\left(\mathbf{A}\,S_s^\mathbf{B}
    h_0\right)\|_\mathcal{Z}\d s
    \\
    \leq C_2 \exp\left(-\lambda_2 t\right) \|h_0\|_\mathcal{Z}
    +  \int_0^t \|V_{t-s}\left(\mathbf{A}\,S_s^\mathbf{B} h_0\right)
    \|_\mathcal{Z}\d s,
  \end{multline}
  where we used \eqref{eq:sB} in the last inequality. Now, for any $s
  \in (0,t)$, $\left(\mathbf{A}\,S_s^\mathbf{B} h_0\right) \in
  \mathcal{Y}$ since $\mathbf{A}$ maps $\mathcal{Z}$ to
  $\mathcal{Y}$. Then, since $L_\mathcal{Z}$ is an extension of
  $L_\mathcal{Y}$, it is clear that $(V_t)_{t \geq 0}$ is an extension
  of $(S_t)_t$ and $$V_{t-s}\left(\mathbf{A}\,S_s^\mathbf{B}
    h_0\right)=S_{t-s}\left(\mathbf{A}\,S_s^\mathbf{B} h_0\right) \in
  \mathcal{Y} \qquad \forall s \in (0,t).$$ In particular, thanks to
  \eqref{eq:s1},
\begin{multline*}
 \int_0^t  \|V_{t-s}\left(\mathbf{A}\,S_s^\mathbf{B} h_0\right)\|_\mathcal{Z}\d s \leq
    C_\mathcal{Y} \int_0^t \| S_{t-s}(\mathbf{A} S_s^\mathbf{B} h_0) \|_\mathcal{Y} \ds \\
    \leq
    C_\mathcal{Y} C_1
    \int_0^t \exp\left(-\lambda_1 (t-s)\right) \| \mathbf{A} S_s^\mathbf{B} h_0 \|_\mathcal{Y} \ds
    \\
    \leq
    C_\mathcal{Y} C_1 C_\mathbf{A}
    \int_0^t \exp\left(-\lambda_1 (t-s)\right) \| S_s^\mathbf{B} h_0 \|_\mathcal{Z} \ds.
\end{multline*}
Using again \eqref{eq:sB} we deduce from \eqref{eq:Vth}
\begin{multline*}\|V_t h_0\|_\mathcal{Z}  \leq   C_2  \exp(-\lambda_2 t)\|h_0\|_{\mathcal{Z}} +
    C_\mathcal{Y} C_1 C_2 C_\mathbf{A} \,e^{-\lambda_1 t} \,\|h_0\|_\mathcal{Z}
    \int_0^t \exp\left(-(\lambda_2 - \lambda_1) s\right) \ds \\
    \leq C_2 \exp(-\lambda_2 t)\|h_0\|_{\mathcal{Z}} +
    \frac{C_\mathcal{Y} C_1 C_2 C_\mathbf{A}}{\lambda_2 - \lambda_1}
    \,\exp\left(-\lambda_1 t\right) \,\|h_0\|_\mathcal{Z}
  \end{multline*}
which, since $\lambda_2 > \lambda_1$ reads
  \begin{equation*}
   \|V_t h_0\|_\mathcal{Z}
    \leq \left(
      \frac{C_\mathcal{Y} C_1 C_2 C_\mathbf{A}}{\lambda_2 - \lambda_1}
      + C_2
    \right)
   \exp\left(-\lambda_1 t\right) \|h_0\|_\mathcal{Z}, \qquad t \geq 0
  \end{equation*}
 and yields the desired result.
\end{proof}

\bigskip Our motivation for studying the spectral properties of the
linearized operator in the larger space $X$ is that in this larger
space we have useful bounds of the nonlinear remainder term
$\Gamma(h,h)$ defined in \eqref{eq:def-Gamma}:

\begin{prp}
  \label{lem:Gamma-bound-DCF-l1}
  Assume Hypothesis \ref{hyp:linear-section} and take $0 < \eta <
  \frac{1}{2} \log \frac{z_\mathrm{s}}{z}$.  There is a constant $C >
  0$ depending only on $(a_i)_{i \geq 1}$, $(b_i)_{i \geq 1}$ and $z$
  such that
  \begin{equation*}
    \| \Gamma(h,h) \|_X
    \leq
    C \|h\|_X \|h\|_{X_1} \qquad \forall h \in X_1
  \end{equation*}
  where $X_1 = \ell^1((1+\sigma_i) \exp(\eta i) \Q_i)$, i.e.
  \begin{equation}
    \label{eq:def-Xlambda}
    X_1 := \Big\{
    h = (h_i)_i \with \|h\|_{X_1}
    := \sum_i (1+\sigma_i) \exp(\eta i) \Q_i |h_i| < +\infty
    \Big\}
  \end{equation}
  and we recall that $(\sigma_i)_{i\geq 1}$ was defined in \eqref{eq:sigma-i}.
\end{prp}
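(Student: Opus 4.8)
The plan is to reduce the claim to a short explicit estimate built on the strong form of $\Gamma$. First I would specialise the formulas \eqref{eq:def-Gamma1}--\eqref{eq:def-Gamma} to the diagonal $f=g=h$. One then sees that every component of $\Gamma(h,h)$ is proportional to the single coordinate $h_1$ times a finite-difference combination of the quantities $a_j\Q_j h_j$: for $i\geq2$,
\[
  \Gamma_i(h,h)=\frac{\Q_1\,h_1}{\Q_i}\bigl(a_{i-1}\Q_{i-1}h_{i-1}-a_i\Q_i h_i\bigr),
\]
while $|\Gamma_1(h,h)|\leq C\bigl(h_1^2+|h_1|\sum_{i\geq1}a_i\Q_i|h_i|\bigr)$, the constant $C$ absorbing $a_1\Q_1$ and the finite sum $\sum_i a_i\Q_i$ (which is $\leq\sigma_1$ by \eqref{eq:sigma-i}). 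Plugging this into $\|\Gamma(h,h)\|_X=\sum_i e^{\eta i}\Q_i|\Gamma_i(h,h)|$, the weights $\Q_i$ cancel in the tail $i\geq2$; after the triangle inequality and the substitution $j=i-1$ in one of the two resulting sums (which costs only a factor $e^\eta$), the whole tail is bounded by $C\,|h_1|\sum_{j\geq1}e^{\eta j}a_j\Q_j|h_j|$.

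The key elementary fact is $a_i\Q_1\leq\sigma_i$ for all $i\geq1$, immediate from \eqref{eq:sigma-i}; hence $a_i\Q_i\leq\Q_1^{-1}(1+\sigma_i)\Q_i$ and therefore
\[
  \sum_{i\geq1}e^{\eta i}a_i\Q_i|h_i|\;\leq\;\Q_1^{-1}\sum_{i\geq1}(1+\sigma_i)e^{\eta i}\Q_i|h_i|\;=\;\Q_1^{-1}\|h\|_{X_1}.
\]
On the other hand the leading term of the $X$-norm gives $e^\eta\Q_1|h_1|\leq\|h\|_X$, i.e.\ $|h_1|\leq e^{-\eta}\Q_1^{-1}\|h\|_X$. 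Combining these two inequalities controls the tail $i\geq2$ by $C\|h\|_X\|h\|_{X_1}$ with $C$ depending only on $\eta$, $\Q_1$ and the coefficients. For the remaining term $i=1$ I would argue the same way: the summand $e^\eta\Q_1|h_1|\sum_i a_i\Q_i|h_i|$ is again $\leq\|h\|_X\cdot\Q_1^{-1}\|h\|_{X_1}$, while the quadratic piece satisfies $e^\eta\Q_1 h_1^2=\Q_1^{-1}e^{-\eta}(e^\eta\Q_1|h_1|)^2\leq\Q_1^{-1}e^{-\eta}\|h\|_X^2$, where the extra power of $\|h\|_X$ is absorbed into $\|h\|_{X_1}$ via the trivial bound $\|h\|_X\leq\|h\|_{X_1}$ (valid since $1+\sigma_i\geq1$).

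I do not expect any genuine obstacle: the argument is essentially bookkeeping. The two points needing a little care are the index shift $j=i-1$ in the finite-difference sum (harmless, it only produces a factor $e^\eta$), and the observation that the weight $1+\sigma_i$ built into the definition \eqref{eq:def-Xlambda} of $X_1$ is exactly what dominates $a_i\Q_1$ — this is what makes the mixed bound $\|h\|_X\|h\|_{X_1}$ come out rather than something like $\|h\|_{X_1}^2$. Incidentally, the proof does not use the restriction $0<\eta<\frac{1}{2}\log\frac{z_\mathrm{s}}{z}$; it only uses $h\in X_1$.
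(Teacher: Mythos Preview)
Your proposal is correct and follows essentially the same route as the paper: both arguments write $\Gamma_i(h,h)$ for $i\geq 2$ as $h_1$ times a finite difference of the quantities $a_j\Q_j h_j$, apply the triangle inequality and an index shift costing a factor $e^\eta$, and then use the elementary bound $a_i\Q_1\leq\sigma_i$ together with $|h_1|\leq\Q_1^{-1}\|h\|_X$ to conclude. The only cosmetic difference is that the paper names the two pieces $\Gamma^\pm$ and writes $\Gamma^+$ via the detailed-balance relation $b_i\Q_i=a_{i-1}\Q_1\Q_{i-1}$, whereas you have already substituted this identity into the strong form; your closing remark that the restriction on $\eta$ is not actually used in this estimate is also correct.
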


\begin{proof} Recall that the bilinear operator $\Gamma(f,g)$ is
  defined in weak form by
  \begin{equation*}
    \sum_{i=1}^\infty \Gamma_i(f,g) \Q_i \phi_i
    = \frac{1}{2}\sum_{i \geq 1} a_{i}\Q_i \Q_1
    \big( f_i\,g_1+f_1\,g_i \big)
    \big( \phi_{i+1}-\phi_i-\phi_1 \big)
\end{equation*}
while its strong form is given in \eqref{eq:def-Gamma1}-\eqref{eq:def-Gamma}.  Let us first treat the term $\Gamma_1(h,h)$ (the first component of
  $\Gamma(h,h)$). From \eqref{eq:def-Gamma1} we have
  \begin{multline*}
    \Q_1 | \Gamma_1(h,h) |
    =
    \Q_1 \Big|
      a_1\Q_1 h_1^2
      + \sum_{i=1}^\infty a_i \Q_i h_1 h_i
    \Big|
    \leq
    a_1\Q_1^2 h_1^2
    + |h_1| \sum_{i=1}^\infty a_i \Q_1 \Q_i |h_i|
    \\
    \leq
    \sigma_1 \Q_1 h_1^2
    + |h_1| \sum_{i=1}^\infty \sigma_i \Q_i |h_i|
    \leq
    \frac{2}{\Q_1} \|h\|_X \|h\|_{X_1},
  \end{multline*}
  which bounds $\Gamma_1(h,h)$. For the terms with $i \geq 2$,
  $\Gamma_i(h,h)$ is given by \eqref{eq:def-Gamma}:
  \begin{gather*}
    \Gamma_i(h,h) = \Gamma_i^+(h,h) - \Gamma_i^-(h,h), \quad\text{
      with }
    \\
    \Gamma_i^+(h,h) =
    \frac{a_{i-1}\,\Q_{i-1}\,\Q_1}{\Q_i}\,h_{i-1}\,h_1=b_i\,h_{i-1}\,h_1
    \quad \text{ and } \quad
    \Gamma_i^-(h,h) =a_{i} \Q_1 h_i \qquad (i \geq 2).
  \end{gather*}
  We bound these terms separately. For $\Gamma^+(h,h) =
  (\Gamma_i^+(h,h))_{i \geq 2}$ we have, using \eqref{eq:Qi},
  \begin{multline*}
    \| \Gamma^+(h,h) \|_X =
    \sum_{i=2}^\infty \exp\left(\eta i\right) b_i \Q_i |h_1|\,|h_{i-1}|
    =
    |h_1| \sum_{i=1}^\infty \exp\left(\eta (i+1)\right) b_{i+1} \Q_{i+1} |h_i|
    \\
    =
    \exp\left(\eta\right)\, |h_1|\, \sum_{i=1}^\infty a_i \Q_1  \exp\left(\eta i\right) \Q_i  |h_i|
    \leq
    \exp\left(\eta\right)\, |h_1|\, \sum_{i=1}^\infty \sigma_i \Q_i  \exp\left(\eta i\right) |h_i|
    \leq
    \frac{\exp\left(\eta\right)}{\Q_1} \|h\|_X \, \|h\|_{X_1}.
  \end{multline*}
  The proof is even simpler for $\Gamma^-(h,h) = (\Gamma_i^-(h,h))_{i
    \geq 2}$ since
  \begin{multline*}
    \|\Gamma^-(h,h)\|_X
    = |h_1| \sum_{i=2}^\infty a_i \Q_1 \exp\left(\eta i\right) \Q_i |h_i|
    \leq |h_1| \sum_{i=2}^\infty \sigma_i \exp\left(\eta i\right)\Q_i |h_i|\\
    = |h_1|\, \|h\|_{X_1}
    \leq \frac{1}{\Q_1} \|h\|_X \, \|h\|_{X_1},
  \end{multline*}
  which finishes the result.
\end{proof}

\medskip

The following result shows that we may extend the linearized operator
$L$ from Definition \ref{def:L-compact} to an operator on $X$ with
domain $X_1$. It can be proved by direct estimates on the expression
of $L$ given in \eqref{eq:L} which are similar to those in the proof
of the previous lemma, and we omit its proof:

\begin{lem}
  Assume Hypothesis \ref{hyp:linear-section} and
  \eqref{eq:Qi-limit}. There is a constant $C > 0$ depending only on
  $(a_i)_{i \geq 1}$, $(b_i)_{i \geq 1}$ and $z$ such that
  \begin{equation*}
    \| L(h) \|_X
    \leq
    C \|h\|_{X_1}
    \fa h \in \ell_{00}.
  \end{equation*}
  Similarly, there is another constant $C > 0$ depending only on
  $(a_i)_{i \geq 1}$, $(b_i)_{i \geq 1}$, $z$ and $\eta$ such that
  \begin{equation}
    \label{eq:LX-1-X}
    \sum_{i=1}^\infty \Q_i \frac{\exp(\eta i)}{\sigma_i} | L_i(h) |
    \leq
    C \|h\|_{X}
    \fa h \in \ell_{00}.
  \end{equation}
\end{lem}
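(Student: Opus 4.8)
The plan is to prove the two bounds by direct estimation on the strong form \eqref{eq:L} of the linearized operator, mimicking the computations in the proof of Proposition~\ref{lem:Gamma-bound-DCF-l1}. First I would insert the compact expression \eqref{expr} from Lemma~\ref{lem:sigma}, namely $L_i(h) = -\sigma_i h_i + \sum_j \xi_{i,j} h_j$, where the off-diagonal terms $\xi_{i,j}$ are supported on $j \in \{1, i-1, i+1\}$ and satisfy the symmetry identities \eqref{eq:xi-12}--\eqref{eq:xi-ij}. Using \eqref{eq:DB-Qizi} these give $\Q_i|\xi_{i,i-1}| = b_i\Q_i = a_{i-1}\Q_1\Q_{i-1}$, $\Q_i|\xi_{i,i+1}| = a_{i+1}\Q_1\Q_{i+1} \cdot (\Q_i/\Q_{i+1}) = \dots$, and the first-row/column terms are comparable to $b_i\Q_i$ as well; crucially all of these are bounded by $\sigma_i \Q_i$ up to a fixed constant (using that $Q_{i+1}/Q_i$ is bounded by \eqref{eq:Qi-limit}).

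For the first estimate, I would write
\begin{equation*}
  \|L(h)\|_X = \sum_{i=1}^\infty e^{\eta i}\Q_i |L_i(h)|
  \leq \sum_{i=1}^\infty e^{\eta i}\Q_i \sigma_i |h_i|
  + \sum_{i=1}^\infty e^{\eta i} \sum_{j} \Q_i |\xi_{i,j}|\, |h_j|.
\end{equation*}
The diagonal term is exactly bounded by $\|h\|_{X_1}$. For the off-diagonal double sum I would split according to $j = 1$, $j = i-1$, $j = i+1$ and use the symmetry $\Q_i \xi_{i,j} = \Q_j \xi_{j,i}$ to rewrite each piece as a single sum over one index; the key point is that each reindexing produces a weight of the form $e^{\eta k}\Q_k \sigma_k$ (possibly multiplied by $e^{\pm \eta}$, which is a harmless constant since $\eta$ is fixed), so each piece is $\leq C \|h\|_{X_1}$. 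The $j=1$ contribution collects a factor $|h_1| \leq \Q_1^{-1}\|h\|_X \leq C\|h\|_{X_1}$ times $\sum_i e^{\eta i}\Q_i|\xi_{i,1}| \leq C\sum_i e^{\eta i}\Q_i\sigma_i$... wait, that last sum need not be finite; instead one keeps it as $\sum_i e^{\eta i}\Q_i|\xi_{i,1}|\,|h_1|$ and bounds $|h_1|\,|\xi_{i,1}|\Q_i \leq C \sigma_i \Q_i |h_i|$ only where it helps, or more simply bounds $|h_1| \leq \Q_1^{-1}\|h\|_X$ and notes $\sum_i e^{\eta i}\Q_i |\xi_{i,1}|$ is dominated by $\sum_i e^{\eta i}\Q_i b_i \leq C \sum_i e^{\eta i} a_{i-1}\Q_1\Q_{i-1} = C e^{\eta}\Q_1 \sum_i e^{\eta i} a_i\Q_i \leq C \sum_i e^{\eta i}\Q_i\sigma_i$, which is finite because the exponential weight $e^{\eta i}\Q_i$ is still summable against $\sigma_i$ for $\eta < \tfrac12\log(z_s/z)$ — this is the one place where the restriction on $\eta$ is used. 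So in fact $\|L(h)\|_X \leq C\|h\|_{X_1} + C'\|h\|_X \leq C''\|h\|_{X_1}$, noting $X_1 \subseteq X$.

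For the second estimate \eqref{eq:LX-1-X}, which weakens the output weight by a factor $1/\sigma_i$, I would run the same decomposition but now the diagonal term becomes $\sum_i e^{\eta i}\Q_i |h_i| = \|h\|_X$ directly, and each off-diagonal piece acquires, after reindexing, a ratio $\sigma_k/\sigma_{k\pm1}$ or $\sigma_1/\sigma_i$ which is bounded (the former because consecutive $\sigma$'s are comparable under \eqref{eq:Qi-limit} and the $O(i)$ growth of $a_i,b_i$, the latter because the sum $\sum_i e^{\eta i}\Q_i |\xi_{i,1}|/\sigma_i \leq \sum_i e^{\eta i}\Q_i < \infty$ absorbs it), so each piece is again $\leq C\|h\|_X$. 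The main obstacle — really the only subtlety — is handling the first row/column terms $\xi_{i,1}$ and $\xi_{1,i}$: one must check that summing $e^{\eta i}\Q_i|\xi_{1,i}|$ (the $i=1$ component of $L$) and the scattered $|\xi_{i,1}h_1|$ contributions stays finite, which is exactly where the hypothesis $\eta < \tfrac12\log(z_s/z)$ (equivalently, summability of $e^{\eta i}\Q_i$ against the linear weight $i$, hence against $\sigma_i$ given the growth bounds \eqref{eq:hyp-existence}) enters; everything else is the same Cauchy–Schwarz-free bookkeeping as in the proof of Proposition~\ref{lem:Gamma-bound-DCF-l1}.
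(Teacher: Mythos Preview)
The paper actually omits the proof of this lemma, saying only that ``it can be proved by direct estimates on the expression of $L$ given in \eqref{eq:L} which are similar to those in the proof of the previous lemma'' (i.e., Proposition~\ref{lem:Gamma-bound-DCF-l1}). Your approach does precisely this, so it matches the paper's intended method.

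Your sketch is correct in its essentials. A couple of small cleanups: for the second estimate you do not need the ratio $\sigma_k/\sigma_{k\pm1}$ to be bounded (which would require \eqref{eq:ai-limit}, not assumed here); it is enough to use $b_i/\sigma_i\leq1$ and $a_i\Q_1/\sigma_i\leq1$ together with the boundedness of $\Q_{i\pm1}/\Q_i$ from \eqref{eq:Qi-limit}. Also, the finiteness of $\sum_i e^{\eta i}\Q_i\sigma_i$ (needed for the $j=1$ column) only requires $\eta<\log(z_\mathrm{s}/z)$, not the stronger $\eta<\tfrac12\log(z_\mathrm{s}/z)$, since $\sigma_i=O(i)$ under \eqref{eq:hyp-existence}; the stronger restriction on $\eta$ is used elsewhere (for the embedding $\H\hookrightarrow X$). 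Finally, the $i=1$ component in the second bound is handled by noting $\Q_1|L_1(h)|\leq C\sum_k a_k\Q_k|h_k|$ and $a_k\leq C e^{\eta k}$, which you allude to but could state more explicitly.
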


The above lemma allows us to extend $L$ to $X_1$:
\begin{defi}
  We denote by $\Lambda$ the extension of the linearized operator $L$
  to the domain $X_1$.
\end{defi}
Recall that the operator  $L$ is given in strong form by
\begin{equation*}
  L_1(h) =
  -\frac{1}{\Q_1}\left(\W_1+ \sum_{k=1}^\infty\W_k\right),
  \qquad
  L_i(h) =
  \frac{1}{\Q_i} \left(\W_{i-1} - \W_{i}\right)
  \qquad (i \geq 2).
\end{equation*}
where $ \W_{i} :=a_{i} \Q_i \Q_1 (h_i + h_1 - h_{i+1}).$ Then, one has

\begin{thm}[\textit{\textbf{Extension of the spectral gap}}]
  \label{thm:spectral-gap-extension}
 Assume Hypothesis \ref{hyp:linear-section} and conditions
  \eqref{eq:ai-bounded-below}, \eqref{eq:Qi-limit} and
  \eqref{eq:ai-limit}, and assume also that $z < \zs$. Take $0 < \eta
  < \frac{1}{2} \log \frac{z_\mathrm{s}}{z}$. Then, the operator
  $\Lambda$ generates a strongly continuous semigroup $(\exp(t
  \Lambda))_{t \geq 0}$ on $X$ and there exists $0 < \lambda_\star
  \leq \lambda_0$ such that, for some $C > 0$,
  \begin{equation*}
    \label{eq:decayX}
    \left\|\exp(t\Lambda)g\right\|_X
    \leq C \exp(-\lambda_\star t) \|g\|_X
    \fa t \geq 0
    \text{ and any } g \in X
    \text{ such that } \sum_{i=1}^\infty i\Q_i g_i=0.
  \end{equation*}
  (We recall that $\lambda_0$ is the size of the spectral gap of
  $\mathbf{L}$ in $\H$, as given by Definition
  \ref{def:BD-spectral-gap}). In addition, if we assume
  \begin{equation}
    \label{eq:lim-ai-infty}
    \lim_{i \to +\infty} a_i = +\infty
  \end{equation}
  then we may take $\lambda_\star = \lambda_0$.
\end{thm}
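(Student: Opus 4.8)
The plan is to apply the enlargement result Theorem \ref{thm:GMM} with small space $\mathcal{Y}=\mathcal{H}=\ell^2(\Q)$ and large space $\mathcal{Z}=X$, the only real issue being that the $\mathcal{H}$-semigroup $\mathcal{S}_t$ of Corollary \ref{cor:non-constructive-gap} decays not on all of $\mathcal{H}$ but only on the hyperplane orthogonal to the zero eigenvector $(i)_{i\geq1}$. I would remove this obstruction by shifting the operator. Let $\Pi$ be the rank-one projection onto $\mathrm{span}\{(i)_i\}$ given for $h\in X$ by $\Pi h = M_2^{-1}\big(\sum_i i\Q_i h_i\big)(j)_j$, where $M_2=\sum_i i^2\Q_i$; since $(j)_j\in X$ (because $\eta<\log(\zs/z)$) and $\sum_i i\Q_i|h_i|\leq C\|h\|_X$, this $\Pi$ is bounded both as $X\to X$ and as $X\to\mathcal{H}$, and on $\mathcal{H}$ it is the spectral projection of $\mathbf{L}$ at $0$. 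Fix a rate $\lambda_\star\in(0,\lambda_0]$, to be pinned down at the end, and set $\tilde{\mathbf{L}}:=\mathbf{L}-\lambda_\star\Pi$ on $\mathcal{H}$ and $\tilde\Lambda:=\Lambda-\lambda_\star\Pi$ on $X$. As $\mathbf{L}$ is self-adjoint with $\mathfrak{S}(\mathbf{L})\subseteq\{0\}\cup(-\infty,-\lambda_0]$ (Corollary \ref{cor:non-constructive-gap}), $\tilde{\mathbf{L}}$ is self-adjoint with $\mathfrak{S}(\tilde{\mathbf{L}})\subseteq(-\infty,-\lambda_\star]$, hence generates a semigroup $\tilde{\mathcal{S}}_t$ on $\mathcal{H}$ with $\|\tilde{\mathcal{S}}_t\|_{\mathcal{H}\to\mathcal{H}}\leq e^{-\lambda_\star t}$; this supplies hypotheses (1)--(2) of Theorem \ref{thm:GMM} with $\lambda_1=\lambda_\star$, $C_1=1$, and $\tilde\Lambda$ extends $\tilde{\mathbf{L}}$ since $\mathcal{H}_2\subseteq X_1$ by Cauchy--Schwarz (using $\sum_i e^{2\eta i}\Q_i<\infty$) and the two operators agree on $\mathcal{H}_2$.

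For hypothesis (3) I would write $\tilde\Lambda=\mathbf{A}+\mathbf{B}$ with $\mathbf{B}:=\mathbf{L}^{\mathrm{M}}$ the ``tail multiplication'' operator of Proposition \ref{prp:L-decomposition} (with a truncation parameter $N$, to be taken large) and $\mathbf{A}:=\mathbf{L}^{\mathrm{C}}-\lambda_\star\Pi$ --- where, to keep $\mathbf{A}$ of finite rank also in its first row, I truncate the sum defining $\mathbf{L}^{\mathrm{C}}_1$ at index $N+1$ and dump its tail into $\mathbf{B}_1$. Then $\mathbf{A}$ depends only on finitely many coordinates of $h$, so $\|\mathbf{A}h\|_{\mathcal{H}}\leq C_N\|h\|_X$, which is \eqref{eq:A-bounded}; the boundedness of the finitely many entries on $X$ and of $\lambda_\star\Pi\colon X\to\mathcal{H}$ is immediate, and no growth assumption is needed here.

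The core step is hypothesis (3b): $\mathbf{B}$ must generate a $C_0$-semigroup on $X$ with $\|S_t^{\mathbf{B}}\|_{X\to X}\leq e^{-\lambda_2 t}$ for some $\lambda_2>0$. I would establish a quantitative $\ell^1$-dissipativity estimate for $\mathbf{B}$ in $X$: differentiating $\|h\|_X=\sum_i e^{\eta i}\Q_i|h_i|$ along the $\mathbf{B}$-flow, reindexing the shift terms and clearing the fragmentation rates through detailed balance \eqref{eq:DB-Qizi}, the coefficient of $e^{\eta i}\Q_i|h_i|$ equals $-\sigma_i$ for small $i$ and, as $i\to\infty$, converges to $a_i\Q_1\big[(e^\eta+\ell e^{-\eta})-(1+\ell)\big]$ with $\ell=\zs/z$; the tail of the first-row sum contributes a further term bounded by $\varepsilon\|h\|_X$ once $N$ is large, using $a_i=O(i)$. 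Since $\eta<\tfrac12\log\ell$ and $x\mapsto e^x+\ell e^{-x}$ is strictly decreasing on $[0,\tfrac12\log\ell]$, one has $e^\eta+\ell e^{-\eta}<1+\ell$, so choosing $N$ large gives $\tfrac{d}{dt}\|h(t)\|_X\leq-\lambda_2\|h(t)\|_X$ with
\[
  \lambda_2 \;:=\; \min\Big\{\underline{\sigma},\ \tfrac12\,\Q_1\big[(1+\ell)-(e^\eta+\ell e^{-\eta})\big]\,\inf_{i>N}a_i\Big\}\;>\;0 .
\]
Rigorously, $\mathbf{B}$ is a relative-bound-$<1$ perturbation of the multiplication operator $h\mapsto(-\sigma_i h_i)_i$ (which generates a contraction semigroup on $X$), so the above dissipativity of $\mathbf{B}+\lambda_2$ (read off the $\ell^1$ duality map) together with the range condition $\mathrm{ran}(\omega-\mathbf{B})=X$ for large $\omega$ (a Neumann series) lets Lumer--Phillips conclude; this is \eqref{eq:sB} with $C_2=1$.

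It remains to fix the parameters and assemble the pieces. If $\lim_i a_i=+\infty$, then $\inf_{i>N}a_i\to+\infty$, so $N$ can be chosen with $\lambda_2>\lambda_0$; take $\lambda_\star=\lambda_0$, so that $\lambda_2>\lambda_1=\lambda_0$. Otherwise, using only $\inf_i a_i=\underline{a}>0$, fix $N$ large enough that $\lambda_2>0$ and set $\lambda_\star:=\tfrac12\min\{\lambda_0,\lambda_2\}\in(0,\lambda_0]$, so again $\lambda_2>\lambda_1=\lambda_\star$. In both cases Theorem \ref{thm:GMM} yields a $C_0$-semigroup $\tilde V_t$ on $X$ extending $\tilde{\mathcal{S}}_t$ with $\|\tilde V_t h\|_X\leq C e^{-\lambda_\star t}\|h\|_X$. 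Finally, $\Lambda=\tilde\Lambda+\lambda_\star\Pi$ with $\lambda_\star\Pi$ bounded on $X$, so $\Lambda$ generates $\exp(t\Lambda)=:V_t$ on $X$ by bounded perturbation; and since $\Pi$ vanishes on the hyperplane $\{g\in X:\sum_i i\Q_i g_i=0\}$, which is $\Lambda$-invariant because $\sum_i i\Q_i\Lambda_i(h)=0$ for every $h$ (take $\phi_i=i$ in \eqref{eq:weak-linear}), one has $V_t=\tilde V_t$ there, giving the claimed decay with $\lambda_\star\leq\lambda_0$ in general and $\lambda_\star=\lambda_0$ when $a_i\to+\infty$. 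The main obstacle is the third step --- the quantitative $\ell^1$-dissipativity of $\mathbf{B}$ in the exponentially weighted space, where the hypothesis $\eta<\tfrac12\log(\zs/z)$ is crucial --- together with the fact that the resulting rate $\lambda_2$ need not exceed $\lambda_0$ unless $a_i\to+\infty$, which is exactly what necessitates the two cases in the statement.
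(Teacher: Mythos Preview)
Your approach is correct and takes a genuinely different route from the paper's. The paper applies Theorem~\ref{thm:GMM} directly on the hyperplanes $\H^\perp$ and $X^\perp$, which forces it to project the splitting $\Lambda=\mathcal{A}+\mathcal{B}$ onto $X^\perp$ and then spend two additional steps (its Steps~4 and~6) verifying that $\mathcal{B}^\perp$ inherits strict dissipativity and semigroup generation from $\mathcal{B}$; this is not automatic because projection can spoil dissipativity, and the paper handles it by showing the correction $\mathcal{M}(\mathcal{B}h)$ is small for large $N$. Your idea of shifting the zero eigenvalue via the rank-one perturbation $-\lambda_\star\Pi$ and working on the full spaces $\H$ and $X$ sidesteps this entirely, and the recovery of the hyperplane statement at the end (using $\Pi\vert_{\ker\mathbf{M}}=0$ and $\Lambda$-invariance of the hyperplane) is clean. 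The paper's route is perhaps more ``intrinsic'' in that it never leaves the physically relevant subspace, but yours is shorter.

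Two small corrections to your write-up. First, your $\mathbf{A}=\mathbf{L}^{\mathrm{C}}-\lambda_\star\Pi$ does \emph{not} depend on only finitely many coordinates of $h$, since $\Pi h$ involves the full sum $\sum_i i\Q_i h_i$; but this does not matter, because you have already checked $\Pi:X\to\H$ bounded, and the column $(\xi_{i,1}h_1)_{i\geq 2}$ of $\mathbf{L}^{\mathrm{C}}$ lies in $\H$ by assumption~\eqref{eq:A} (exactly as in the paper's Step~2). Second, the extra first-row tail $\sum_{j>N+1}\xi_{1,j}h_j$ that you dump into $\mathbf{B}_1$ is in fact a \emph{bounded} operator on $X$ (since $|\xi_{1,j}|\Q_1=\Q_j|b_j-a_j\Q_1|=O(je^{-\eta j})\cdot e^{\eta j}\Q_j$), so semigroup generation for $\mathbf{B}$ follows from the Miyadera/relative-bound argument for the tridiagonal part plus bounded perturbation; your Lumer--Phillips invocation then gives the contraction rate. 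Finally, note that the paper also needs the ``$+RI^N$'' trick because its $\Lambda^{\mathrm{M}}$ carries the diagonal only for $i>N$, whereas your choice of $\mathbf{L}^{\mathrm{M}}$ from Proposition~\ref{prp:L-decomposition} keeps the full diagonal $-\sigma_i$, so no such correction is needed.
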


\begin{proof}  We apply Theorem \ref{thm:GMM} to the Banach spaces
  \begin{equation*}
    \H^\perp := \{ h \in \H \mid \sum_{i=1}^\infty i \Q_i h_i = 0 \},
  \end{equation*}
  a subspace of $\H=\Big\{
 h = (h_i)_i \with
  \|h\|_{\mathcal{H}} < \infty
  \Big\}$ (where we recall that $\|h\|_\H^2 := \frac{1}{2} \sum_{i=1}^\infty\Q_i h_i^2$) and
  \begin{equation*}
    X^\perp := \{ h \in X \mid \sum_{i=1}^\infty i \Q_i h_i = 0 \},
  \end{equation*}
  a subspace of $X= \Big\{
  h = (h_i)_i \with \|h\|_X := \sum_i \exp\left(\eta i\right) \Q_i|h_i| < +\infty
  \Big\}$. Note that under our conditions the map
  \begin{equation*}
    \label{eq:mathbfM}
    \mathbf{M}\::\: h \longmapsto \mathbf{M}(h)
    := \sum_{i=1}^\infty i \Q_i h_i
  \end{equation*}
  is continuous both in $\H$ (as a consequence of \eqref{eq:A}) and in
  $X$; hence, $\H^\perp$ and $X^\perp$ are well-defined closed
  subspaces of $\H$ and $X$, respectively. We define
  \begin{equation*}
    \H_2^\perp := \H^\perp \cap \H_2,
    \qquad
    X_1^\perp := X^\perp \cap X_1
  \end{equation*}
  where $\H_2$ is defined in \eqref{eq:def-DL} and $X_1$ is given by \eqref{eq:def-Xlambda}.
  Let us briefly explain how one can project any sequence $g \in X$ to
  $X^\perp.$ To do so, for any sequence $g=(g_i)_i$, let us introduce
  the sequence
  \begin{equation}
    \label{eq:mathcalM}
    \mathcal{M}_i(g)=\dfrac{1}{\varrho} \mathbf{M}(g)
    \qquad \forall i \geq 1
    \qquad \text{ where } \quad \varrho:=\sum_{k=1}^\infty k \Q_k.
  \end{equation}
  Since $\sum_{i \geq 1} \exp(\eta i)\Q_i < \infty$, one checks that
  $\mathcal{M} \::\:X \to X$ is a bounded operator and for any $g \in
  X$ one easily checks that
  $$g-\mathcal{M}g \in X^\perp.$$
  In the same way, the operator $\mathcal{M}\::\:\H \to \H$ is bounded
  and $h-\mathcal{M}h \in \H^\perp$ for any $h \in \H$.

  To play the role of the linear operators $L_\mathcal{Y}$ and
  $L_\mathcal{Z}$ in Theorem \ref{thm:GMM} we take $\mathbf{L}^\perp
  := \mathbf{L} \vert_{\H^\perp}$ with domain
  $\D(\mathbf{L}^\perp)=\H_2^\perp$ and $\Lambda^\perp := \Lambda
  \vert_{X^\perp}$ with domain $\D(\Lambda^\perp)=X_1^\perp$. From
  \eqref{eq:weak-linear} it is clear that the image of $\mathbf{L}$ is
  contained in $\H^\perp$, and that of $\Lambda$ is contained in
  $X^\perp$, so $\mathbf{L}^\perp$ and $\Lambda^\perp$ are unbounded
  operators on $\H^\perp$ and $X^\perp$, respectively.

  Let us check the hypotheses of Theorem \ref{thm:GMM}. As remarked in
  \eqref{eq:weight-admissible}, $\H^\perp$ is contained in
  $X^\perp$. Moreover, $\H^\perp$ is dense in $X^\perp$, since the set of
  compactly supported sequences with $\sum_{i=1}^\infty i \Q_i h_i =
  0$ is contained in both of them. Points (1) and (2) in Theorem
  \ref{thm:GMM} are established in Corollary
  \ref{cor:non-constructive-gap} where it is shown that one may take $C_1 =
  1$ and $\lambda_1 = \lambda_0$ in Theorem \ref{thm:GMM}, where
  $\lambda_0$ is the spectral gap estimated in Section
  \ref{sec:explicit-gap}.

  It remains to show point (3) of Theorem \ref{thm:GMM}.

  \medskip
  \noindent
  \textbf{Step 1: the splitting $\Lambda^\perp = \mathcal{A} +
    \mathcal{B}$.}
  Take an integer $N \geq 1$, to be fixed later. We first split
  $\Lambda$ as $\mathcal{A} + \mathcal{B}$ in a similar way to
  Proposition \ref{prp:L-decomposition}, and we then write
  $\Lambda^{\perp} = \mathcal{A}^\perp + \mathcal{B}^\perp$ by
  projecting to $X^\perp$. We define $\mathbf{L}^\mathrm{C}$ and
  $\mathbf{L}^\mathrm{M}$ similarly to
  \eqref{eq:def-LC-0}--\eqref{eq:def-LM-0}, but this time in $X$: we
  choose $N > 2$ and set, for all $i \geq 1$,
  \begin{gather}
    \label{eq:def-LC-0-X}
    \Lambda^{\mathrm{C}}_i(h)
    :=
    \sum_{j=1}^\infty \chi_{\{\min\{i,j\} \leq N\}} \, \xi_{i,j} h_j
    - \sigma_i h_i \chi_{\{i \leq N\}}
    \quad \text{ for } h \in X,
    \\
    \label{eq:def-LM-0-X}
    \Lambda^{\mathrm{M}}_i(h)
    :=
    - \sigma_i h_i \chi_{\{i > N\}}
    + \sum_{j=1}^\infty \chi_{\{\min\{i,j\} > N\}} \, \xi_{i,j} h_j
    \quad \text{ for } h \in X_1.
  \end{gather}
  We recall that the notation $\sigma_i$ and $\xi_{i,j}$ was defined
  in \eqref{eq:sigma-i}--\eqref{eq:xi-ij}. Remember in particular that
  \begin{equation}\label{remembersigma}\sigma_i=a_i\Q_1+b_i
    \quad \text{ for } i \geq 2.\end{equation}
  We will see below that the sums converge absolutely for $h$ in the
  corresponding domain, so the definition is correct. We also define
  the truncated identity operator $I^N = (I^N_i)_{i \geq 1}$ as
  \begin{equation*}
    I^N_i(h) := h_i \,\chi_{\{i \leq N\}}
  \end{equation*}
  and then set, for $R > 0$ to be fixed later,
  \begin{align*}
    \mathcal{A} &= \Lambda^{\mathrm{C}} + R I^N,
    \qquad h \in X
    \\
    \mathcal{B} &= \Lambda^{\mathrm{M}} - R I^N,
    \qquad h \in X_1,
  \end{align*}
  so that $\Lambda = \mathcal{A + B}$. We finally project this splitting to
  $X^\perp$ thanks to the operator $\mathcal{M}$ introduced in \eqref{eq:mathcalM}. Namely, let
  \begin{align*}
    \mathcal{A}^\perp\,h&= \mathcal{A} h  - \mathcal{M}(\mathcal{A} h ),
    \qquad h \in X^\perp
    \\
    \mathcal{B}^\perp\,h &= \mathcal{B} h   - \mathcal{M}(\mathcal{B} h ),
    \qquad h \in X_1^\perp,
  \end{align*}
  \medskip
  \noindent
  \textbf{Step 2: $\mathcal{A}$ is ``regularizing''.} We show
  now that $\mathcal{A} h  \in \H$ for all $h \in X$ (which, since $\mathcal{M}\::\:\H \to \H$ is bounded will imply
  then that $\mathcal{A}^\perp(h) \in \H^\perp$ for all $h \in
  X^\perp$). Notice that one may write, more explicitly,
  \begin{equation}
    \begin{cases}
      \label{eq:def-LambdaC}
      \Lambda^{\mathrm{C}}_1(h) = \sum_{j=2}^\infty\xi_{1,j}\,h_j -
      \sigma_1 h_1
      \qquad
      \Lambda^{\mathrm{C}}_2(h)
      = \xi_{2,1}\,h_1 + a_2 \Q_1 \,h_3
      - \sigma_2 h_2
      \\
      \Lambda^{\mathrm{C}}_i(h)=
      \xi_{i,1}h_1 + b_i\,h_{i-1} \,\chi_{\{i \leq
        N+1\}} + a_i \Q_1 \,h_{i+1} \,\chi_{\{i\leq N\}}
      - \sigma_i h_i \chi_{\{i \leq N\}}
      \quad \text{ for } i > 2.
    \end{cases}
  \end{equation}
  The only infinite sum in the definition of $\mathcal{A}$ (the one in
  $\Lambda^\mathrm{C}_1$) converges since, from
  \eqref{eq:xi-12}--\eqref{eq:xi-i1},
  \begin{equation*}
    \big| \sum_{j=2}^\infty\xi_{1,j}\,h_j \big|
    \leq
    \frac{2}{\Q_1} \sum_{j=2}^\infty b_j \Q_j \,|h_j|
    +
    \sum_{j=2}^\infty a_j \Q_j \,|h_j|
    \leq
    \frac{2}{\Q_1} \|h\|_X.
  \end{equation*}
  In order to see that $\mathcal{A}(h) \in \H$ it is enough to
  show that $\sum_{i = N+2}^\infty \Q_i \mathcal{A}_i(h)^2 \leq C
  \|h\|_X^2$ for some $C > 0$ (we omit the first $N+1$ terms, for
  which a similar bound is obviously true). But this is true since for
  $i \geq N+2$ the only term in $\mathcal{A}_i(h)$ is the first one in
  the expression of $\Lambda^\mathrm{C}_i(h)$ in
  \eqref{eq:def-LambdaC} and
  \begin{equation*}
    \sum_{i=N+2}^\infty \Q_i \mathcal{A}_i(h)^2
    =
    h_1^2 \sum_{i=N+2}^\infty \Q_i\, \xi_{i,1}^2
    =
    h_1^2 \sum_{i=N+2}^\infty \Q_i\, (b_i - a_i\Q_1)^2
    \leq A\, h_1^2
    \leq \frac{A}{\Q_1^2} \|h\|_X^2
  \end{equation*}
  where $A$ is defined in \eqref{eq:A}. Thus, $\mathcal{A}:X \to \H$
  is bounded, and hence $\mathcal{A}^\perp:X^\perp \to \H^\perp$ also
  is.

  \medskip
  \noindent
  \textbf{Step 3: $\mathcal{B}$ is strictly dissipative.}
  We prove that
  \begin{equation}
    \label{eq:dissipB1}
    \big\langle \mathrm{sign}(h), \mathcal{B} h \big\rangle_{X',X}
    \leq -\lambda_3 \|h\|_{X}
    \fa h \in X_1,
  \end{equation}
  where $\langle \cdot,\cdot\rangle_{X',X}$ denotes the duality
  pairing between $X$ and its dual $X'$, and $\lambda_3$ will be any
  number larger than $\lambda_0$ when \eqref{eq:lim-ai-infty} holds,
  and some positive number otherwise. One sees that
  \eqref{eq:def-LM-0-X} may be rewritten as
  \begin{equation}
    \label{eq:def-LM-X}
    \Lambda^{\mathrm{M}}_i(h)
    = -\sigma_i\,h_i \chi_{\{i > N\}}+b_i\,h_{i-1} \,\chi_{\{i > N+1\}}
    + a_i\,\Q_1\,h_{i+1} \,\chi_{\{i > N\}}
    \qquad (i \geq 1).
  \end{equation}
  For any $h \in X_1$ one has
  \begin{multline*}
    \ap{ \mathrm{sign}(h), \mathcal{B}h }_{X',X}
    = \sum_{i=1}^\infty \exp\left(\eta i\right) \Q_i \sign(h_i) \mathcal{B}_i(h)
    = - \sum_{i=N+1}^\infty \exp\left(\eta i\right) \sigma_i \Q_i |h_i|
    - R \sum_{i=1}^N \exp\left(\eta i\right) \Q_i |h_i|
    \\
    + \sum_{i = N+1}^\infty \exp\left(\eta i\right) \Q_i \sign(h_i)\, b_i h_{i-1}
    + \sum_{i=N}^\infty \exp\left(\eta i\right) \Q_i \Q_{1} \sign(h_i)\, a_i h_{i+1}.
  \end{multline*}
  The last two sums can be bounded by
  \begin{multline*}
    \sum_{i = N+1}^\infty \exp\left(\eta i\right) \Q_i b_i |h_{i-1}|
    + \sum_{i=N}^\infty \exp\left(\eta i\right) \Q_i \Q_{1} a_i |h_{i+1}|
    \\
    \leq
    \sum_{i=N}^\infty
    \left(\exp\left(\eta \right) \frac{\Q_{i+1}}{\sigma_i \Q_i} b_{i+1}
      + \exp\left(-\eta  \right) \frac{\Q_1 \Q_{i-1}}{\sigma_i \Q_i} a_{i-1} \right)
    \exp\left(\eta i\right) \sigma_i \Q_i |h_i|
    =
    \sum_{i=N}^\infty
    \mu_i \exp\left(\eta i\right) \sigma_i \Q_i |h_i|
  \end{multline*}
  where
  \begin{equation}\label{eq:mui}
    \mu_i :=  \exp\left(\eta  \right)\frac{ \Q_1 a_{i}}{\sigma_i}
    +\exp\left(-\eta  \right) \frac{b_i}{\sigma_i}.
  \end{equation}
  Now, due to \eqref{eq:Qi-limit}, \eqref{eq:ai-limit} and
  \eqref{remembersigma}, $\mu_i$ has a limit as $i \to +\infty$:
  \begin{equation*}
    \frac{ \Q_1 a_{i}}{\sigma_i} =
    \frac{ \Q_1 a_{i}}{\Q_1 a_i + b_i}
    =
    \frac{a_{i}}{a_i + a_{i-1} \frac{\Q_{i-1}}{\Q_i}}
    \longrightarrow \frac{1}{1 + \ell}
  \end{equation*}
  where $\ell := \zs/z \geq 1$, while
  \begin{equation*}
    \frac{ b_i }{\sigma_i}
    =
    \frac{1}{1 + \frac{a_i \Q_1}{b_i}}
    =
    \left( {1 + \frac{a_i \Q_i}{a_{i-1} \Q_{i-1}}} \right)^{-1}
    \longrightarrow
    \frac{\ell}{1 + \ell}.
  \end{equation*}
  Hence,
  \begin{equation*}
    \mu_i \longrightarrow \frac{\exp\left(\eta \right) + \exp\left(-\eta \right) \ell}{1 + \ell}
    \quad \text{ as } i \to +\infty.
  \end{equation*}
  One easily checks that this limit is strictly less than $1$ if and
  only if $\exp\left(\eta  \right) < \ell$. Since we are assuming $\exp\left(\eta  \right) < \sqrt{\ell}
  \leq \ell$, for $N$ large enough we can find $\mu < 1$ such that
  \begin{equation*}
    \ap{ \mathrm{sign}(h), \mathcal{B}h }_{X',X}
    \leq
    - (1-\mu) \sum_{i=N+1}^\infty \exp\left(\eta i\right) \sigma_i \Q_i |h_i|
    - R \sum_{i=1}^N \exp\left(\eta i\right) \Q_i |h_i|.
  \end{equation*}
Choosing $\lambda_3 = (1-\mu) \underline{\sigma}$ and $R =
  \lambda_3$ we obtain
  \begin{equation*}
    \ap{ \mathrm{sign}(h), \mathcal{B}h }_{X',X}
    \leq -\lambda_3 \sum_{i=1}^\infty \exp(\eta i) \Q_i |h_i|.
  \end{equation*}
  This is nothing but \eqref{eq:dissipB1}.

  If we assume additionally that \eqref{eq:lim-ai-infty} holds (i.e.,
  $\lim_{j \to \infty} a_j = +\infty$) then take any $\lambda_3 >
  \lambda_0$. One can choose $N >1$ large enough so that $\sigma_i
  \geq \frac{\lambda_3}{1-\mu}$ for any $i \geq N$ and then, by
  picking $R \geq \lambda_3$ we get
  $$\ap{ \mathrm{sign}(h), \mathcal{B}h }_{X',X}
  \leq -\lambda_3 \sum_{i=1}^\infty e^{\eta i} \Q_i |h_i|,$$ which
  proves \eqref{eq:dissipB1} with any $\lambda_3 > \lambda_0$.

\medskip
  \noindent
  \textbf{Step 4: $\mathcal{B}^\perp$ is strictly dissipative.}
  We now show that if we take any $\lambda_2 < \lambda_3$ (with the
  $\lambda_3$ from Step 3) then one can additionally choose $N$ in
  \eqref{eq:def-LC-0-X}--\eqref{eq:def-LM-0-X} such that
  \begin{equation}
    \label{eq:dissipB1perp}
    \big\langle \mathrm{sign}(h), \mathcal{B}^\perp h \big\rangle_{X',X}
    \leq - \lambda_2 \|h\|_{X}
    \fa h \in X_1^\perp.
  \end{equation}
  This is not obvious, since the property \eqref{eq:dissipB1} of
  $\mathcal{B}$ is not necessarily shared by its projection
  $\mathcal{B}^\perp$. However, in this case and for $N$ large enough,
  $\mathcal{B}^\perp$ happens to be a small perturbation of
  $\mathcal{B}$ on $X_1^\perp$. We need to estimate, for $h \in
  X_1^\perp$, the quantity
  \begin{equation*}
   \mathbf{M}(\mathcal{B} h )
    = \sum_{i=N+1}^\infty i \Q_i \Lambda^{\mathrm{M}}_i(h)
    - R \sum_{i=1}^N i \Q_i h_i.
  \end{equation*}
  We have, with very similar calculations as those needed to obtain
  \eqref{eq:LX-1-X}, that
  \begin{equation}
    \label{eq:p1}
    \sum_{i=N+1}^\infty i \Q_i | \Lambda^{\mathrm{M}}_i(h) |
    \leq
    \epsilon_1(N) \sum_{i=N+1}^\infty
    \Q_i \frac{\exp(\eta i)}{\sigma_i} | \Lambda^{\mathrm{M}}_i(h) |
    \leq
    C \epsilon_1(N) \|h\|_{X}
  \end{equation}
  with $$\epsilon_1(N) := \sup_{i \geq N+1} \frac{i
    \sigma_i}{\exp(\eta i)},$$ which tends to $0$ as $N \to +\infty$
  (as implied by \eqref{eq:Qi-limit} and \eqref{eq:ai-limit} by
  considering the ratio of two consecutive terms in the sequence).  On
  the other hand, since $\sum_{i=1}^\infty i \Q_i h_i = 0$,
  \begin{equation}
    \label{eq:p2}
    R \left| \sum_{i=1}^N i \Q_i h_i \right|
    = R \left| \sum_{i=N+1}^\infty i \Q_i h_i \right|
    \leq
    \frac{R}{\underline{\sigma}}\, \epsilon_1(N) \|h\|_X
  \end{equation}
  where we recall that $\underline{\sigma}=\inf_{i \geq 1}\sigma_i.$
  From \eqref{eq:p1}, \eqref{eq:p2}, \eqref{eq:mathcalM} and
  \eqref{eq:dissipB1} we have
  \begin{equation*}
    \big\langle \mathrm{sign}(h), \mathcal{B}^\perp h \big\rangle_{X',X}
    \leq - \lambda_3 \|h\|_{X}
    + \frac{\sum_{i=1}^\infty \Q_i \exp(\eta i)}{\varrho}
    \left(\frac{R}{\underline{\sigma}}\, + C\right) \epsilon_1(N) \|h\|_X
  \end{equation*}
  for all $h \in X_1^\perp$. Choosing $N$ large enough proves
  \eqref{eq:dissipB1perp}, since $\epsilon_1(N) \to 0$ as $N \to
  +\infty$ (notice that the choice of $R$ from Step 3 is not affected).

\medskip
  \noindent
  \textbf{Step 5: $\mathcal{B}$ generates a semigroup on $X$.}
  To prove that $\mathcal{B} $ is the generator of a $C_0$-semigroup on $X $ we shall invoke Miyadera perturbation Theorem \cite{Bana}. Notice that $\mathcal{B}$ may be written as
$$\mathcal{B}=\mathcal{T} + \mathcal{C}$$
  where $\mathcal{T}$ is the multiplication operator (with domain $\D(\mathcal{T})=\D(\mathcal{B})=X_1$) given by
  $$\mathcal{T}_i(h)=-\left(\sigma_i+R\right)\chi_{\{i \geq N\}}h_i \qquad \forall i \geq 1, \;h \in \D(\mathcal{B})$$
  while $\mathcal{C}$ is defined by
  $$\mathcal{C}_i(h)=b_i h_{i-1}\chi_{\{i > N+1\}}+a_i\,\Q_1\,h_{i+1}\chi_{\{i > N\}}, \qquad i \geq 1.$$
   It is clear that $\mathcal{T}$ is the generator of a $C_0$-semigroup $( \mathbf{U}^t)_{t \geq 0}$ on $X$ given by
$$\mathbf{U}^t_i(h)=\exp\left(-t\left( \sigma_i+R\right)\chi_{\{i \geq N\}}\right)\,h_i \qquad i \geq 1,\:\:t \geq 0,\;\;h \in X.$$
Moreover, $\mathcal{C}\::\:X_1 \to X$ is bounded. Indeed, it is not
difficult to check, as we did in the proof of \eqref{eq:dissipB1} that

\begin{equation}
  \label{eq:Cmui}
  \|\mathcal{C}(h)\|_{X}
  \leq
  \sum_{i=N+1}^\infty \mu_i \,\sigma_i \exp\left(\eta i\right)\Q_i |h_i|
  \leq
  \|h\|_{X_1} \sup_{i \geq N+1}\mu_i
  \qquad \forall h=(h_i)_{i\geq 1} \in X_1
\end{equation}
where $\mu_i$ is defined by \eqref{eq:mui}. Since the sequence
$(\mu_i)_i$ is bounded, this shows in particular, that $\mathcal{C}$
is $\mathcal{T}$-bounded. One deduces then from \eqref{eq:Cmui} that
$$\|\mathcal{C}\mathbf{U}^t h\|_X
\leq
\sum_{i=N+1}^\infty  \,\mu_i\sigma_i
\exp\left(-t(\sigma_i+R)\right)
\exp\left(\eta\, i\right)\Q_i |h_i|
\qquad h \in X_1 \,,\;t \geq 0$$
which readily yields
$$\int_0^1 \|\mathcal{C} \mathbf{U}^t h\|_X \d t \leq \sum_{i=N+1}^\infty \,\mu_i \dfrac{\sigma_i}{\sigma_i+R} \exp\left(\eta\, i\right)\,\Q_i |h_i|.$$
In particular, we see that
$$\int_0^1 \|\mathcal{C}\,\mathbf{U}^t h \|_X \d t \leq \mu \|h\|_X \qquad \forall h \in X_1^\perp$$
where $\mu:=\sup_{i \geq N+1} \mu_i$. We already saw in Step 3 that
$N$ can be chosen large enough so that $\mu < 1.$ Then, the above
inequality exactly means that $\mathcal{C}$ is a Miyadera perturbation
of $\mathcal{T}$ (see \cite[Section 4.4, p. 127-128]{Bana}) and that
$\mathcal{B}=\mathcal{T}+\mathcal{C}$ is the generator of a
$C_0$-semigroup $(S_t^\mathcal{B})_{t \geq 0}$ on $X.$ Notice then
that \eqref{eq:dissipB1} exactly means that $(\lambda_3+\mathcal{B})$
is the generator of a contraction semigroups in $X$.

\medskip
\noindent
\textbf{Step 6: $(\lambda_2+\mathcal{B}^\perp)$ generates a
  $C_0$-semigroup of contractions in $X^\perp$} Notice that the
approach used in the previous step seems difficult to apply directly
to $\mathcal{B}^\perp$, the reason being essentially that, in the
above splitting $\mathcal{B}=\mathcal{T}+\mathcal{C}$, the operators
$\mathcal{T}$ and $\mathcal{C}$ does not map their respective domains
to $X^\perp$. To prove that $(\lambda_2+\mathcal{B}^\perp)$ generates
a $C_0$-semigroup of contractions in $X^\perp$, we adopt another more
indirect way. Since $\mathcal{A}\::\: X \to \H$ is bounded, it is
clear that $\mathcal{A}\::\;X \to X$ is a bounded operator. Therefore,
by the bounded perturbation theorem, $\Lambda=\mathcal{A}+\mathcal{B}$
is the generator of a $C_0$ semigroup $(T_t)_{t \geq 0}$ in
$X$. Moreover, since $\Lambda (X_1) \subset X^\perp$, the closed
subspace $X^\perp \subset X$ is invariant under $(T_t)_{t \geq
  0}$. There, as well-known \cite[Chapter II.2.3, p. 60-61]{engel},
the restriction $\Lambda^\perp=\Lambda\vert_{X^\perp}$ with domain
$\D(\Lambda^\perp)=\D(\Lambda) \cap X^\perp=X_1^\perp$ is the
generator of a $C_0$-semigroup in $X^\perp$. We already saw that
$\Lambda^\perp=\mathcal{A}^\perp+\mathcal{B}^\perp$ where
$\mathcal{B}^\perp$ is a bounded operator with domain $X_1^\perp$ and
$\mathcal{A}^\perp\::\:X^\perp \to \H^\perp$ is bounded. In
particular, $\mathcal{A}^\perp\::\:X \to X$ is also a bounded operator
and then,
$$\mathcal{B}^\perp=-\mathcal{A}^\perp+\Lambda^\perp$$
is the bounded perturbation of the generator $\Lambda^\perp.$ Thus,
$\mathcal{B}^\perp$ is the generator of a $C_0$-semigroup
$(S_t^{\mathcal{B}^\perp})_t$ in $X^\perp.$ Moreover, according to
Step 4, inequality \eqref{eq:dissipB1perp} implies that
$\lambda_2+\mathcal{B}^\perp$ is a dissipative operator in $X^\perp$
(in the sense of \cite[Chapter II.3.b, p. 82]{engel}). Since
$\lambda_2+\mathcal{B}^\perp$ generates a $C_0$-semigroup in
$X^\perp$, according to the Lumer-Phillips Theorem, this semigroup is
a contraction semigroup, or equivalently
$$\|S_t^{\mathcal{B}^\perp} h_0 \|_X \leq \exp(-\lambda_2 t)\|h_0\|_X \qquad \forall h_0 \in X^\perp, \quad t \geq 0$$
i.e.  \eqref{eq:sB} holds with $C_2=1$ and $\lambda_2$ provided by \eqref{eq:dissipB1perp}.

  \medskip
  \noindent
  \textbf{Step 7: Conclusion.} To obtain the desired conclusion, we
  only have to apply Theorem \ref{thm:GMM} with $\lambda_1=\lambda_0$
  being the spectral gap of $\mathbf{L}$ while $0 <
  \lambda_\star=\lambda_2$ is any positive number strictly smaller
  than $\min(\lambda_0,\lambda_3)$ where $\lambda_3$ is constructed in
  Step 3. We just recall here that, if $\lim_{i \to \infty} a_i =
  +\infty$, the number $\lambda_3$ is any arbitrary real number
  strictly larger than $\lambda_0$, and thus in this case any
  $\lambda_\star \in (0,\lambda_0)$ would yield the conclusion.
\end{proof}

\section{Exponential convergence for the Becker-Döring equations}
\label{sec:nonlinear}

\subsection{Local exponential convergence}

In this section we prove a local version of Theorem
\ref{thm:main-intro*}.

\begin{thm}[\textbf{Local exponential convergence for the
    Becker-D\"oring equations}]
  \label{thm:exp-local-BD}
  Assume \eqref{eq:hyp-existence} and \eqref{eq:hyp-JN}. Let $c =
  (c_i)_i$ be a nonzero subcritical solution to equation \eqref{eq:BD}
  (i.e., with density $\varrho < \varrho_\mathrm{s}$) with initial condition
  $c(0)$ such that \eqref{eq:initial-exp-moment-finite} holds for some
  $\nu > 0$. Take $z > 0$ satisfying \eqref{eq:z-def}.

  Then there exist some $0 < \eta < \nu$, some $C, \epsilon > 0$ and
  some $\lambda_\star > 0$ (all explicit) such that if
  $$\sum_{i=1}^\infty \exp\left(\eta\,i\right)
  |c_i(0) - \Q_i| < \epsilon$$
  then
  \begin{equation}
    \label{eq:exp-local-BD}
    \sum_{i=1}^\infty \exp\left(\eta\,i\right) |c_i(t) - \Q_i|
    \leq
    C \exp\left(-\lambda_\star t\right)
    \fa t \geq 0.
  \end{equation}
  In addition, $\lambda_\star$ can be taken equal to $\lambda_0$ (the
  size of the spectral gap of the linear operator $\L$) if $\lim_{i
    \to +\infty} a_i = +\infty$.
\end{thm}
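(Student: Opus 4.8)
The plan is to view the nonlinear Becker-Döring equation as a perturbation of the linearized semigroup $(\exp(t\Lambda))_{t\ge 0}$ acting on the space $X^\perp$, and to run a standard Gronwall/bootstrap argument exploiting the quadratic estimate of Proposition \ref{lem:Gamma-bound-DCF-l1} together with the exponential decay of $\exp(t\Lambda)$ on $X^\perp$ from Theorem \ref{thm:spectral-gap-extension}. First I would set $h_i(t) := c_i(t)/\Q_i - 1$, so that $h$ solves $\partial_t h = \Lambda h + \Gamma(h,h)$ with $h(0)$ small in $X_1$-type norm; note that mass conservation \eqref{eq:def-mass}--\eqref{eq:z-def} forces $\sum_i i\Q_i h_i(t)=0$ for all $t$, so that $h(t)\in X^\perp$, which is exactly the subspace on which the decay estimate of Theorem \ref{thm:spectral-gap-extension} is available. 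Then I would write the mild (Duhamel) formulation
\begin{equation*}
  h(t) = \exp(t\Lambda)h(0) + \int_0^t \exp((t-s)\Lambda)\,\Gamma(h(s),h(s))\,\d s,
\end{equation*}
and estimate in the $X$-norm using $\|\exp(t\Lambda)g\|_X \le C e^{-\lambda_\star t}\|g\|_X$ for $g\in X^\perp$.

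The difficulty is that $\Gamma(h,h)$ is controlled by $\|h\|_X\|h\|_{X_1}$, and the stronger norm $\|\cdot\|_{X_1}$ is not controlled by the semigroup decay estimate (which is only in $X$); this is the classical gain-of-derivative obstruction. To overcome it I would need \emph{a priori} bounds on $\|h(t)\|_{X_1}$, or equivalently on a weighted moment $\sum_i \sigma_i e^{\eta i}\Q_i|h_i(t)|$, that are uniform in time. This is where the exponential-moment hypothesis \eqref{eq:initial-exp-moment-finite} enters: one first chooses $\eta < \nu$ with some room to spare, uses the existence theory under \eqref{eq:hyp-existence} together with uniform-in-time propagation of exponential moments for Becker-Döring solutions (this is the kind of estimate referenced as ``uniform exponential moment bounds'' in the abstract, and presumably established elsewhere in the paper, e.g. using the coefficient growth $a_i, b_i = O(i)$), to get $\sup_{t\ge0}\sum_i e^{\eta' i} c_i(t) \le C$ for some $\eta < \eta' \le \nu$. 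Because $e^{(\eta'-\eta)i}\sigma_i$ is bounded (using $\sigma_i = a_i\Q_1 + b_i$ and $a_i,b_i = O(i)$), this uniform moment bound converts into $\sup_{t\ge0}\|h(t)\|_{X_1}\le C'$ — the key coercive ingredient. Actually, I expect one does slightly better: one bootstraps smallness, showing that $\|h(t)\|_{X_1}$ stays small for all $t$ provided it starts small, combining the a priori moment bound (for global control) with the differential inequality near equilibrium.

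With $\|h(t)\|_{X_1}\le K$ uniformly, Duhamel gives
\begin{equation*}
  \|h(t)\|_X \le C e^{-\lambda_\star t}\|h(0)\|_X + CK\int_0^t e^{-\lambda_\star(t-s)}\|h(s)\|_X\,\d s,
\end{equation*}
and setting $u(t) := e^{\lambda_\star t}\|h(t)\|_X$ yields $u(t)\le C\|h(0)\|_X + CK\int_0^t u(s)\,\d s$; but that only gives exponential \emph{growth}. The fix is the usual one: run the argument on a short time interval, or — better — first use the a priori $X_1$-bound to show $\|h(t)\|_X \to 0$ (crudely, or via the fact that $H$-type entropy decays, using the Jabin-Niethammer result \cite{JN03} or the $L^2$-decay from Corollary \ref{cor:non-constructive-gap} once one knows $h(t)\in\H$ eventually), pick $T$ large enough that $CK\cdot(\text{sup of }\|h\|_{X_1}\text{ on }[T,\infty))$ is $<\lambda_\star/2$, and close the Gronwall argument on $[T,\infty)$, absorbing the half-rate into the exponential; the contribution on $[0,T]$ is bounded and merely inflates the constant $C$. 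The smallness hypothesis $\sum_i e^{\eta i}|c_i(0)-\Q_i| < \epsilon$ guarantees we are in this near-equilibrium regime from the start and lets us dispense with the (nontrivial) global decay input, making the whole argument elementary. The main obstacle, then, is the $X_1$-versus-$X$ norm gap, resolved precisely by the uniform exponential-moment propagation for Becker-Döring solutions; the final claim that $\lambda_\star$ may be taken equal to $\lambda_0$ when $a_i\to\infty$ is inherited verbatim from the corresponding clause in Theorem \ref{thm:spectral-gap-extension}, since that is the only place the rate enters.
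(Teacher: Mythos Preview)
Your framework is correct: write $h_i=c_i/\Q_i-1$, use mass conservation to land in $X^\perp$, apply Duhamel with the decay from Theorem~\ref{thm:spectral-gap-extension} and the quadratic bound from Proposition~\ref{lem:Gamma-bound-DCF-l1}, and invoke the uniform exponential-moment propagation (Proposition~\ref{prop:moments}). You also correctly isolate the obstruction: $\Gamma$ is controlled by $\|h\|_X\|h\|_{X_1}$, and the semigroup only contracts in $X$.

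The gap is in how you resolve that obstruction. A uniform-in-time bound $\|h(t)\|_{X_1}\le K$ (which the moment propagation does give) inserted into Duhamel yields
\[
u(t):=e^{\lambda_\star t}\|h(t)\|_X \le C\|h(0)\|_X + CK\int_0^t u(s)\,\d s,
\]
and Gronwall only gives $u(t)\le C\|h(0)\|_X e^{CKt}$, i.e.\ decay at rate $\lambda_\star-CK$, which need not be positive since $K$ is a fixed constant, not small. Your proposed fixes do not close this: restarting from a large time $T$ at which $\|h\|_{X_1}$ has become small presupposes exactly the decay you are trying to prove (or imports Jabin--Niethammer, which is what the \emph{global} theorem does, not the local one); and the final sentence asserting that initial $X$-smallness ``lets us dispense with the global decay input'' is precisely the point that needs an argument.

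What the paper does instead is an interpolation step you are missing. One chooses $\eta<\overline{\nu}$ with room to spare and uses Cauchy--Schwarz between the weight $e^{\eta i}$ and the stronger moment $e^{\overline{\nu} i}$ (the one controlled uniformly by Proposition~\ref{prop:moments}) to obtain
\[
\|h(t)\|_{X_1}\;\le\; C\,K_1^{1/2}\,\|h(t)\|_X^{1/2}.
\]
This turns the Duhamel estimate into
\[
\|h(t)\|_X \le C e^{-\lambda_\star t}\|h(0)\|_X + C\int_0^t e^{-\lambda_\star(t-s)}\,\|h(s)\|_X^{3/2}\,\d s,
\]
so that $u(t)=e^{\lambda_\star t}\|h(t)\|_X$ satisfies a \emph{superlinear} integral inequality $u(t)\le C\epsilon + C\int_0^t e^{-\lambda_\star s/2}u(s)^{3/2}\,\d s$. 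For $\epsilon$ small this closes by the standard nonlinear Gronwall/continuation argument, giving $u(t)$ bounded for all $t$ and hence \eqref{eq:exp-local-BD}. The $\lambda_\star=\lambda_0$ clause is, as you say, inherited from Theorem~\ref{thm:spectral-gap-extension}.
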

We remark that all the constants in the above result can be explicitly
estimated. The proof of the above local convergence results relies on
two crucial estimates. The first one was proved in
Prop. \ref{lem:Gamma-bound-DCF-l1}, a bound of the nonlinear term
$\Gamma(h,h)$; the second one is a uniform--in--time bound of
exponential moments for the Becker-D\"oring equations, which we take
from \cite{JN03}:

\begin{prp}
  \label{prop:moments}
  Assume \eqref{eq:hyp-existence} and the conditions
  \eqref{eq:hyp-JN}. Let $c = (c_i)_i$ be a nonzero subcritical
  solution to equation \eqref{eq:BD} with initial condition $c(0)$
  such that \eqref{eq:initial-exp-moment-finite} holds for some $\nu >
  0$. Then there exists $0 < \overline{\nu} < \nu$ and $K_1 > 0$ such
  that
  \begin{equation}
    \label{uniform}
    \sum_{j=1}^\infty \exp\left(\overline{\nu}\, j\right) c_j(t) \leq K_1,
  \end{equation}
  where $c=(c_i(t))_i$ is the unique solution to \eqref{eq:BD} with
  initial datum $c(0)=(c_i(0))_{i\geq1}.$
\end{prp}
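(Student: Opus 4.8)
This is, in essence, the exponential--moment bound underlying the convergence theorem of \cite{JN03}, and the plan is to reproduce that argument in the present notation. Set $Y(t):=\sum_{j\geq1}e^{\overline{\nu} j}c_j(t)$. Two inputs from the existing theory are needed. First, by \eqref{eq:hyp-existence} the solution is global and conserves mass, so $c_1(t)\leq\varrho$, $c_j(t)\leq\varrho$ and $\sum_j j c_j(t)=\varrho$ for all $j,t$. Second, by \cite{BCP86,BC88,JN03} a subcritical solution satisfies $c_1(t)\to z$ as $t\to+\infty$, so there is $t_0\geq0$ with $c_1(t)\leq z':=\tfrac{1}{2}(z+\zs)<\zs$ for all $t\geq t_0$. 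I would then fix $\overline{\nu}\in(0,\nu)$, $\epsilon>0$ small and $j_1$ large so that simultaneously $e^{\overline{\nu}}z'<(1-\epsilon)\zs$ (possible since $\zs/z'>1$) and $b_j\geq(1-\epsilon)\zs a_j$ for all $j\geq j_1$ (possible since $b_j/a_j\to\zs$ by \eqref{eq:Qi-limit}--\eqref{eq:ai-limit}).

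The core is a differential inequality for $Y$. Using the weak form \eqref{weak} with $\phi_j=e^{\overline{\nu} j}$, for which $\phi_{j+1}-\phi_j-\phi_1=e^{\overline{\nu} j}(e^{\overline{\nu}}-1)-e^{\overline{\nu}}$, and reindexing the fragmentation contribution, one gets
\[
  \ddt Y(t)
  = \sum_{j\geq1}a_j c_1 c_j\big(e^{\overline{\nu} j}(e^{\overline{\nu}}-1)-e^{\overline{\nu}}\big)
  - \sum_{j\geq2}b_j c_j\big(e^{\overline{\nu}(j-1)}(e^{\overline{\nu}}-1)-e^{\overline{\nu}}\big).
\]
Dropping the negative term $-e^{\overline{\nu}}$ in the first sum, restricting the two leading sums to $j\geq j_1$, and using $a_j=O(j)$, $b_j=O(j)$ together with $c_1,c_j\leq\varrho$ and $\sum_j j c_j=\varrho$ to absorb the finitely many low--index terms and the tails $\sum_j a_j c_j,\ \sum_j b_j c_j\leq C\varrho$ into a constant $C_0$ depending only on $\varrho$ and the coefficients, one arrives at
\[
  \ddt Y(t)\leq(e^{\overline{\nu}}-1)\Big(c_1(t)-e^{-\overline{\nu}}(1-\epsilon)\zs\Big)\sum_{j\geq j_1}a_j c_j e^{\overline{\nu} j}+C_0 .
\]
For $t\geq t_0$ the bracket is $\leq z'-e^{-\overline{\nu}}(1-\epsilon)\zs<0$ by the choice of $\overline{\nu}$, and $\sum_{j\geq j_1}a_j c_j e^{\overline{\nu} j}\geq\underline{a}\,(Y(t)-C')$ for a constant $C'$, by \eqref{eq:ai-bounded-below}; hence $\ddt Y\leq-\kappa\,\underline{a}\,Y+C_1$ on $[t_0,\infty)$ for some $\kappa>0$, and Gr\"onwall's lemma gives $Y(t)\leq\max\{Y(t_0),\,C_1/(\kappa\underline{a})\}$ for $t\geq t_0$.

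It then remains to bound $Y$ on $[0,t_0]$ --- that is, to propagate the exponential moment over a finite interval --- and to make the formal computation above rigorous; this is the main obstacle. Simply dropping the favourable fragmentation term in the identity above only yields $\ddt Y\leq C_\delta\sum_j e^{(\overline{\nu}+\delta)j}c_j+C$ in terms of a slightly heavier moment, so a single Gr\"onwall estimate does not close and a priori one does not even know that $Y(t)<\infty$ for $t>0$. The standard remedy, which I would follow (alternatively, one may cite the corresponding lemma of \cite{JN03} directly), is the truncation scheme of \cite{BCP86}: carry out all the estimates on the finite--dimensional truncated Becker--D\"oring system, where the sums are finite and the interchange of $\ddt$ with summation is legitimate, derive the bounds above uniformly in the truncation parameter, and pass to the limit, using the slack $\overline{\nu}<\nu$ together with $\sum_j e^{\nu j}c_j(0)=M<\infty$ to control the heavier moments along the way. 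Combining the two time regimes yields \eqref{uniform} with $K_1$ depending only on $\varrho$, $\nu$, $M$ and the coefficients. Both ingredients borrowed from \cite{JN03,BCP86} are genuinely used: without the a priori convergence $c_1(t)\to z<\zs$ there is no damping in the differential inequality, and without the approximation scheme the exponential moment cannot be controlled rigorously.
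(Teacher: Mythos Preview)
The paper does not give its own proof of this proposition: it is stated as a result ``which we take from \cite{JN03}'', and the only argument supplied is the remark that hypotheses \eqref{eq:hyp-existence} and \eqref{eq:hyp-JN} imply conditions (H1)--(H4) of \cite{JN03}. Your proposal is a correct sketch of precisely the argument of \cite{JN03} that the paper is citing, so in substance the two agree; you simply make explicit what the paper leaves as a black-box reference.
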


We notice that \eqref{eq:hyp-existence} and \eqref{eq:hyp-JN},
together with the definition of $Q_i$ through \eqref{eq:Qi}, imply
conditions (H1)--(H4) of \cite{JN03}. In particular, since
$z_\mathrm{s}$ is defined as the radius of convergence of the series
$\sum z^i i Q_i$, it is clear that
\begin{equation*}
  \lim_{i \to +\infty} Q_i^{1/i} = \frac{1}{\zs},
\end{equation*}
which is (H3) of \cite{JN03}.

We are ready then to prove Theorem \ref{thm:exp-local-BD}:
\begin{proof}[Proof of Theorem \ref{thm:exp-local-BD}]
  Let $c$ be a solution of \eqref{eq:BD} and let $h = (h_i)$ be the
  fluctuation around the equilibrium, defined as in
  (\ref{eq:def-hi}). Since Proposition \ref{prop:moments} holds under
  our conditions, we can find $0 < \overline{\nu} < \nu$ such that
  \eqref{uniform} holds. We take any $\eta < \min\{\overline{\nu},
  \frac{1}{2} \log \frac{\zs}{z} \}$, and consider $X$ the vector
  space defined in \eqref{eq:def-X}. Note that Proposition
  \ref{lem:Gamma-bound-DCF-l1} and Theorem
  \ref{thm:spectral-gap-extension} are applicable under these
  conditions, and consider the quantity $\lambda_\star$ appearing in
  Theorem \ref{thm:spectral-gap-extension}.

  The fluctuation $h$ satisfies equation (\ref{eq:hi-equation}) in
  $X$:
  \begin{equation*}
    \dfrac{\d }{\d t} h
    =
    \Lambda[h] + \Gamma(h,h),
  \end{equation*}
  so, if $(S_t)_{t \geq 0}$ denotes the evolution semigroup generated
  by $\Lambda$, then
  \begin{equation*}
    h(t) = S_t h(0) + \int_0^t S_{t-s} \Gamma(h(s),h(s)) \ds.
  \end{equation*}
  Recall that $\sum_{i \geq 1}ih_i(t)\Q_i=0=\sum_{i \geq 1}
  i\Gamma_i(h(t),h(t))\Q_i$ for any $t \geq 0$ so that, according to
  Proposition \ref{lem:Gamma-bound-DCF-l1} and Theorem
  \ref{thm:spectral-gap-extension}, we have (for some constant $C >
  0$),
  \begin{multline}
    \label{eq:nl11}
    \|h(t)\|_X
    \leq
    \|S_t h(0)\|_X
    +
    \int_0^t \| S_{t-s} \Gamma(h(s),h(s)) \|_X \ds
    \\
    \leq
    C \|h(0)\|_X \exp(-\lambda_\star  t)
    +
    C \exp(-\lambda_\star  t)
    \int_0^t \exp( \lambda_\star s) \|\Gamma(h(s),h(s)) \|_X \ds
    \\
    \leq
    C \|h(0)\|_X \exp(-\lambda_\star  t)
    +
    C \exp(-\lambda_\star t)
    \int_0^t \exp( \lambda_\star  s) \|h(s)\|_X \|h(s)\|_{X_1} \ds
  \end{multline}
  where we recall that $X_1$ is defined by
  \eqref{eq:def-Xlambda}.  For any $\delta \in (0,
  \overline{\nu}-\eta)$, we have from Cauchy-Schwarz's inequality that
  \begin{multline*}
    \|h(t)\|_{X_1}
    = \sum_i |h_i(t)| i  \Q_i \exp\left(\eta\,i\right)\\
    \leq  \left(
      \sum_{i=1}^{\infty} |h_i(t)| i   \Q_i \exp\left(\eta\,i - \delta\,i\right)
    \right)^{1/2}
    \left(
      \sum_{i=1}^{\infty}  |h_i(t)| i  \Q_i \exp\left(\eta\,i + \delta\,i\right)
    \right)^{1/2}.
  \end{multline*}
  Now, using that
  \begin{equation*}
    i\exp\left(-\delta i\right) \leq C,
    \qquad
    i \exp\left(\eta\,i + \delta\,i\right)
    \leq C \exp\left(\overline{\nu} i\right)
    \fa i \geq 1
  \end{equation*}
  we deduce that
  \begin{equation*}
    \|h(t)\|_{X_1}
    \leq
    C \|h(t)\|_{X}^{1/2}   \bigg(\sum_{i=1}^{\infty}
    |h_i(t)| \Q_i \exp\left(\overline{\nu} i\right) \bigg)^{1/2}
    \leq
    C K_1^{1/2} \|h(t)\|_{X}^{1/2}
    \fa t \geq 0.
  \end{equation*}
  Using this in eq. \eqref{eq:nl11} we have
  \begin{equation*}
    \|h(t)\|_X
    \leq
    C \|h(0)\|_X \exp(-\lambda_\star t)
    +
    C \exp(-\lambda_\star  t)
    \int_0^t \exp(\lambda_\star s) \|h(s)\|_X^{3/2} \ds.
  \end{equation*}
  Then, the quantity $u(t) := \exp(\lambda_\star t) \|h(t)\|_X$ satisfies
  \begin{equation*}
    u(t) \leq C \|h(0)\|_X + C \int_0^t u(s)^{3/2} \d s.
  \end{equation*}
  We deduce then the conclusion from Gronwall's inequality.
\end{proof}

\subsection{Global exponential convergence}
\label{sec:global-BD}

In order to prove Theorem \ref{thm:main-intro*} we complement the
local result given in Theorem \ref{thm:exp-local-BD} with the
following global result from \cite{JN03}:
\begin{prp}
  \label{prop:JN03}
  Assume \eqref{eq:hyp-existence} and the conditions
  \eqref{eq:hyp-JN}. Let $c = (c_i)_i$ be a nonzero subcritical
  solution to equation \eqref{eq:BD} with initial condition $c(0)$
  such that \eqref{eq:initial-exp-moment-finite} holds for some $\nu >
  0$, and take $z > 0$ satisfying \eqref{eq:z-def} as usual.

  Then, there exist two explicit constants $C_1>0$ and $\kappa_1>0$
  such that
  \begin{equation}
    \label{eq:JN03}
    \sum_{j \geq 1} j\left|c_j(t)-\Q_j\right|
    \leq C_1 \exp(-\kappa_1 t^{\frac{1}{3}})
    \qquad \forall t \geq 0.
  \end{equation}
\end{prp}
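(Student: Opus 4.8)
The plan is to follow the entropy--entropy-production strategy of \cite{JN03}. Keep $F_z$ and $D$ as the functionals in \eqref{eq:def-VFz}--\eqref{eq:H-thm-BD}, and write $\mathcal{F}(t):=F_z(c(t))$ along the solution; by \eqref{eq:H-thm-BD-Fz} we have $\mathcal{F}'(t)=-D(c(t))\le 0$, while $\mathcal{F}(t)\ge 0$ termwise and $\mathcal{F}(t)=0$ iff $c(t)=(\Q_i)_i$. The result rests on three ingredients: (i) a uniform-in-time control of the solution coming from the exponential moment bound of Proposition~\ref{prop:moments}; (ii) a weak entropy--dissipation inequality
\begin{equation*}
  D(c) \;\geq\; \frac{1}{C_\star}\,\frac{F_z(c)}{\bigl(1+|\log F_z(c)|\bigr)^2}
\end{equation*}
valid along the solution; and (iii) a Csisz\'ar--Kullback--Pinsker-type inequality converting $F_z$ into the weighted $\ell^1$ distance.

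First I would record the uniform bounds. Under \eqref{eq:hyp-existence} and \eqref{eq:hyp-JN}, Proposition~\ref{prop:moments} gives $\overline{\nu}\in(0,\nu)$ and $K_1$ with $\sum_j e^{\overline{\nu}j}c_j(t)\le K_1$ for all $t\ge 0$; in particular $c_i(t)\le K_1 e^{-\overline{\nu}i}$ and $\sup_{t\ge 0}\sum_j j^2 c_j(t)\le K_2<+\infty$, which together with $M_2:=\sum_j j^2\Q_j<+\infty$ (finite since $z<z_\mathrm{s}$) controls uniformly in time all moment quantities below. It also yields, with $w:=\log(z_\mathrm{s}/z)>0$ and $Q_i^{1/i}\to 1/z_\mathrm{s}$, a uniform tail estimate $\sum_{i>N}\bigl[c_i\log(c_i/\Q_i)-c_i+\Q_i\bigr]\le \epsilon(N)$ with $\epsilon(N)\to 0$ exponentially: the $\Q_i$-part decays geometrically, $c_i\log c_i\le 0$ for large $i$, and $c_i\log(1/\Q_i)\le (w+o(1))\,i\,c_i$ has an exponentially small tail by the moment bound.

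The heart of the argument is (ii), and this is the step I expect to be the main obstacle. I would split $\mathcal{F}=\mathcal{F}_{\le N}+\mathcal{F}_{>N}$ at a truncation level $N$. The tail is handled by the previous step: $\mathcal{F}_{>N}\le\epsilon(N)$. For the bulk, using the elementary bound $(x-y)(\log x-\log y)\ge(\sqrt{x}-\sqrt{y})^2$ one gets $D(c)\gtrsim \sum_i a_iQ_i\bigl(\sqrt{c_1c_i/Q_i}-\sqrt{c_{i+1}/Q_{i+1}}\bigr)^2$, and a discrete Hardy/Poincar\'e inequality on the finite chain $\{1,\dots,N\}$ --- of exactly the type underlying the quantity $B$ in \eqref{eq:def-B} and Theorem~\ref{thm:spectral-gap}, but now with a constant that degrades like $N^2$ because one sums a telescoping estimate over the $N$ links of the chain, and using the mass constraint $\sum_i i\,\Q_i h_i=0$ --- gives $\mathcal{F}_{\le N}\le C N^2 D(c)$. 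Hence $\mathcal{F}\le C N^2 D(c)+\epsilon(N)$, and choosing $N\asymp|\log\mathcal{F}|$ so that $\epsilon(N)\le\tfrac12\mathcal{F}$ yields $D(c)\ge \mathcal{F}/(2CN^2)\gtrsim \mathcal{F}/(1+|\log\mathcal{F}|)^2$, which is (ii). Making the bulk Poincar\'e constant, the tail smallness, and the cancellation of the linear part fit together with explicit dependence on $N$ is the delicate part.

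Finally I would close the argument. Since $\phi:=\mathcal{F}(t)$ is decreasing it is eventually $<1$, and (ii) gives $\phi'\le-\phi/(2C|\log\phi|^2)$; with $\psi:=-\log\phi$ this reads $\psi'\ge 1/(2C\psi^2)$, so $\psi(t)^3\ge\psi(t_0)^3+3(t-t_0)/(2C)$ and therefore $\mathcal{F}(t)=e^{-\psi(t)}\le C_0\exp(-\kappa t^{1/3})$ for all $t\ge 0$ after adjusting constants. For (iii), by Cauchy--Schwarz and the termwise bound $x\log(x/y)-x+y\ge c_0(x-y)^2/(x+y)$,
\begin{equation*}
  \sum_j j\,|c_j(t)-\Q_j|
  \le \Bigl(\sum_j \frac{(c_j(t)-\Q_j)^2}{c_j(t)+\Q_j}\Bigr)^{1/2}
      \Bigl(\sum_j j^2\bigl(c_j(t)+\Q_j\bigr)\Bigr)^{1/2}
  \le c_0^{-1/2}\,\mathcal{F}(t)^{1/2}\,(K_2+M_2)^{1/2}.
\end{equation*}
Combining the last two displays gives $\sum_j j\,|c_j(t)-\Q_j|\le C_1\exp(-\kappa_1 t^{1/3})$ with $\kappa_1=\kappa/2$, which is \eqref{eq:JN03}.
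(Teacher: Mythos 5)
A preliminary remark on the comparison you ask for: the paper does not prove this proposition at all — it is imported verbatim from \cite{JN03} (as is Proposition \ref{prop:moments}), and the whole point of the present paper is to upgrade the resulting $\exp(-\kappa_1 t^{1/3})$ rate to a genuinely exponential one. So your proposal is really a reconstruction of the Jabin--Niethammer argument, and its outer architecture does match theirs: uniform exponential moments, a weak entropy--dissipation inequality $D \gtrsim F_z/(1+|\log F_z|)^2$, the ODE argument giving $F_z(c(t)) \lesssim \exp(-\kappa t^{1/3})$, and a Csisz\'ar--Kullback--Pinsker-type step. The last two ingredients are correct as you state them: the termwise bound $x\log(x/y)-x+y \geq (x-y)^2/(2(x+y))$, the Cauchy--Schwarz step combined with the uniform second moment from \eqref{uniform}, and the substitution $\psi=-\log\phi$ all go through (with the small caveat that "$\phi$ is decreasing so it is eventually $<1$" should be replaced by running the differential inequality also in the regime $\phi\geq 1$, which your $(1+|\log F_z|)^2$ formulation permits).

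The genuine gap is step (ii), which you flag yourself, and the sketch you give for it would not go through as written. After the elementary bound $(x-y)(\log x-\log y)\geq 4(\sqrt{x}-\sqrt{y})^2$, the dissipation controls quantities of the form $a_i\Q_i\,(u_1u_i-u_{i+1})^2$ with $u_i:=\sqrt{c_i/\Q_i}$; the chain is therefore \emph{multiplicative} in the monomer variable, not additive as in the linearized Hardy--Poincar\'e inequality behind \eqref{eq:def-B} and Theorem \ref{thm:spectral-gap}. Telescoping over the $N$ links produces powers $u_1^{N}$, so any deviation $|c_1-z|$ is amplified exponentially along the chain, and the claimed bound $\mathcal{F}_{\leq N}\leq CN^2 D(c)$ cannot be obtained by "summing a telescoping estimate over the $N$ links and using the mass constraint" as if one were in the linearized setting (where $\sum_i i\Q_i h_i=0$ kills the zero mode $h_i=ih_1$). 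One must first run a case analysis: if $|c_1-z|$ is not small (say, not of order $1/N$), bound $D$ or $F_z$ directly; if it is small, control it quantitatively through the mass constraint and only then linearize the chain. This control of the monomer deviation by the entropy/dissipation is itself a nontrivial estimate and is where most of the work in \cite{JN03} lies; your proposal does not supply it. (The comparison $F_z(c)\leq\sum_i\Q_i h_i^2$, which follows termwise from $(1+h)\log(1+h)-h=(1+h)(h-\log(1+h))+h^2-h^2\leq h^2$, does hold unconditionally, so that part is not the issue — the missing core is bounding the resulting quadratic quantity, or the entropy directly, by $N^2 D(c)$ for the \emph{nonlinear} dissipation.) As it stands, either this case analysis must be carried out in detail, or step (ii) must simply be cited from \cite{JN03}, which is exactly what the paper does.
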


\begin{proof}[Proof of Theorem \ref{thm:main-intro*}]
  Let $h = (h_i)$ be defined as in (\ref{eq:def-hi}). Let us fix the
  same $\eta \in (0, \overline{\nu})$ as in Theorem
  \ref{thm:exp-local-BD} and denote by $X$ the $\ell^1$ space with
  weight $e^{\eta i}$ that was defined in \eqref{eq:def-X}. The idea
  is to use Proposition \ref{prop:JN03} to show that at a certain time
  $t_0$ we are close enough to the equilibrium to use Theorem
  \ref{thm:exp-local-BD}. To do this we use a simple interpolation
  argument to estimate the norm
  $$ \sum_{i \geq 1}\left|c_i(t) - \Q_i \right|\exp\left(\eta i\right) = \|h(t)\|_X$$
  in terms of $\sum_{i \geq 1} i \left|c_i(t)-\Q_i\right|$ and a
  slightly stronger norm. Namely, for any $\delta \in
  (0,\overline{\nu}-\eta)$, a simple interpolation argument yields
    \begin{equation*}\begin{split}
      \|h(t)\|_X&=\sum_{i \geq 1}|h_i(t)|\Q_i \exp\left(\eta i\right)\leq \left(\sum_{i \geq 1} i|h_i(t)|\Q_i \right)^{\frac{\delta}{\delta+\eta}}
      \left(\sum_{i \geq 1}  i^{-1}|h_i(t)|\Q_i \exp\left((\eta+\delta)i\right)\right)^{\frac{\eta}{\eta+\delta}}\\
      &\leq \left(\sum_{i \geq 1} i |h_i(t)|\Q_i \right)^{\frac{\delta}{\delta+\eta}} \left(\sum_{i \geq 1} |h_i(t)|\Q_i \exp\left(\overline{\nu}\,i\right)\right)^{\frac{\eta}{\eta+\delta}}.
    \end{split}\end{equation*}
  Therefore, according to \eqref{uniform} and \eqref{eq:JN03}, one
  obtains
  \begin{equation}
    \label{eq:global-decay}
    \|h(t)\|_X \leq
    C_1^{\frac{\delta}{\delta+\eta}}
    K_1^{\frac{\eta}{\eta+\delta}}
    \exp\left(-{\frac{\kappa_1\delta}{\delta+\eta}}
      t^{\frac{1}{3}}\right)
    \qquad \forall t \geq 0.
  \end{equation}
  Consider the number $\varepsilon >0$ given by Theorem
  \ref{thm:exp-local-BD}. Then, there exists an explicit $t_0 >0$ such that
  $$\|h(t_0)\|_X \leq \varepsilon$$
  and one deduces from \eqref{eq:exp-local-BD} starting from $t=t_0$
  that
  $$\|h(t)\|_X \leq C_\mu \|h(t_0)\|_X \exp(-\lambda_\star t)
  \qquad \forall t \geq t_0.$$
  Together with \eqref{eq:global-decay}
  for $t < t_0$, this concludes the proof.
\end{proof}
 \appendix

\section{Discrete Hardy's inequalities}
\label{sec:Hardy}
\setcounter{equation}{0}
\renewcommand{\theequation}{A.\arabic{equation}}
The following discrete version of Hardy's inequality is used in the
proof of Theorem \ref{thm:spectral-gap}:
\begin{thmA}
  Consider two sequences of positive numbers $(\mu_i)_i$ and
  $(\nu_i)_i$. Then, the following are equivalent:
  \begin{enumerate}
  \item There exists a finite constant $A \geq 0$ such that
    \begin{equation}
      \label{Hardy}
      \sum_{i=1}^\infty \mu_i\left(\sum_{j=1}^i f_j\right)^2 \leq A \sum_{i=1}^\infty \nu_i f_i^2
    \end{equation}
    for any sequence $f=(f_i)_i$.
  \item The following holds
    $$B=\sup_{k \geq 1}\left(\sum_{j=k}^\infty \mu_j\right)\,\left(\sum_{i=1}^k \dfrac{1}{\nu_i}\right) < \infty.$$
  \end{enumerate}
  If these equivalent propositions hold true, then $B \leq A \leq 4B.$
\end{thmA}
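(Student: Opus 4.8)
The plan is to prove the two implications separately, tracking the constants so as to land on $B \leq A \leq 4B$.

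\emph{The implication (1) $\Rightarrow$ (2), giving $B \leq A$, is the easy one.} Fix $k \geq 1$ and test \eqref{Hardy} against the truncated sequence $f_j := 1/\nu_j$ for $j \leq k$ and $f_j := 0$ for $j > k$. Then $\sum_j \nu_j f_j^2 = \sum_{j=1}^k 1/\nu_j =: V_k$, while for every $i \geq k$ one has $\sum_{j=1}^i f_j = V_k$; hence the left-hand side of \eqref{Hardy} is at least $V_k^2 \sum_{i \geq k}\mu_i$. Dividing by $V_k > 0$ gives $\big(\sum_{i \geq k}\mu_i\big) V_k \leq A$, and taking the supremum over $k$ yields $B \leq A$. (In particular this shows $B < +\infty$ and that each tail $\sum_{i \geq k}\mu_i$ is finite.)

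\emph{For the converse, (2) $\Rightarrow$ (1) with $A \leq 4B$,} I would first reduce to the case where $f = (f_i)_i$ is nonnegative and finitely supported: replacing $f_i$ by $|f_i|$ only enlarges the left side, and a monotone-convergence/Fatou argument then upgrades the inequality from finitely supported sequences to all $f$. Write $F_i := \sum_{j=1}^i f_j$ and $V_i := \sum_{l=1}^i 1/\nu_l$, and choose weights $w_l > 0$ determined by $\nu_l^{-1}w_l^{-2} = V_l^{1/2} - V_{l-1}^{1/2}$ (with $V_0 := 0$), so that $\sum_{l=1}^i \nu_l^{-1}w_l^{-2} = V_i^{1/2}$ telescopes. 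Cauchy--Schwarz then gives $F_i^2 = \big(\sum_{j=1}^i (f_j \nu_j^{1/2} w_j)(\nu_j^{-1/2}w_j^{-1})\big)^2 \leq V_i^{1/2}\sum_{j=1}^i f_j^2 \nu_j w_j^2$; multiplying by $\mu_i$, summing in $i$, and exchanging the order of summation reduces the claim to the pointwise estimate $w_j^2 \sum_{i \geq j}\mu_i V_i^{1/2} \leq 4B$ for every $j$.

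This last estimate is the technical heart of the argument. From $V_l - V_{l-1} = 1/\nu_l$ one gets $V_l^{1/2} - V_{l-1}^{1/2} = (1/\nu_l)(V_l^{1/2}+V_{l-1}^{1/2})^{-1} \geq 1/(2\nu_l V_l^{1/2})$, hence $w_j^2 = \big(\nu_j (V_j^{1/2}-V_{j-1}^{1/2})\big)^{-1} \leq 2 V_j^{1/2}$. For the tail, put $M_i := \sum_{l \geq i}\mu_l$ and sum by parts: $\sum_{i \geq j}\mu_i V_i^{1/2} = M_j V_j^{1/2} + \sum_{i > j} M_i (V_i^{1/2} - V_{i-1}^{1/2})$, the boundary term at infinity vanishing because $M_i V_i^{1/2} \leq B/V_i^{1/2}$. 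Bounding $M_i \leq B/V_i$ and using $(V_i^{1/2}-V_{i-1}^{1/2})/V_i \leq V_{i-1}^{-1/2} - V_i^{-1/2}$ (from $V_{i-1}^{1/2}V_i^{1/2} \leq V_i$), the telescoping sum $\sum_{i > j}(V_{i-1}^{-1/2}-V_i^{-1/2}) \leq V_j^{-1/2}$ yields $\sum_{i \geq j}\mu_i V_i^{1/2} \leq 2B V_j^{-1/2}$, so $w_j^2 \sum_{i \geq j}\mu_i V_i^{1/2} \leq 2V_j^{1/2}\cdot 2B V_j^{-1/2} = 4B$. The main obstacle will be making these weighted Cauchy--Schwarz and summation-by-parts manipulations fully rigorous, especially the telescoping estimates and the degenerate cases (when $V_\infty := \sum_l 1/\nu_l < +\infty$, or when some $\mu_i$ or $f_i$ vanish); these only cost harmless factors $\leq 1$, but must be checked carefully to reach the sharp constant $4$. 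The initial reduction to nonnegative, compactly supported $f$ is precisely what makes the interchange of summations and the vanishing boundary term legitimate.
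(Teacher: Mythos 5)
Your proof is correct and follows essentially the same Muckenhoupt-style strategy as the paper's Appendix A: the easy direction via the same test sequence $f_j = \nu_j^{-1}\mathbf{1}_{j\leq k}$, and the converse via a weighted Cauchy--Schwarz followed by telescoping, landing on the sharp constant $4B$. The only organizational difference is that you engineer the weight $w_l$ so that $\sum_{l\leq i}\nu_l^{-1}w_l^{-2}=V_i^{1/2}$ telescopes exactly (paying for it with $w_j^2\leq 2V_j^{1/2}$) and then use Abel summation on $\sum_{i\geq j}\mu_i V_i^{1/2}$, whereas the paper takes $\beta_j=V_j^{1/2}$ directly, uses $\frac{X-Y}{\sqrt X}\leq 2(\sqrt X-\sqrt Y)$ to get $\sum_{j\leq i}(\nu_j\beta_j)^{-1}\leq 2\beta_i$, and bounds the tail via $\beta_i\gamma_i\leq\sqrt B$ with $\gamma_i=\sqrt{\sum_{l\geq i}\mu_l}$; both routes are the standard Muckenhoupt bookkeeping and give $A\leq 4B$.
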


\begin{remA}
  The above discrete version of Hardy's inequality is not new
  \cite{Miclo} and we give a proof for the sake of completeness. It is
  a simple adaptation of the proof of the continuous version of
  Hardy's inequality, originally due to Muckenhoupt \cite{muck}, that
  can be found in \cite[Chapitre 6]{Ane}.
\end{remA}

\begin{proof}
  Assume that there exists some finite $A >0$ such that \eqref{Hardy}
  hold true for any $f=(f_i)_i$ and let us prove that $A \geq B.$ For
  any fixed $k \geq 1$, set
  $$f_i=\begin{cases} \dfrac{1}{\nu_i} \qquad \text{ for } i \leq k\\
    0 \qquad \text{ for } i > k.\end{cases}$$ Then, one recognizes
  easily that $\sum_{i=1}^\infty \nu_i f_i^2=\sum_{i=1}^k
  \frac{1}{\nu_i}$ while
  $$\sum_{i=1}^\infty \mu_i\left(\sum_{j=1}^i f_j\right)^2 \geq \left(\sum_{i=k}^\infty \mu_i\right)\,\left(\sum_{j=1}^k \dfrac{1}{\nu_j}\right)^2.$$
  According to \eqref{Hardy} one gets then
  $$A \sum_{i=1}^k \dfrac{1}{\nu_i} \geq \left(\sum_{i=k}^\infty \mu_i\right)\,\left(\sum_{j=1}^k \dfrac{1}{\nu_j}\right)^2$$
  or equivalently $A \geq \left(\sum_{i=k}^\infty
    \mu_i\right)\,\left(\sum_{j=1}^k \frac{1}{\nu_j}\right)$ for any
  $k \geq 1.$ This proves that $B \leq A.$

  Conversely, let us assume that $B < \infty$ and let us prove that $A
  \leq 4B.$ Set
$$\Upsilon_i=\sum_{j=1}^i \dfrac{1}{\nu_j}\,, \qquad \Gamma_i=\sum_{j=i}^\infty \mu_j, \qquad \gamma_i=\sqrt{\Gamma_i} \qquad \text{ and }\qquad \beta_i=\sqrt{\Upsilon_i}.$$
Let $f=(f_i)_i$ be a given sequence. Thanks to Cauchy-Schwarz inequality
$$\left(\sum_{j=1}^i f_j\right)^2 \leq \left(\sum_{j=1}^i f_j^2\, \nu_j \beta_j\right)\,\left(\sum_{j=1}^i \dfrac{1}{\nu_j \beta_j}\right)$$
Now, since $\frac{1}{\nu_j}=\Upsilon_j-\Upsilon_{j-1}$, one gets
$$\dfrac{1}{\nu_j \beta_j}=\dfrac{1}{\nu_j \sqrt{\Upsilon_j}}=\dfrac{\Upsilon_j-\Upsilon_{j-1}}{\sqrt{\Upsilon_j}} \qquad \forall j \geq 1.$$
Using the general inequality
\begin{equation}\label{sqrtXY}\dfrac{X-Y}{\sqrt{X}} \leq 2\left(\sqrt{X}-\sqrt{Y}\right) \qquad \forall X \geq Y > 0\end{equation}
we get that
$$\left(\sum_{j=1}^i \dfrac{1}{\nu_j \beta_j}\right) \leq 2\sum_{j=1}^i \left(\sqrt{\Upsilon_j}-\sqrt{\Upsilon_{j-1}}\right)=2\sqrt{\Upsilon_i}=2\beta_i$$
Therefore, the left-hand-side of \eqref{Hardy} can be estimated as follows:
$$\sum_{i=1}^\infty \mu_i\left(\sum_{j=1}^i f_j\right)^2 \leq 2\sum_{i=1}^\infty \mu_i\,\beta_i \left(\sum_{j=1}^i f_j^2\, \nu_j\,\beta_j\right)=2\sum_{j=1}^\infty f_j^2 \nu_j\,\beta_j \left(\sum_{i=j}^\infty \mu_i\,\beta_i\right).$$
Then, one sees that to prove our claim, it suffices to prove that
\begin{equation}\label{2B}\beta_j\,\left(\sum_{i=j}^\infty \mu_i\,\beta_i\right) \leq 2 B \qquad \forall j \geq 1.\end{equation}
One notices that, by definition of $B$, $\beta_i  \gamma_i\leq \sqrt{B}$ for any $i \geq 1.$ Therefore,
$$\sum_{i=j}^\infty \mu_i\,\beta_i \leq \sqrt{B} \sum_{i=j}^\infty \dfrac{\mu_i}{\gamma_i}=\sqrt{B}\sum_{i=j}^\infty \dfrac{\mu_i}{\sqrt{\Gamma_i}}.$$
Now, since $\Gamma_i-\Gamma_{i+1}=\mu_i$ one has, from \eqref{sqrtXY}:
$$\dfrac{\mu_i}{\sqrt{\Gamma_i}}=\dfrac{\Gamma_i-\Gamma_{i+1}}{\sqrt{\Gamma_i}} \leq 2 \left(\sqrt{\Gamma_i}-\sqrt{\Gamma_{i+1}}\right)$$
and the above inequality reads
$$\sum_{i=j}^\infty \mu_i\,\beta_i \leq 2\sqrt{B} \sum_{i=j}^\infty \left(\sqrt{\Gamma_i}-\sqrt{\Gamma_{i+1}}\right)=2\sqrt{B}\sqrt{\Gamma_j}=2\sqrt{B}\gamma_j \qquad \forall j \geq 1.$$
Finally, by definition of $B$, we get that
$$\beta_j \,\left(\sum_{i=j}^\infty \mu_i\,\beta_i\right)  \leq 2\sqrt{B} \beta_j \gamma_j \leq 2B \qquad \forall j \geq 1$$
which gives \eqref{2B} and achieves the proof.
\end{proof}

\section{Some estimates of special functions}
\setcounter{equation}{0}
\renewcommand{\theequation}{B.\arabic{equation}}
We collect here some technical estimates for special functions which turn useful for estimating the spectral gap size near the critical density (see Section \ref{sub:critical}).

\begin{lemB}
  \label{lem:incomplete-gamma-estimate}
  Take $0 < \mu \leq 1$. For some quantity $C_\mu$ depending only on
  $\mu$,
  \begin{equation}
    \label{eq:incomplete-gamma-estimate}
    \int_x^\infty \exp(-y^\mu) \dy
    \leq
    C_\mu \exp(-x^\mu) x^{1-\mu}
    \fa x \geq 1.
  \end{equation}
\end{lemB}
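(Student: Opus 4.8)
The plan is to reduce the integral to (a constant times) the upper incomplete Gamma function and then estimate the latter by an elementary change of variables.

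First I would substitute $t = y^\mu$, that is $y = t^{1/\mu}$ and $\dy = \tfrac1\mu\, t^{\frac1\mu-1}\dt$, which rewrites the left-hand side of \eqref{eq:incomplete-gamma-estimate} as
$$
\int_x^\infty \exp(-y^\mu)\dy
= \frac1\mu \int_{x^\mu}^\infty e^{-t}\, t^{\frac1\mu-1}\dt .
$$
Setting $a := x^\mu \geq 1$ and $\beta := \tfrac1\mu - 1 \geq 0$, it is therefore enough to prove
$$
\int_a^\infty e^{-t} t^\beta \dt \leq C_\beta\, e^{-a} a^\beta
\qquad \text{for all } a \geq 1,
$$
with $C_\beta$ depending only on $\beta$ (hence only on $\mu$); substituting back then yields \eqref{eq:incomplete-gamma-estimate} with $C_\mu = \tfrac1\mu\, C_{1/\mu - 1}$, since $(x^\mu)^{\beta} = x^{1-\mu}$.

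To establish this last inequality I would shift the variable by $t = a + s$, giving
$$
\int_a^\infty e^{-t} t^\beta \dt
= e^{-a} \int_0^\infty e^{-s} (a+s)^\beta \ds .
$$
Since $a \geq 1$, we have $(a+s)^\beta = a^\beta (1 + s/a)^\beta \leq a^\beta (1+s)^\beta$ (using $\beta \geq 0$ and monotonicity of $u\mapsto u^\beta$), so the right-hand side is at most $e^{-a} a^\beta \int_0^\infty e^{-s}(1+s)^\beta \ds$. The remaining integral is a finite constant depending only on $\beta$, which gives the bound, with the explicit value $C_\mu = \tfrac1\mu \int_0^\infty e^{-s}(1+s)^{1/\mu - 1}\ds$; in particular $C_1 = 1$, consistent with the exact identity $\int_x^\infty e^{-y}\dy = e^{-x}$.

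I do not expect a genuine obstacle. The only point requiring a little care is that the constant must be independent of $x$, which is precisely where the hypothesis $x \geq 1$ (equivalently $a \geq 1$) is used, to bound the factor $(1+s/a)^\beta$ by the $x$-independent quantity $(1+s)^\beta$.
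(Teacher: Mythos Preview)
Your proof is correct and is essentially the same as the paper's: the paper performs the single change of variables $u = y^\mu - x^\mu$, which is exactly the composition of your two substitutions $t = y^\mu$ followed by $s = t - a$, and then bounds $(1 + u/x^\mu)^{(1-\mu)/\mu} \leq (1+u)^{(1-\mu)/\mu}$ for $x \geq 1$ just as you do. Both arguments arrive at the identical constant $C_\mu = \tfrac{1}{\mu}\int_0^\infty e^{-u}(1+u)^{(1-\mu)/\mu}\,\mathrm{d}u$.
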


\begin{proof}
  Through the change of variables $u = y^\mu - x^\mu$ we obtain
  \begin{equation*}
    \int_x^\infty \exp(-y^\mu) \dy
    =
    \frac{1}{\mu} \exp(-x^\mu) x^{1-\mu}
    \int_0^\infty \exp(-u) \left(
      \frac{u}{x^\mu} + 1
    \right)^{\frac{1-\mu}{\mu}}
    \du
  \end{equation*}
  so the result holds with $C_\mu = \frac{1}{\mu} \int_0^\infty
  \exp(-u) \left( u + 1 \right)^{\frac{1-\mu}{\mu}}\du$.
\end{proof}

\begin{lemB}
  \label{lem:incomplete-gamma-estimate-2}
  Take $0 < \mu \leq 1$ and $0 \leq \alpha < 1$. For some quantity
  $C_{\mu,\alpha}$ depending only on $\mu$ and $\alpha$,
  \begin{equation}
    \label{eq:incomplete-gamma-estimate-2}
    \int_0^x \exp(y^\mu) y^{-\alpha} \dy
    \leq
    C_{\mu,\alpha} \, \exp(x^\mu) x^{1-\mu-\alpha}
    \fa x \geq 1.
  \end{equation}
\end{lemB}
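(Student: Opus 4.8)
The plan is to reduce \eqref{eq:incomplete-gamma-estimate-2} to a one-dimensional Laplace-type bound by the same change of variables used to prove Lemma B. \ref{lem:incomplete-gamma-estimate}. First I would substitute $u=x^\mu-y^\mu$, so that $y=(x^\mu-u)^{1/\mu}$ and $\dy=\tfrac1\mu(x^\mu-u)^{(1-\mu)/\mu}\du$; this is legitimate since $y\mapsto\exp(y^\mu)y^{-\alpha}$ is integrable on $(0,x]$, because $\alpha<1$. After the substitution,
\begin{equation*}
  \int_0^x \exp(y^\mu)\,y^{-\alpha}\dy
  = \frac{1}{\mu}\exp(x^\mu)\int_0^{x^\mu} e^{-u}\,(x^\mu-u)^{\beta}\du,
  \qquad \beta:=\frac{1-\mu-\alpha}{\mu},
\end{equation*}
so it suffices to prove $\int_0^{x^\mu} e^{-u}(x^\mu-u)^{\beta}\du\le C_{\mu,\alpha}\,x^{1-\mu-\alpha}$ for all $x\ge1$. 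Observe that $\mu(\beta+1)=1-\alpha>0$, hence $\beta>-1$, while $\beta$ itself may be positive, zero, or negative.

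The second step is to split the integral at $u=x^\mu/2$. On $[0,x^\mu/2]$ one has $x^\mu-u\in[x^\mu/2,x^\mu]$, so $(x^\mu-u)^{\beta}\le\max(1,2^{-\beta})\,x^{\mu\beta}=\max(1,2^{-\beta})\,x^{1-\mu-\alpha}$ regardless of the sign of $\beta$, and therefore
\begin{equation*}
  \int_0^{x^\mu/2} e^{-u}(x^\mu-u)^{\beta}\du
  \le \max(1,2^{-\beta})\,x^{1-\mu-\alpha}\int_0^{\infty}e^{-u}\du
  = \max(1,2^{-\beta})\,x^{1-\mu-\alpha}.
\end{equation*}
On $[x^\mu/2,x^\mu]$ I would bound $e^{-u}\le e^{-x^\mu/2}$ and substitute $v=x^\mu-u$, getting $\int_{x^\mu/2}^{x^\mu}(x^\mu-u)^{\beta}\du=\int_0^{x^\mu/2}v^{\beta}\,\mathrm{d}v=\tfrac{(x^\mu/2)^{\beta+1}}{\beta+1}$, which is finite exactly because $\beta+1>0$; since $\mu(\beta+1)=1-\alpha$ this gives
\begin{equation*}
  \int_{x^\mu/2}^{x^\mu} e^{-u}(x^\mu-u)^{\beta}\du
  \le \frac{2^{-(\beta+1)}}{\beta+1}\,e^{-x^\mu/2}\,x^{1-\alpha}
  = \frac{2^{-(\beta+1)}}{\beta+1}\,(e^{-x^\mu/2}x^\mu)\,x^{1-\mu-\alpha}.
\end{equation*}
Since $e^{-x^\mu/2}x^\mu$ tends to $0$ as $x\to+\infty$ it is bounded on $[1,\infty)$; denote its supremum by $c_\mu$. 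Adding the two contributions and multiplying by $\tfrac1\mu\exp(x^\mu)$ gives \eqref{eq:incomplete-gamma-estimate-2} with, say, $C_{\mu,\alpha}=\tfrac1\mu(\max(1,2^{-\beta})+\tfrac{2^{-(\beta+1)}c_\mu}{\beta+1})$, which depends only on $\mu$ and $\alpha$. Taking $\mu=1$ recovers exactly the estimate invoked in Step 3 of the proof of Lemma \ref{lem:B-bounds-prev}.

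The only point that needs care — and the place where the hypothesis $\alpha<1$ is genuinely used — is the region near $u=x^\mu$, i.e.\ near $y=x$, when $\beta<0$: there $(x^\mu-u)^{\beta}$ has an integrable but unbounded singularity. Confining that region to $[x^\mu/2,x^\mu]$ via the split lets the Gaussian-type factor $e^{-x^\mu/2}$ absorb the resulting polynomial loss, while the integrability of $v^{\beta}$ near $0$ — which is precisely $\alpha<1$ rewritten as $\beta>-1$ — keeps the bound finite. Everything else is a routine Laplace/geometric estimate, entirely parallel to the proof of Lemma B. \ref{lem:incomplete-gamma-estimate}.
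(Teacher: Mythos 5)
Your proof is correct and is essentially the paper's argument: the same substitution $u=x^\mu-y^\mu$, the same split at a midpoint, and the same Laplace-type estimate in which the factor $e^{-x^\mu/2}$ absorbs the remaining polynomial loss. The only difference is the order of operations: you substitute first and then split at $u=x^\mu/2$, so the hypothesis $\alpha<1$ surfaces as integrability of $(x^\mu-u)^\beta$ near $u=x^\mu$ (that is, $\beta>-1$), whereas the paper splits the $y$-interval at $x/2$ first and only substitutes on the outer piece, so $\alpha<1$ surfaces as the convergence of $\int_0^{x/2}y^{-\alpha}\,\mathrm{d}y$ — a cosmetic reordering rather than a different route.
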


\begin{proof}
  We first break the integral in the intervals $(0,x/2]$ and
  $(x/2,x)$: for the first part,
  \begin{equation}
    \label{eq:ig1}
    \int_0^{\frac{x}{2}} \exp(y^\mu) y^{-\alpha} \dy
    \leq
    \exp((x/2)^\mu) \int_0^{\frac{x}{2}} y^{-\alpha} \dy
    =
    \frac{1}{1-\alpha}
 \exp((x/2)^\mu) \left(\frac{x}{2}\right)^{1-\alpha}
    \leq
    A_{\mu,\alpha} \exp(x^\mu) x^{1-\mu-\alpha}
  \end{equation}
  with
  \begin{equation*}
    A_{\mu,\alpha}
    :=  \frac{2^{\alpha-1}}{1-\alpha}
    \max_{x \geq 1} \left\{
      \exp({(x/2)^\mu - x^\mu}) x^{\mu}
    \right\}.
  \end{equation*}
  For the second part, through the change of variables $u = x^\mu -
  y^\mu$ we obtain
  \begin{multline}
    \label{eq:ig2}
    \int_{\frac{x}{2}}^x \exp({y^\mu}) y^{-\alpha} \dy
    =
    \frac{1}{\mu} \exp({-x^\mu}) x^{1-\mu-\alpha}
    \int_0^{\frac{x^\mu}{2}} \exp({-u}) \left(
      1 - \frac{u}{x^\mu}
    \right)^{\frac{1-\mu-\alpha}{\mu}}
    \du
    \\
    \leq
    B_{\mu,\alpha}
    \exp({-x^\mu}) x^{1-\mu-\alpha}
    \int_0^{\frac{x^\mu}{2}} \exp({-u})
    \du
    \leq
    B_{\mu,\alpha}
\exp({-x^\mu}) x^{1-\mu-\alpha}
  \end{multline}
  with
  \begin{equation*}
    B_{\mu,\alpha} :=
    \frac{1}{\mu}
    \max\left\{1, 2^\frac{\mu}{1-\mu-\alpha} \right\}
  \end{equation*}
  since $\frac{1}{2} \leq 1 - \frac{u}{x^\mu} \leq 1$ on the region of
  integration. From \eqref{eq:ig1} and \eqref{eq:ig2} we conclude
  with $C_{\mu,\alpha} := \max\{A_{\mu,\alpha}, B_{\mu,\alpha} \}$.
\end{proof}

We end this Appendix with an asymptotic result for special expansion series:
\begin{lemB}
  \label{lem:exp-sum-asymptotics}
  Take $\mu > 0$ and a sequence $(g_k)_k$ such that, for some $C_0 > 0$,
  \begin{equation}
    \label{eq:hk-increment-condition}
    g_{k+1}-g_{k} \sim C_0 k^{\mu - 1}
    \quad \text{ as } k \to +\infty.
  \end{equation}
  Then
  \begin{equation}
    \label{eq:leftsum-asymptotics}
    \sum_{j=1}^k \exp({g_j}) j^\nu
    \sim
    \frac{1}{C_0}
\exp({g_k}) k^{\nu + 1 - \mu}
    \quad \text{ as } k \to +\infty
  \end{equation}
  and
  \begin{equation}
    \label{eq:rightsum-asymptotics}
    \sum_{j=k}^\infty \exp({-g_j}) j^\nu
    \sim
    \frac{1}{C_0}
\exp({-g_k}) k^{\nu + 1 - \mu}
    \quad \text{ as } k \to +\infty.
  \end{equation}
\end{lemB}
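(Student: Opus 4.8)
The plan is to establish both asymptotics by the Stolz--Ces\`aro theorem, exactly the device already used earlier in the paper (e.g.\ for \eqref{eq:gk-asymptotics} and in the proof of Corollary \ref{cor:spectral-gap-finite}). First I would record two preliminary facts. Since $g_{k+1}-g_k\sim C_0 k^{\mu-1}$ with $C_0>0$, the increments are eventually positive and $\sum_k k^{\mu-1}=+\infty$, hence $g_k\to+\infty$; applying Stolz--Ces\`aro to $g_k/k^{\mu}$ gives $g_k\sim (C_0/\mu)k^{\mu}$, so $e^{-g_j}j^\nu$ decays super-polynomially and the series in \eqref{eq:rightsum-asymptotics} converges. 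Throughout I write $d_k:=g_{k+1}-g_k$, so $d_k\sim C_0 k^{\mu-1}$, and I will use the elementary expansions $(1+1/k)^{p}=1+p/k+O(k^{-2})$ and $e^{-d_k}=1-d_k+O(d_k^{2})$, together with the fact that (for the relevant range $0<\mu<1$) the term $d_k\sim C_0k^{\mu-1}$ dominates both $k^{-1}$ and $d_k^{2}\sim C_0^2k^{2\mu-2}$.

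For \eqref{eq:leftsum-asymptotics} set $S_k:=\sum_{j=1}^k e^{g_j}j^\nu$ and $T_k:=e^{g_k}k^{\nu+1-\mu}$. Since $g_k\to+\infty$, the ratio $T_{k+1}/T_k=e^{d_k}(1+1/k)^{\nu+1-\mu}$ exceeds $1$ for large $k$, so $T_k$ is eventually strictly increasing with $T_k\to+\infty$, and Stolz--Ces\`aro yields
\[
\lim_{k\to\infty}\frac{S_k}{T_k}=\lim_{k\to\infty}\frac{S_{k+1}-S_k}{T_{k+1}-T_k}
=\lim_{k\to\infty}\frac{(1+1/k)^{\nu}}{k^{1-\mu}\bigl[(1+1/k)^{\nu+1-\mu}-e^{-d_k}\bigr]},
\]
after cancelling the common factor $e^{g_{k+1}}k^{\nu+1-\mu}$. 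By the expansions above, $(1+1/k)^{\nu+1-\mu}-e^{-d_k}=d_k+(\nu+1-\mu)/k+O(k^{-2})+O(d_k^{2})\sim C_0k^{\mu-1}$, so the bracket times $k^{1-\mu}$ tends to $C_0$ while the numerator tends to $1$; the limit is $1/C_0$, which is \eqref{eq:leftsum-asymptotics}.

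For \eqref{eq:rightsum-asymptotics} set $U_k:=\sum_{j=k}^\infty e^{-g_j}j^\nu$ and $V_k:=e^{-g_k}k^{\nu+1-\mu}$. Both tend to $0$, and $V_{k+1}/V_k=e^{-d_k}(1+1/k)^{\nu+1-\mu}<1$ for large $k$, so $V_k$ is eventually strictly decreasing; the $0/0$ form of Stolz--Ces\`aro applies, and since $U_{k+1}-U_k=-e^{-g_k}k^\nu$,
\[
\lim_{k\to\infty}\frac{U_k}{V_k}=\lim_{k\to\infty}\frac{U_{k+1}-U_k}{V_{k+1}-V_k}
=\lim_{k\to\infty}\frac{-1}{k^{1-\mu}\bigl[e^{-d_k}(1+1/k)^{\nu+1-\mu}-1\bigr]}.
\]
Here $e^{-d_k}(1+1/k)^{\nu+1-\mu}-1=-d_k+(\nu+1-\mu)/k+O(k^{-2})+O(d_k^{2})\sim -C_0k^{\mu-1}$, so again the limit is $1/C_0$.

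I do not expect a genuine obstacle: the computation is a routine Taylor bookkeeping, and the only point needing a little care is checking the hypotheses of Stolz--Ces\`aro for $(T_k)$ and $(V_k)$ when $\nu+1-\mu<0$. In that case the polynomial factor $k^{\nu+1-\mu}$ is itself decreasing, so monotonicity of $T_k$ (resp.\ of $V_k$) is not automatic and must be obtained from $e^{d_k}(1+1/k)^{\nu+1-\mu}>1$ (resp.\ $<1$), which holds for large $k$ precisely because $d_k\sim C_0k^{\mu-1}$ dominates $1/k$.
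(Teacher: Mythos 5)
Your proof is correct and follows essentially the same route as the paper's: both apply the Stolz--Ces\`aro theorem to the quotient of partial-sum and candidate sequences and then evaluate the limit of increments by elementary expansions, the only cosmetic difference being that the paper estimates $U_k/V_k \to C_0$ while you estimate $S_k/T_k \to 1/C_0$. One useful observation you make that the paper leaves implicit: the Taylor step requires $d_k = g_{k+1}-g_k \to 0$, hence $\mu < 1$; the lemma as stated (``$\mu>0$'') is slightly too generous, although it is only ever invoked with $0<\mu<1$.
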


\begin{remB}
  \label{rem:exp-sum}
  One easily deduces that, in the conditions of the above lemma, there
  is a constant $C$ which depends only on the sequence $(g_k)_k$ and
  $\nu$ such that
  \begin{gather*}
    \sum_{j=1}^k \exp({g_j}) j^\nu \leq C \exp({g_k}) k^{\nu + 1 - \mu} \quad \text{ and } \quad
    \sum_{j=k}^\infty \exp({-g_j}) j^\nu
    \leq C \exp({-g_k}) k^{\nu + 1 - \mu}
  \end{gather*}
  for all $k \geq 1$.
\end{remB}

\begin{proof}
  We prove \eqref{eq:leftsum-asymptotics} by using the Stolz-Ces\`aro
  theorem (which is a discrete analogue of l'H\^opital's
  rule). Namely, setting
  $$U_k=\exp({g_k}) k^{\nu + 1 - \mu} \qquad \text{ and }  \qquad V_k=\sum_{j=1}^k \exp({g_j}) j^\nu \qquad (k \geq 1),$$
  in order to show that
  \begin{equation*}
\lim_{k \to \infty} \dfrac{U_k}{V_k}=C_0,
  \end{equation*}
  we instead prove that
\begin{equation}\label{cesaro}\lim_{k \to \infty} \dfrac{U_{k+1}-U_k}{V_{k+1}-V_k}=C_0.\end{equation}
One has  \begin{multline*}
 \dfrac{U_{k+1}-U_k}{V_{k+1}-V_k}
    =
    \frac{(k+1)^{\nu + 1 - \mu} - \exp({g_{k}-g_{k+1}}) k^{\nu + 1 - \mu}}
    {(k+1)^\nu}
    \\
    =
    \frac{\big(
        (k+1)^{\nu + 1 - \mu} - k^{\nu + 1 - \mu}
      \big)
      +  k^{\nu + 1 - \mu}\left(1 - \exp({g_{k}-g_{k+1}})\right)
    }
    {(k+1)^\nu}.
  \end{multline*}
Since
  \begin{equation*}
    \frac{ \big|
      (k+1)^{\nu + 1 - \mu} - k^{\nu + 1 - \mu}
      \big|
    }
    {(k+1)^\nu}
    \leq
    \frac{|\nu + 1 - \mu| \min\{(k+1)^{\nu - \mu}, k^{\nu - \mu}\}}
    {(k+1)^\nu}
    \sim
    |\nu + 1 - \mu|\, k^{-\mu}
    \to 0
  \end{equation*}
  as $k \to +\infty$ while
  \begin{equation*}
  \lim_{k \to \infty}  \frac{k^{\nu + 1 - \mu}\left(1 - \exp({g_{k}-g_{k+1}})\right)}
    {(k+1)^\nu}
    =
\lim_{k \to \infty}    \frac{ k^{\nu + 1 - \mu} ({g_{k+1}-g_{k}})}
    {(k+1)^\nu} \frac{1 - \exp({g_{k}-g_{k+1}})}{g_{k+1}-g_{k}}
    =
    C_0,
  \end{equation*}
  using $\left(1 - \exp({g_{k}-g_{k+1}})\right) / (g_{k+1}-g_{k}) \to
  1$ together with \eqref{eq:hk-increment-condition}, we obtain
  \eqref{cesaro}. We apply the Stolz-Ces\`aro theorem and obtain
  \eqref{eq:leftsum-asymptotics} (notice that, since
  \eqref{eq:hk-increment-condition} implies that $g_k \sim (C_0 / \mu)
  k^\mu$, the sum $\sum_{j=1}^k \exp({g_j}) j^\nu$ diverges and hence
  the hypotheses of the theorem are satisfied).  Equation
  \eqref{eq:rightsum-asymptotics} can be proved by a very similar
  calculation.
\end{proof}

\bibliographystyle{plain}


\end{document}